\def\squareforqed{\hbox{\rlap{$\sqcap$}$\sqcup$}}
\def\qed{\ifmmode\squareforqed\else{\unskip\nobreak\hfil
    \penalty50\hskip1em\null\nobreak\hfil\squareforqed
    \parfillskip=0pt\finalhyphendemerits=0\endgraf}\fi}
\newtheorem{theorem}{Theorem}
\newtheorem{lemma}[theorem]{Lemma}
\newtheorem{fact}[theorem]{Fact}
\newenvironment{proof}{\begin{trivlist}\item[]{\flushleft\bf Proof }}
  {\qed\end{trivlist}}
\newcommand{\thmsp}{\vspace{2mm}}
\newcommand{\tspan}{\textup{span}}
\newcommand{\subz}[1]{\normalsize( #1 \normalsize)_z}
\newcommand{\op}[1]{\mathsf{#1}}
\newcommand{\opU}{\op{U}} 
\newcommand{\opA}{\op{A}} 
\newcommand{\opB}{\op{B}} 
\newcommand{\opF}{\op{F}} 
\newcommand{\opGt}{\op{\tilde{G}}} 
\newcommand{\opW}{\op{W}} 
\newcommand{\opid}{\op{I}}
\newcommand{\proj}{\Pi}
\newcommand{\proje}{\proj_{\op{e}}}
\newcommand{\projo}{\proj_{\op{o}}}
\newcommand{\hilb}{\mathcal{H}}
\newcommand{\nth}[1]{\ensuremath{{#1}^{\textup{th}}}}
\renewcommand\bra[1]{{\langle{#1}|}}
\renewcommand\ket[1]{{|{#1}\rangle}}
\newcommand{\ketbra}[2]{\ket{#1}\!\bra{#2}}
\newcommand{\inner}[2]{\langle{#1}|{#2}\rangle}
\newcommand{\kouter}[1]{|{#1}\rangle\!\langle{#1}|}
\newcommand{\marked}{g}
\newcommand{\markedt}{{\tilde{g}}}
\newcommand{\kmarked}{\ket{\marked}}
\newcommand{\kmarkedt}{\ket{\markedt}}
\newcommand{\selfloop}{\circlearrowleft}
\newcommand{\sel}{s}
\newcommand{\kselfloop}{\ket{\selfloop}}
\newcommand{\plus}{+}
\newcommand{\minus}{-}
\newcommand{\kplus}{\ket{\plus}}
\newcommand{\kminus}{\ket{\minus}}
\newcommand{\kminusp}{\kminus^{\perp}}
\newcommand{\eg}{{a_{00}}}
\newcommand{\keg}{\ket{\eg}}
\newcommand{\feval}{\lambda}
\newcommand{\rotf}{f}
\newcommand{\tfp}{{\rotf_+}}
\newcommand{\tfm}{{\rotf_-}}
\newcommand{\kfp}{\ket{\tfp}}
\newcommand{\kfpun}{\ket{\tfp^{un}}}
\newcommand{\kfm}{\ket{\tfm}}
\newcommand{\fangle}{\eta}
\newcommand{\opFo}{\opF_1}
\newcommand{\opFt}{\opF_2}
\newcommand{\fo}{{f_1}}
\newcommand{\ft}{{f_2}}
\newcommand{\kfo}{\ket{\fo}}
\newcommand{\kft}{\ket{\ft}}
\newcommand{\ustat}{{\opU_0}}
\newcommand{\kustat}{\ket{\ustat}}
\newcommand{\tpsi}{\psi}
\newcommand{\kpsi}{\ket{\tpsi}}
\newcommand{\kpi}{\ket{\pi}}
\newcommand{\piz}{{\pi_z}}
\newcommand{\kpiz}{\ket{\piz}}
\newcommand{\tevec}{\zeta}
\newcommand{\tevecb}{\overline{\tevec}}
\newcommand{\evec}{\ket{\tevec}}
\newcommand{\evecb}{\ket{\tevecb}}
\newcommand{\aco}{a}
\newcommand{\gamm}{{\gamma_-}}
\newcommand{\gams}{{\gamma_*}}
\newcommand{\phione}{{\varphi_1}}
\newcommand{\ncols}{{n_c}}
\newcommand{\nrows}{{n_r}}
\newcommand{\nsize}{N}
\newcommand{\kl}{{kl}}
\newcommand{\minusj}{{m_\kl}}
\newcommand{\tminusj}{{-_\kl}}
\newcommand{\kminusj}{\ket{\tminusj}}
\newcommand{\tpsij}{{\psi_\kl}}
\newcommand{\kpsij}{\ket{\tpsij}}
\newcommand{\repsi}{\Re(\kpsi)}
\newcommand{\impsi}{\Im(\kpsi)}
\newcommand{\rea}{\Re(\aco)}
\newcommand{\kp}{k'}
\newcommand{\klp}{{kl'}}
\newcommand{\kpl}{{k'l}}
\newcommand{\kplp}{{k'l'}}
\newcommand{\tk}{\tilde{k}}
\newcommand{\tl}{\tilde{l}}
\newcommand{\tkp}{\tilde{k}'}
\newcommand{\signk}{\eps_k}
\newcommand{\signl}{\eps_l}
\newcommand{\projwfo}{\proj_{(+1)}}
\newcommand{\ts}{s}
\newcommand{\ks}{\ket{\ts}}
\newcommand{\refs}{\op{S}}
\newcommand{\opT}{\op{T}}
\newcommand{\te}{{e_\alpha}}
\newcommand{\ke}{\ket{\te}}
\newcommand{\keperp}{\ket{\te^\perp}}
\newcommand{\kv}{\ket{v}}
\newcommand{\kvone}{\ket{{v_1}}}
\newcommand{\kvtwo}{\ket{{v_2}}}
\newcommand{\kvthree}{\ket{{v_3}}}
\newcommand{\kvim}{\ket{{v'}}}
\newcommand{\kminusjb}{\ket{\overline{\tminusj}}}
\newcommand{\cz}{\mathsf{cz}}
\newcommand{\eps}{\epsilon}
\newcommand{\sign}{\textup{sign}}
\newcommand{\smallspace}{\mskip 2mu plus 1mu minus 1mu\relax}
\newcommand{\tinyspace}{\mskip 1mu plus 0.3mu minus 0.5mu\relax}
\newcommand{\myeqref}[1]{equation~\eqref{#1}}
\title{Spatial Search via Memoryless Walk with Selfloop}
\author{Peter H{\o}yer}
\author{Janet Leahy} 
\affil{Department of Computer Science, University of Calgary, Canada}
\date{April 18, 2022}
\begin{document}
\maketitle


\begin{abstract}

The defining feature of memoryless quantum walks is that they operate on the vertex space of a graph, and therefore can be used to produce search algorithms with minimal memory. We present a memoryless walk that can find a unique marked vertex on a two-dimensional grid. Our walk is based on the construction proposed by Falk, which tessellates the grid with squares of size $2 \times 2$. Our walk uses minimal memory, $O(\! \sqrt{\nsize \log \nsize})$ applications of the walk operator, and outputs the marked vertex with vanishing error probability. To accomplish this, we apply a selfloop to the marked vertex---a technique we adapt from interpolated walks. We prove that with our explicit choice of selfloop weight, this forces the action of the walk asymptotically into a single rotational space. We characterize this space and as a result, show that our memoryless walk produces the marked vertex with a success probability asymptotically approaching~one.
\end{abstract}

\section{Introduction}

Search problems are one of the foundational applications of quantum algorithms, and are one of the situations in which quantum algorithms are proven to provide a speedup over their classical counterparts. 
For example, Grover's search algorithm~\cite{Gro96} can find a single element in an $N$-element database with $\Theta(\sqrt{N})$ quantum queries, while any classical algorithm requires $\Omega(N)$ queries for the same task.

In spatial search, the goal is to find a ``marked'' element on a graph, with the restriction that in a single time step, amplitude can only be moved between adjacent vertices on the graph.  Such restrictions can emerge from some underlying physical structure~\cite{AA05}, or from the computational cost of moving from one vertex to another~\cite{Amb07}.  Quantum spatial search was first considered by Benioff~\cite{Ben02}, who showed that direct application of Grover's search on the grid does not yield a speedup over classical algorithms. A~near-quadratic speed-up was then discovered using a divide-and-conquer quantum algorithm~\cite{AA05} and quantum walks~\cite{AKR05,CG04,San08}.

Quantum walks are the quantum counterparts of classical random walks. Memoryless, or coinless, walks are a less commonly studied type of quantum walk, but have several advantages over other models. As the name suggests, they operate directly on the vertex space of the graph. This differentiates them from coined or Szegedy-style quantum walks, which use extra registers to encode the previous location of the walker. Memoryless walks can therefore be used to produce search algorithms with optimal memory requirements. They also have a simple structure, alternating two or more non-commuting reflections that can be derived from tessellations of the underlying graph. A proposal for implementing memoryless walks using superconducting microwave resonators is given by~\cite{MOP17}, and an implementation of memoryless walks on IBM quantum computers is in \cite{AAMP20}.

Staggered walks are a class of memoryless walks and were defined by Portugal et al.~in~\cite{PSFG16}. Staggered walks are based on graph tessellations into cliques, which enforces the spatial search constraint and gives a method for constructing memoryless walks on general graphs. Ref.~\cite{PSFG16} shows that any quantum walk on a graph $G$ in the standard Szegedy model~\cite{Sze04} can be converted to a staggered walk on the underlying line graph of~$G$. The conversion preserves the asymptotic cost, the success probability, and the space requirement. Staggered walks have also been used to derive relationships between memoryless, coined, Szegedy, and continuous quantum walks~\cite{PBF15, Por16, Por16b, POM17, CP18, KPSS18}.

A memoryless walk on the line was given by Patel, Raghunathan, and Rungta in~\cite{PRR05}, who give and analyse a walk with Hamiltonians, and they note that their walk operator resembles the staggered fermion formalism. In~\cite{PRR10}, Patel, Raghunathan and Rahaman present numerical simulations on the extension of this walk to the $\nsize$-vertex grid. Their results show that it finds a unique marked vertex in $\Theta(\sqrt{\nsize \log \nsize})$ applications of the walk operator with success probability $\Theta(\frac{1}{\log \nsize})$, and that by using an ancilla qubit, the success probability can be improved to $\Theta(1)$.

Falk~\cite{Fal13} gives a memoryless walk on the two-dimensional grid, constructing a discrete walk operator by reflecting about two alternating tessellations of the grid. The walk by Falk was analysed by Ambainis, Portugal and Nahimovs~\cite{APN15}, who proved that it finds a unique marked vertex on the grid using $\Theta(\sqrt{\nsize \log \nsize})$ applications of the walk operator with success probability~$\Theta(\frac{1}{\log \nsize})$. Portugal and Fernandes~\cite{PF17} give a memoryless walk with Hamiltonians that also finds a unique marked vertex on the two-dimensional grid with $\Theta(\sqrt{\nsize \log \nsize})$ applications of the walk operator and success probability $\Theta(\frac{1}{\log \nsize})$.

The success probability of \cite{PRR05,PRR10}, \cite{Fal13} and~\cite{PF17} is sub-constant. The success probability can be improved to $\Theta(1)$ by applying amplitude amplification~\cite{BHMT02}, but that would increase the number of steps by a factor of $\Theta(\sqrt{\log \nsize})$ and introduce additional operators in an implementation. Two alternative methods~\cite{APN15,PF17} for increasing the success probability are the post-processing local neighborhood search in~\cite{ABNOR12} and Tulsi's proposal of adding an ancilla qubit to the grid~\cite{PRR10, Tul08}.

In this paper, we give a memoryless walk that finds the marked vertex in $\Theta(\sqrt{\nsize \log \nsize})$ steps with vanishing error probability. Our walk uses minimal memory and preserves the simple structure given by alternating tessellations of the grid. To do this, we augment the basic memoryless construction by showing how the interpolated walks of Krovi et al.~\cite{KMOR16} can be adapted to the vertex space of a graph. We define our memoryless walk using the tessellations shown in Figure \ref{fig:tessellation}, which divide the grid into squares of size $2 \times 2$. This is the same tessellation structure analysed by~\cite{APN15}, who proved that using Falk's construction, the maximum success probability of the corresponding memoryless walk scales with $\Theta(\frac{1}{\log \nsize})$. By using a selfloop to force the action of the walk into a single two-dimensional subspace, we modify their walk so that the maximum success probability asymptotically approaches~one.

Classically, the interpolated version of a random walk is constructed by adding weighted selfloop edges to marked vertices. Applying Szegedy's isometry~\cite{Sze04} to interpolated walks produces quantum walks that can find a unique marked vertex on any graph~\cite{KMOR16}. This approach has recently been used to solve the spatial search problem on any graph with a quadratic speedup over classical random walks, even for the case of multiple marked vertices. Solutions of this form have been obtained for both the discrete~\cite{AGJK20, AGJ21} and continuous~\cite{ACNR21arxiv} models using quantum walks that operate on the edge space of a graph. Quantum interpolated walks can be simulated by controlled quantum walks~\cite{DH17b}, preserving the asymptotic cost, the success probability, and the space requirement.

To extend the approach of~\cite{KMOR16} to the memoryless setting, we introduce a new state corresponding to a selfloop on the marked vertex. Rather than reflecting about the marked vertex, we reflect about an interpolation between the selfloop state and the marked vertex. The result is a memoryless walk parametrized by the weight of the selfloop,~$\sel$. We prove in our analysis that with our explicit choice of weight, this forces the evolution of the initial state into a single rotational subspace of our walk operator. As a result, our walk achieves a success probability $1 - O(\frac{1}{\log \nsize})$ with $\Theta(\sqrt{\nsize \log \nsize})$ steps and minimal memory.

Most of our analysis considers the eigenvector of an operator with the smallest positive eigenphase. We use the term ``slowest eigenvector'' to refer to this eigenvector, and ``slowest rotational subspace'' to refer to the subspace spanned by this eigenvector and its conjugate.

To prove our main result, we present a set of techniques for analysing memoryless walks. We analyse an operator $\opW$, whose spectra are completely known, composed with a two-dimensional rotation~$\opF$. As part of our proof, we determine the asymptotic behaviour of both the smallest positive eigenphase of $\opW \opF$ and its associated eigenvector. The result is a precise asymptotic description of the slowest rotational subspace of our walk operator.

The techniques we use to obtain this description are general enough that they could be applied in other contexts as well.

We begin by defining our walk in Section~\ref{sec:walk_defs}. An overview of the paper layout and proof structure is given in Section~\ref{sec:proof_strategy}.

\section{Walk construction and main result}
\label{sec:walk_defs}

We consider the task of finding a unique marked vertex on a two-dimensional grid with $\nrows$ rows and $\ncols$ columns, where both $\nrows$ and $\ncols$ are even. The grid boundaries are those of a torus, so there are edges between vertices $(i, \ncols-1)$ and $(i, 0)$ for $0 \leq i < \nrows$, and between vertices $(\nrows - 1, j)$ and $(0, j)$ for $0 \leq j < \ncols$. The total number of vertices is given by $\nsize = \nrows \times \ncols$. After Lemma \ref{lem:ustat}, we restrict to the case where $\nrows = \ncols = \sqrt{\nsize}$.

We construct a memoryless quantum walk with a selfloop. Our walk operates on a space of dimension $\nsize + 1$, which is optimal for this task. The vertex at position $(i,j)$ is represented by the quantum state $\ket{i,j}$. We introduce a new state $\kselfloop$ corresponding to a selfloop on the marked vertex. Our approach and terminology are inspired by the interpolated walks of~\cite{KMOR16}, which add selfloop edges to marked vertices in Szegedy-style walks.

Our walk applies two alternating reflections about the faces of the graph. Here, we follow the construction used in~\cite{APN15}. For $0 \leq i < \frac{\nrows}{2}$, $0 \leq j < \frac{\ncols}{2}$, define
\begin{align*}
\ket{a_{ij}} &= \frac{1}{2} \sum_{i',j' = 0}^1 \ket{2i + i', 2j + j'}, \\
\ket{b_{ij}} &= \frac{1}{2} \sum_{i',j' = 0}^1 \ket{2i + 1 + i', 2j + 1 + j'}.
\end{align*}

The sets $\{\ket{a_{ij}}\}_{i,j}$ and $\{\ket{b_{ij}}\}_{i,j}$ each specify a partition of the grid into $2 \times 2$ squares, positioned at even and odd indices, respectively. These sets form the tessellations depicted in Figure \ref{fig:tessellation}.

Define projections onto the even and odd partitions as\begin{alignat*}{2}
\proje = \sum_{i=0}^{\frac{\nrows}{2}-1} \sum_{j=0}^{\frac{\ncols}{2}-1} \ketbra{a_{ij}}{a_{ij}},
\hspace{2cm} & \projo = \sum_{i=0}^{\frac{\nrows}{2}-1} \sum_{j=0}^{\frac{\ncols}{2}-1} \ketbra{b_{ij}}{b_{ij}},
\end{alignat*}
and let \begin{align}
\opA &= 2\proje + 2 \ketbra{\selfloop}{\selfloop} - \opid, \\
\opB &= 2\projo + 2 \ketbra{\selfloop}{\selfloop} - \opid.
\end{align}

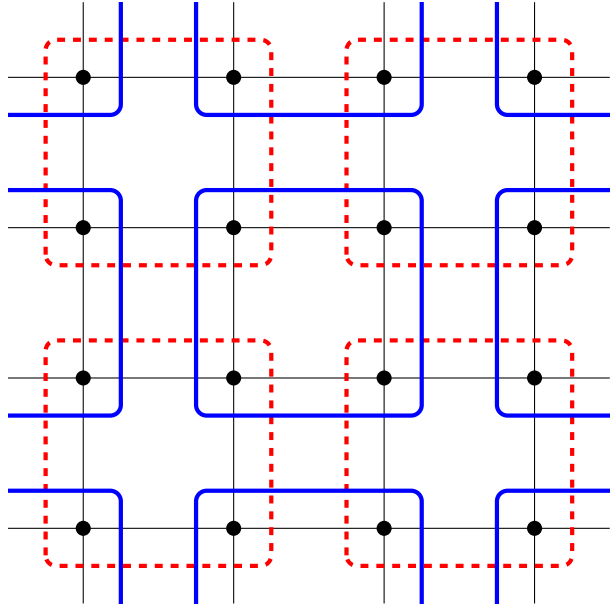
\begin{figure} 
\centering
\begin{tikzpicture}[scale=.5]
\clip (-2,-2) rectangle (14, 14);
\draw[step=4] (-2,-2) grid (14,14);

\foreach \x in {0,4,...,12} {
    \foreach \y in {0,4,...,12} {
        \fill[color=black] (\x,\y) circle (0.2);
    }
}
\foreach \ax in {-1, 7} {
	\foreach \ay in {-1, 7} {
		\draw[rounded corners, red, ultra thick, dashed] (\ax, \ay) rectangle (\ax + 6,\ay + 6);	
	}
}
\foreach \ax in {-5, 3, 11} {
	\foreach \ay in {-5, 3, 11} {
		\draw[rounded corners, blue, ultra thick] (\ax, \ay) rectangle (\ax + 6,\ay + 6);	
	}
}
\end{tikzpicture}
\caption{Staggered tessellations of the grid into squares of size $2 \times 2$. The construction based on alternating reflections about this tessellation structure was originally proposed by~\cite{Fal13}, who ran numerical simulations using squares of size $4 \times 4$.}
\label{fig:tessellation}
\end{figure}

In~\cite{APN15}, these reflections are alternated with a reflection about the marked state.
This is the standard way of adding finding behaviour to memoryless walks, where an operator composed of two or more non-commuting reflections is alternated with a reflection of the marked vertices.
In our walk, we replace the reflection of the marked vertex with a reflection of an interpolated state with parameter $0 \leq \sel \leq 1$. Letting $\kmarked$ denote the marked vertex, we define the interpolated state to be $\kmarkedt = \sqrt{\sel}\kmarked + \sqrt{1-\sel} \kselfloop$. Our input-dependent reflection is then \begin{align}
\opGt = \opid - 2\ketbra{\markedt}{\markedt}.
\end{align}
By selecting an appropriate value for $\sel$, we are able to force the action of the walk asymptotically into a single two-dimensional subspace. We set $\sel = 1 - \frac{1}{\nsize + 1}$, which is close to the value $\sel = 1 - \frac{1}{\nsize}$ used by~\cite{KMOR16} for interpolated walks.

Given a fixed value for $\sel$, we define a single step of our walk to be the operator \begin{align}
\opU &= \opB \opGt \opA \opGt.
\end{align}

We apply the walk to the initial state $\kpi$, which is defined by $\inner{i,j}{\pi} = \frac{1}{\sqrt{\nsize}}$ for all vertices $(i,j)$, and with $\inner{\selfloop}{\pi} = 0$. This allows us to present our main result.

\thmsp

\begin{restatable}[Main result]{theorem}{mainresult}
\label{thm:main_result}
Fix $\sel = 1 - \frac{1}{\nsize + 1}$ and suppose $\nrows = \ncols$. Then there exists a constant $c > 0$ such that after $c \sqrt{\nsize \log \nsize}$ applications of $\opU$ to $\kpi$, measuring the state will produce $\kselfloop$ with probability $1 - e(\nsize)$, where $e(\nsize) \in O(\frac{1}{\log \nsize})$.
\end{restatable}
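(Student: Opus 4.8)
analyze the walk operator $\opU = \opB \opGt \opA \opGt$ by decomposing it, in the relevant invariant subspace, as a product of a well-understood operator $\opW$ and a small two-dimensional rotation $\opF$, exactly as the introduction promises. First I would identify the low-dimensional invariant subspace: since $\proje$, $\projo$, $\kmarkedt$, $\kselfloop$, and $\kpi$ together span only a constant-dimensional (or $O(1)$-parameter) space once we exploit the symmetry of the torus and the $2\times 2$ tessellation, I would reduce $\opU$ to its action on this subspace. The key structural observation is that $\opA\opGt$ and $\opB\opGt$ are each products of reflections, so $\opU$ is (up to the small perturbation coming from $\opGt$ being a reflection about $\kmarkedt$ rather than about $\kmarked$) a rotation-like operator whose spectrum away from the marked-vertex perturbation is governed by the known spectral analysis of the Falk/APN walk $\opB\opA$. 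Writing $\opGt = \opG + (\text{rank-one correction of size } O(\sqrt{1-\sel}) = O(1/\sqrt{\nsize}))$, I would factor $\opU = \opW\opF$ where $\opW$ has completely known spectrum and $\opF$ is a two-dimensional rotation by a small angle $\fangle = \Theta(1/\sqrt{\nsize})$ (coming from the selfloop weight $\sqrt{1-\sel} = 1/\sqrt{\nsize+1}$).

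Next I would locate the slowest rotational subspace of $\opU$. The heart of the argument is to show that the eigenvector of $\opW\opF$ with the smallest positive eigenphase — call it $\evec$ — has, asymptotically, almost all of its mass on $\tspan\{\kpi, \kselfloop\}$, and in particular that $|\inner{\pi}{\tevec}|^2 = 1 - O(1/\log\nsize)$ and the eigenphase is $\Theta(1/\sqrt{\nsize\log\nsize})$. This is where the explicit choice $\sel = 1 - \frac{1}{\nsize+1}$ is used: it is tuned so that the rotation $\opF$ and the known rotation $\opW$ combine to send $\kpi$ (almost) into the slowest rotational subspace, rather than spreading it over many eigenspaces of $\opU$. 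Concretely I would expand $\kpi$ in the eigenbasis of $\opU$, show that the overlap with all eigenvectors other than $\evec, \evecb$ is $O(1/\sqrt{\log\nsize})$ in norm (this is the analogue of the $\Theta(1/\log\nsize)$ success probability of APN, now appearing as the \emph{error} rather than the success term, because the selfloop has absorbed the rest), and conclude that $\kpi \approx \frac{1}{\sqrt 2}(e^{i\theta}\evec + e^{-i\theta}\evecb)$ up to $O(1/\sqrt{\log\nsize})$ error.

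Then the algorithmic conclusion is a standard phase-estimation / Grover-style rotation argument: since $\kpi$ lies almost entirely in a two-dimensional subspace on which $\opU$ acts as rotation by an angle $\gams = \Theta(1/\sqrt{\nsize\log\nsize})$, after $t = c\sqrt{\nsize\log\nsize} = \Theta(1/\gams)$ steps the state is rotated to (nearly) the orthogonal vector in that subspace, which must be (nearly) $\kselfloop$ — because $\kselfloop \perp \kpi$ and $\kselfloop$ also lies almost entirely in the slowest rotational subspace (one has to check $\kselfloop$ has small overlap with the fast part too, which follows from the same eigenvector-localization estimate together with $\inner{\selfloop}{\pi}=0$ and a dimension count). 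Choosing $c$ to make the rotation land closest to $\kselfloop$, the measurement yields $\kselfloop$ with probability $1 - O(1/\log\nsize)$, absorbing both the $O(1/\sqrt{\log\nsize})^2 = O(1/\log\nsize)$ leakage outside the two-dimensional subspace and the $O(1/\log\nsize)$ discretization error from $t$ being an integer.

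\textbf{The main obstacle} I expect is the second step: pinning down the asymptotics of both the eigenphase and the eigenvector of $\opW\opF$ precisely enough. Because $\opF$ rotates by an angle comparable to the \emph{spacing} between eigenphases of $\opW$ near zero, standard perturbation theory does not directly apply — the perturbation is not small relative to the spectral gap — so one needs the bespoke analysis alluded to in the introduction (presumably solving a secular/characteristic equation for $\opW\opF$ in the reduced subspace and extracting leading-order behaviour, tracking the $\log\nsize$ factors that come from summing $1/\lambda$ over the known eigenphases $\lambda$ of the Falk walk). Controlling that sum, and showing the contribution is exactly $\Theta(\log\nsize)$ so that the error is $O(1/\log\nsize)$ rather than something larger, is the delicate quantitative core of the proof.
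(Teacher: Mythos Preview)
Your high-level plan --- reduce to a two-dimensional rotational subspace, then run a Grover-style argument --- matches the paper, but two of your concrete claims are wrong and one key idea is missing.

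First, the decomposition $\opU = \opW\opF$ you describe is not the one that works. In the paper's decomposition (after the $\cz$ basis change), $\opW = \subz{\opB\opA}$ is the \emph{bare} tessellation walk with no marked-vertex reflection at all, and $\opF = \subz{\opA\opGt\opA\opGt}$ carries the full marked-vertex dependence. In particular, $\opF$ is \emph{not} a small rotation: its angle is $4\fangle$ with $\sin^2\fangle = \tfrac{3}{4}\sel \approx \tfrac{3}{4}$, so $4\fangle \approx 4\pi/3$, independent of $\nsize$. Your picture of $\opF$ as an $O(1/\sqrt{\nsize})$ perturbation coming from $\opGt - \opG$ does not match this, and any attempt to treat the selfloop as a small correction to the APN walk runs into exactly the non-perturbative obstacle you flag at the end.

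Second, your central claim that $\kpi$ lies almost entirely in the slowest rotational subspace is false: only \emph{half} of $\kpiz$ lies there, and the other half lies in the $(+1)$-eigenspace. The missing structural fact is that with the specific choice $\sel = 1 - \tfrac{1}{\nsize+1}$, the vector $\kustat = \kpiz - \kselfloop$ is an \emph{exact} $(+1)$-eigenvector of $\opW$, $\opFo$, and $\opFt$ simultaneously (Lemma~\ref{lem:ustat}). This gives the decomposition $\kpiz = \tfrac{1}{2}\kustat + \tfrac{1}{2}(\kpiz + \kselfloop)$, and it is $\kpiz + \kselfloop$, not $\kpiz$ itself, that lies asymptotically in the slowest rotational subspace. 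After rotating that component by $\pi$ (not $\pi/2$), one obtains $\tfrac{1}{2}\kustat - \tfrac{1}{2}(\kpiz + \kselfloop) = -\kselfloop$. Your proposed check that ``$\kselfloop$ has small overlap with the fast part'' would fail for the same reason: $\kselfloop$ also has half its weight on $\kustat$.

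Finally, the paper handles the fact that $\opF$ is a two-dimensional rotation by splitting it as $\opF = \opFo\opFt$ into two one-dimensional reflections, analyzing $\opW\opFo$ first via the flip-flop/Ambainis constraints (since that is a real operator times a single reflection), and then showing that composing with $\opFt$ barely moves the slowest subspace because $\kft$ has $\Theta(1)$ overlap with the $(+1)$-eigenspace of $\opW\opFo$ and only $O(1/\log\nsize)$ overlap with its slowest rotational eigenvectors. This two-step reduction is how the ``bespoke analysis'' you allude to is actually carried out; your proposal does not anticipate it.
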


\thmsp

We remark that given the state $\kselfloop$, one can obtain the marked state $\kmarked$ through amplitude amplification~\cite{BHMT02}. This can be done by alternating the reflection $\opGt$ with a reflection about either $\kmarked$ or $\kselfloop$. After $\lfloor \frac{\pi}{2} (\arcsin(\frac{1}{\sqrt{\nsize + 1}}))^{-1} \rfloor \in \Theta(\sqrt{\nsize})$ steps of amplification, measuring the resulting state will produce $\kmarked$ with probability $1 - e(\nsize)$, where $e(\nsize) \in O(\frac{1}{\nsize})$. Note that both the error probability and the query complexity for obtaining $\kmarked$ from $\kselfloop$ are dominated by the cost of finding the state $\kselfloop$ as in Theorem~\ref{thm:main_result}.

\subsection{Proof strategy} \label{sec:proof_strategy}

Our proof of Theorem~\ref{thm:main_result} is based on the analysis of an intermediate walk operator, which we define below.

To simplify the analysis, we also introduce a change of basis. This will allow us to compute necessary properties of the walk spectra, as the eigenvectors of $\opB \opA$ factor into product states under this basis change. Let $\cz$ be an operator acting on the pair of least significant bits in a tensor product space, with the action defined by $\ket{i,j} \mapsto - \ket{i,j}$ if both $i$ and $j$ are even, and being the identity otherwise. Let $\cz$ act trivially on $\kselfloop$. For any operator $\op{X}$, let $\op{X}_{z} = \cz \circ \op{X} \circ \cz$, and let $\kpiz = \cz \kpi$.

We define \begin{align}
\opW &= \subz{\opB \opA}, \\
\opF &= \subz{\opA \opGt \opA \opGt},
\end{align}
so that
\begin{equation}
  \opU = \big(\opB \opA \big) \big(\opA \opGt \opA \opGt \big)
  = \subz{\opW \opF}. \label{eq:U_decomp}
\end{equation}

This is analogous to the decomposition of $\opU$ used by~\cite{APN15} in their analysis.
Note that $\opW$ is input-independent, while $\opF$ depends on both $\sel$ and the marked vertex.
We show in Section~\ref{sec:F_decomp} that $\opF$ is a two-dimensional rotation, and therefore can be decomposed as the product of two reflections, which we write as $\opF = \opFo \opFt$.

Our intermediate walk consists of $\opW$ composed with only $\opFo$.
This choice of intermediate walk has the advantage that it consists of a real operator whose spectra are completely known, composed with a one-dimensional reflection. This allows us to apply existing results about operators of this type, including the eigenvector analysis of~\cite{Amb07} and the flip-flop theorem from~\cite{DH17b}. An overview of these results is given in Appendix~\ref{app:flipflop}.

Our proof is based on a tight characterization of the slowest rotational subspace of the intermediate walk $\opW \opFo$. This characterization is developed in Section~\ref{sec:WFo}, where we prove asymptotic properties of the rotational angle and the spanning vectors. To our knowledge, this is the first case of the flip-flop theorem being used to derive properties of a subspace in this way.
We show that with our choice of selfloop, the action of $\opW \opFo$ on $\kpiz$ can be reduced asymptotically to a Grover-like rotation in the slowest two-dimensional rotational subspace of $\opW \opFo$. This key property is what allows our main algorithm to achieve a success probability asymptotically close to 1.

To prove our main result, we relate the slowest rotational subspaces of $\opW \opFo$ and $\opW \opF$ in Section~\ref{sec:U}. We show that composing $\opW \opFo$ with the reflection $\opFt$ does not significantly alter the slowest rotational subspace, and therefore that $\opW \opF$ has the same asymptotic behaviour as $\opW \opFo$ when applied to $\kpiz$. The proof of Theorem \ref{thm:main_result} follows from the basis-change relationship between $\opU$ and $\opW \opF$ given in 
\myeqref{eq:U_decomp}.

With our approach, we are able to derive precise statements about the behaviour of an operator composed with a two-dimensional rotation. This addresses a more general challenge in analysing quantum algorithms, and may have applications outside of memoryless walks.

\section{Decomposition of $\opF$} \label{sec:F_decomp}

In this section, we derive the exact form of the rotation $\opF$. We show that it can be decomposed into two one-dimensional reflections, $\opFo$ and $\opFt$, which we compose sequentially with $\opW$ in Sections~\ref{sec:WFo} and \ref{sec:U}. 

Without loss of generality, assume $\kmarked = \ket{0,0}$ is the marked vertex. Consider the three-dimensional subspace spanned by $\kmarked$, $\kselfloop$ and $\keg$, the even-indexed square containing $\ket{0,0}$. The operator $\opF$ only acts non-trivially in this subspace, which is spanned by $\kselfloop$ and the two orthonormal states
\begin{align}
  \kplus  &= \frac{1}{\sqrt{3}} (\kmarked + \keg), \label{eq:kplus_def}\\
  \kminus &= \hphantom{\frac{1}{\sqrt{3}}} (\kmarked - \keg). \label{eq:kminus_def}
\end{align}

\thmsp
\begin{lemma} 
\label{lem:F_evecs}
Let $0 \leq \fangle \leq \frac{\pi}{3}$ be such that 
\begin{equation}
  \sin^2 (\fangle) = \frac{3}{4}\sel.
\end{equation}
Set $\feval = e^{\imath 4 \fangle}$. Then
$\opF = \opid + (\feval - 1) \kouter{\tfp} + (\feval^{-1} - 1)
\kouter{\tfm}$, where the two non-trivial eigenvectors are
\begin{align}
  \kfp
  &= \frac{1}{\sqrt{2}} \Bigg[
    \kplus
    - \imath
    \frac{1}{\sqrt{4-3\sel}}\Big(
    \sqrt{\sel} \kminus
    - 
    2 \sqrt{1-\sel} \kselfloop
    \Big)
    \Bigg], \\
  \kfm
  &= \frac{1}{\sqrt{2}} \Bigg[
    \kplus
    + \imath
    \frac{1}{\sqrt{4-3\sel}}\Big(
    \sqrt{\sel} \kminus
    - 
    2 \sqrt{1-\sel} \kselfloop
    \Big)
    \Bigg].
\end{align}
\end{lemma}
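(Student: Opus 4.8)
The plan is to compute $\opF = \subz{\opA \opGt \opA \opGt}$ explicitly on the three-dimensional subspace $\mathcal{V} = \tspan\{\kmarked, \keg, \kselfloop\}$, verify it is a rotation, and then read off its eigendecomposition. First I would record the action of each factor on $\mathcal{V}$. Since $\keg$ is a normalized uniform superposition over the even square containing $\kmarked$, the projector $\proje$ restricted to $\mathcal{V}$ is $\kouter{\eg}$ (the square is orthogonal to $\kselfloop$ and, via $\cz$, the relevant sign issues are localized); after conjugation by $\cz$ we get $\opW$-type bookkeeping that I would handle at the start by noting $\cz$ fixes $\kselfloop$ and $\keg$ up to an overall sign that cancels in the product $\opA\opGt\opA\opGt$. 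Concretely, on $\mathcal{V}$ we have $\opA|_{\mathcal V} = 2\kouter{\eg} + 2\kouter{\selfloop} - \opid$ (up to the $\cz$-conjugation, which I will check leaves $\opF$ of the stated form), i.e. $\opA$ acts as $+1$ on $\tspan\{\keg,\kselfloop\}$ and as $-1$ on its orthogonal complement within $\mathcal V$ — a single two-dimensional reflection. Likewise $\opGt = \opid - 2\kouter{\markedt}$ is a reflection about the line $\kmarkedt = \sqrt{\sel}\kmarked + \sqrt{1-\sel}\kselfloop$.

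Next I would pass to the orthonormal basis $\{\kplus, \kminus, \kselfloop\}$ of $\mathcal V$ given in \eqref{eq:kplus_def}–\eqref{eq:kminus_def}, where $\kplus = \tfrac{1}{\sqrt3}(\kmarked+\keg)$, $\kminus = \kmarked - \keg$ (note $\kminus$ as written is \emph{not} normalized — $\|\kminus\| = \sqrt 2$ once we account for $\inner{\eg}{\marked} = \tfrac12$, so I would either renormalize or carry the constant; the subspace is spanned by $\kminus$ regardless). In this basis $\keg$ and $\kselfloop$ are simple linear combinations, so $\opGt$ and $\opA|_{\mathcal V}$ become explicit $3\times 3$ real (orthogonal, determinant-preserving) matrices. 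Then $\opGt\opA\opGt$ is a reflection (conjugate of a reflection), $\opA\opGt\opA\opGt$ is the product of two reflections hence a rotation of $\mathcal V$ about a fixed axis, and the axis is exactly $\kselfloop$ is \emph{not} fixed in general — rather I expect the fixed axis to be some vector and $\opF$ to rotate the orthogonal plane; the claim $\feval = e^{\imath 4\fangle}$ with $\sin^2\fangle = \tfrac34\sel$ tells us the rotation angle is $4\fangle$, which I would verify by computing the trace of $\opF|_{\mathcal V}$ (trace $= 1 + 2\cos 4\fangle$). Computing $\tr(\opA\opGt\opA\opGt)$ directly and matching it to $1 + 2\cos 4\fangle = 1 + 2(1 - 2\sin^2 2\fangle) = 3 - 4\sin^2 2\fangle$ should pin down the angle; the factor of $4$ (rather than $2$) comes from $\opGt$ appearing twice, each contributing a rotation by $2\fangle$ via the standard "two reflections = rotation by twice the angle between them" identity, applied to the composition $(\opA\opGt)(\opA\opGt)$.

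Finally, having established that $\opF$ is a rotation by $4\fangle$ in a plane of $\mathcal V$, its nontrivial eigenvalues are $e^{\pm\imath 4\fangle} = \feval^{\pm1}$ with eigenvectors of the form $\tfrac{1}{\sqrt2}(\kv_1 \mp \imath \kv_2)$ where $\kv_1,\kv_2$ is an orthonormal basis of the rotation plane and $\kv_1 \times \kv_2$ points along the rotation axis with the right orientation. So the last step is to identify the rotation plane: I claim it is $\tspan\{\kplus,\ \sqrt\sel\kminus - 2\sqrt{1-\sel}\kselfloop\}$ (suitably normalized — the second vector has norm related to $\sqrt{4-3\sel}$, explaining that denominator), and to check these two vectors are fixed-axis-orthogonal and correctly oriented so the stated $\kfp,\kfm$ are indeed the $\feval, \feval^{-1}$ eigenvectors respectively. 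I would verify this by direct substitution: apply the explicit $3\times3$ matrix for $\opF$ to $\kfp$ and confirm it scales by $\feval$. The main obstacle I anticipate is bookkeeping the $\cz$-conjugation correctly — making sure that $\subz{\cdot}$ does not alter $\opF$ on this subspace (I expect $\cz$ multiplies $\keg$ by $-1$ and fixes $\kmarked$ — wait, $\kmarked = \ket{0,0}$ has both coordinates even, so $\cz\kmarked = -\kmarked$, and $\cz$ acts on $\keg$ as $\tfrac12(-\ket{0,0} + \ket{0,1}+\ket{1,0}+\ket{1,1})$, which is \emph{not} $\pm\keg$) — so the conjugation genuinely changes the basis vectors, and the honest approach is to compute $\opA$ and $\opGt$, then conjugate the whole product, then restrict; I would do this carefully rather than restricting first. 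Everything else is a finite, if tedious, linear-algebra verification.
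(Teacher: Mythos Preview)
Your plan is sound and would work, but it takes a longer route than the paper's. The paper does not compute the full $3\times 3$ matrix of $\opF$ or use the trace to identify the rotation angle. Instead, it exploits the factorization $\opF = \subz{\opA\opGt}\cdot\subz{\opA\opGt}$ and verifies \emph{directly} that the claimed $\kfp$ is an eigenvector of the single factor $\subz{\opA\opGt}$ with eigenvalue $-e^{\imath 2\fangle}$; squaring then gives the eigenvalue $\feval = e^{\imath 4\fangle}$ for $\opF$, and conjugation yields $\kfm$ with eigenvalue $\feval^{-1}$. A short parity argument (real operator, so complex eigenvalues pair up; a $(-1)$-eigenspace of the square of a real operator must be even-dimensional) rules out any further nontrivial eigenvectors. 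This is considerably less computation than building the $3\times3$ matrix, finding its trace, and then solving for the rotation plane---you get the eigenvector and eigenvalue in one stroke.

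Two small points to clean up in your version. First, your normalization worry about $\kminus$ is miscalculated: since $\inner{\eg}{\marked}=\tfrac12$, we have $\|\kminus\|^2 = 1 - 2\cdot\tfrac12 + 1 = 1$, so $\kminus$ is already a unit vector (and $\kplus,\kminus,\kselfloop$ is genuinely orthonormal). Second, your concern about the $\cz$-conjugation is legitimate but resolves cleanly: although $\cz\keg \neq \pm\keg$, one has $\cz\keg = \keg - \kmarked \in \tspan\{\kmarked,\keg\}$, so $\cz$ preserves the subspace $\mathcal V$ as a whole. Thus you may either compute $\cz|_{\mathcal V}$ as an explicit $3\times3$ matrix and conjugate at the end, or---as the paper does---simply work with $\subz{\opA\opGt}$ as an operator on $\mathcal V$ from the start.
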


\begin{proof}
Observe that $\opF$ is a real-valued operator that only acts non-trivially on the span of $\kplus$, $\kminus$ and $\kselfloop$. Therefore, any complex eigenvalues of $\opF$ must come in conjugate pairs, and $\opF$ can have at most three non-trivial eigenvectors. Using the observation that $\subz{\opA \opGt}$ is real-valued, any $(-1)$-eigenspace of $\opF$ would necessarily have even dimension. Thus, $\opF$ must have either two or zero non-trivial eigenvectors.

Define $\kfpun = \sqrt{2(4-3\sel)}\kfp$ and compute 
\begin{align*}
\subz{\opA \opGt} \kfpun &= \bigg[ -(2- 3\sel)\frac{\sqrt{4-3\sel}}{2} - \imath \sqrt{3\sel} (4-3\sel) \bigg] \kplus  \\
&+ \bigg[ -\frac{\sqrt{4-3\sel}}{2} \sqrt{3}\sel + \imath \sqrt{\sel}(2- 3\sel) \bigg]\kminus \nonumber \\
&+ \bigg[ \sqrt{3\sel(1-\sel)(4-3\sel)} - \imath \sqrt{1-\sel}(2- 3\sel) \bigg] \kselfloop \nonumber \\
&= -\frac{1}{2} \Big( (2-3\sel) + \imath \sqrt{3\sel} \sqrt{4-3\sel} \Big) \kfpun \\[2mm]
&= - e^{\imath 2 \fangle} \kfpun.
\end{align*}
This shows that $\kfp$ is an eigenvector of $\opF = \subz{\opA \opGt} \subz{\opA \opGt}$ with eigenvalue $(-e^{\imath 2 \fangle})^2 = \feval$. It follows that the entrywise conjugate of $\kfp$, given by $\kfm$, must be an eigenvector of $\opF$ with eigenvalue $\feval^{-1}$.
\end{proof}

Lemma \ref{lem:F_evecs} shows that $\opF$ is a rotation by $4 \fangle$ of a single two-dimensional space, spanned by $\kfp$ and $\kfm$. Therefore, $\opF$ can be decomposed as the product of two one-dimensional reflections. We choose these reflections as follows.

\thmsp
\begin{fact} \label{lem:F_decomp}
Define \begin{align}
\kfo &= \frac{1}{\sqrt{4 - 3\sel}} \Big( \sqrt{\sel} \kminus - 2 \sqrt{1 - \sel} \kselfloop \Big), \\
\kft &= \sin(2\fangle) \kplus + \cos(2 \fangle) \kfo,
\end{align}
and let \begin{align}
\opFo &= \opid - 2\kouter{\fo},\\
\opFt &= \opid - 2\kouter{\ft}.
\end{align}
Then $\opF = \opFo \opFt$.
\end{fact}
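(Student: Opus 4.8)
The plan is to establish $\opF = \opFo\opFt$ by comparing both operators on the plane $P = \tspan\{\kplus,\kfo\}$ and on its orthogonal complement, using the spectral description of $\opF$ from Lemma~\ref{lem:F_evecs}. First I would note that $\kplus$ is orthogonal to $\kfo$: the former lies in $\tspan\{\kmarked,\keg\}$, the latter in $\tspan\{\kminus,\kselfloop\}$, and $\kminus \perp \kplus$. Hence $\{\kplus,\kfo\}$ is an orthonormal pair and $\kft = \sin(2\fangle)\kplus + \cos(2\fangle)\kfo$ is a unit vector lying in $P$. Because both reflection axes $\kfo$ and $\kft$ lie in $P$, the reflections $\opFo$ and $\opFt$ each fix $P^{\perp}$ pointwise, so $\opFo\opFt$ restricts to the identity on $P^{\perp}$. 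On the other side, Lemma~\ref{lem:F_evecs} writes $\opF = \opid + (\feval-1)\kouter{\tfp} + (\feval^{-1}-1)\kouter{\tfm}$ with $\kfp = \tfrac{1}{\sqrt{2}}(\kplus - \imath\kfo)$ and $\kfm = \tfrac{1}{\sqrt{2}}(\kplus + \imath\kfo)$, so $\kfp,\kfm \in P$ and $\opF$ too is the identity on $P^{\perp}$. It remains to check that $\opFo\opFt$ and $\opF$ agree on $P$, which (since $\{\kfp,\kfm\}$ spans $P$) reduces to computing $\opFo\opFt\kfp$ and $\opFo\opFt\kfm$.

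For this I would write the $2\times2$ matrices of $\opFo$ and $\opFt$ restricted to $P$ in the orthonormal basis $(\kplus,\kfo)$. The map $\opFo = \opid - 2\kouter{\fo}$ is $\mathrm{diag}(1,-1)$, and since $\kft$ has coordinates $(\sin 2\fangle,\cos 2\fangle)$, the double-angle identities turn $\opFt = \opid - 2\kouter{\ft}$ into $\left(\begin{smallmatrix}\cos 4\fangle & -\sin 4\fangle\\ -\sin 4\fangle & -\cos 4\fangle\end{smallmatrix}\right)$. Multiplying in the order ``$\opFt$ first'' gives the rotation $\left(\begin{smallmatrix}\cos 4\fangle & -\sin 4\fangle\\ \sin 4\fangle & \cos 4\fangle\end{smallmatrix}\right)$, and applying it to the coordinate vector $(1,-\imath)$ of $\sqrt{2}\,\kfp$ yields $e^{\imath 4\fangle}(1,-\imath)$. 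Hence $\opFo\opFt\kfp = \feval\kfp$, and entrywise conjugation gives $\opFo\opFt\kfm = \feval^{-1}\kfm$. These coincide with the eigendata of $\opF$, so the two operators agree on $P$ and therefore on all of $\hilb$, proving $\opF = \opFo\opFt$.

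The only subtle point is orientation: the factors must be composed so that $\opFt$ acts first, ensuring that the $\feval$-eigenvector comes out as $\kfp$ rather than $\kfm$; concretely this is the step of recognizing that $\tfrac{1}{\sqrt{2}}(\kplus - \imath\kfo)$ is literally the vector $\kfp$ displayed in Lemma~\ref{lem:F_evecs}. Everything else is routine bookkeeping. An alternative, more conceptual route avoids the explicit matrices altogether: since $\inner{\fo}{\ft} = \cos(2\fangle)$, the axes of $\opFo$ and $\opFt$ meet at angle $2\fangle$ inside $P$, so by the elementary fact that a product of two planar reflections whose axes meet at angle $\beta$ is a rotation by $2\beta$, $\opFo\opFt$ is a rotation of $P$ by $4\fangle$ that fixes $P^{\perp}$; a single evaluation on $\kfp$ then pins down which of $\kfp,\kfm$ carries the eigenvalue $\feval$.
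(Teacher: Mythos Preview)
Your proof is correct. The paper states this result as a Fact without supplying a proof, so there is no argument to compare against; your approach---identifying the common invariant plane $P=\tspan\{\kplus,\kfo\}$, noting both operators are the identity on $P^\perp$, and then verifying the $2\times2$ rotation on $P$ via an explicit matrix computation---is a clean and complete justification of what the paper leaves implicit.
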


\section{Structure of $\opW$} \label{sec:W}

We give exact formulas for the eigenvectors and eigenphases of $\opW$, as well as a decomposition of the $(\nsize +1)$-dimensional domain into subspaces that are invariant under $\opW$. These properties are required for the precise characterisation of $\opW \opFo$ and $\opW \opF$ in later sections.

\subsection{Spectra of $\opW$}

Recall that $\kselfloop$ is trivially a $(+1)$-eigenvector of $\opW$. The remaining $\nsize$ eigenvectors of $\opW$ can be indexed by $k$ and $l$ as follows.

For $0 \leq k < \nrows/2$, let $\tk = \frac{2 \pi k}{\nrows}$, and for
$0 \leq l < \ncols/2$, let $\tl = \frac{2 \pi l}{\ncols}$. Let
$\signk = \sign(\cos \tk)$ and $\signl = \sign(\cos \tl)$. Define the sign of zero to be $+1$. This case occurs when
$\nrows$ or $\ncols$ is divisible by 4, since $\cos(\tk) = 0$ when
$k=\nrows/4$ and $\cos(\tl) = 0$ when $l=\ncols/4$.

Define
\begin{align}
	p_\kl &= \sqrt{1 - \cos^2 \tk \cos^2 \tl},  \\
	\theta_\kl &= \signk \signl \,\textup{acos}(1 - 2 p^2_\kl), \label{eq:theta_def}
\end{align}
and \begin{align*}
    r_\kl^{\pm} &= \sqrt{2 \bigg(1 \pm \frac{\sin \tk \cos \tl }{p_\kl}\bigg)}
    = \sqrt{1+ \frac{\sin \tl}{p_\kl}} 
      \pm \signl \sqrt{1- \frac{\sin \tl}{p_\kl}}, \\[4.8mm]
    c_\kl^{\pm} &= \sqrt{2 \bigg(1 \pm \frac{\cos\tk \sin \tl}{p_\kl}\bigg)}
    = \sqrt{1+ \frac{\sin \tk}{p_\kl}} 
      \pm \signk \sqrt{1- \frac{\sin \tk}{p_\kl}}.
\end{align*}

If $k=l=0$, then $p_{00}= 0$ and the division by zero is
ill-defined. In this case, we define
\begin{equation*}
  r_{00}^+ = r_{00}^- = c_{00}^+ = c_{00}^- = \sqrt{2}.
\end{equation*}

For each $0 \leq k < \nrows/2$ and $0 \leq l < \ncols/2$, there is an
eigenvector $\ket{w_\kl}$ of $\opW$ with eigenvalue $e^{\imath\theta_\kl}$. This $\ket{w_{\kl}}$ is the product state
\begin{equation}
  \ket{w_\kl} = \ket{u_\kl} \otimes \ket{v_\kl},
\end{equation}
where the factors are given by the normalized states

\begin{minipage}[c]{.50\textwidth}
  \begin{align*}
    \ket{u_\kl} &= \sqrt{2} \ket{\phi^k_r} \circ (\ket{1_{\nrows/2}} \otimes \ket{r_\kl}), \\[1.5mm]
    \ket{v_\kl} &= \sqrt{2} \ket{\phi^l_c} \circ (\ket{1_{\ncols/2}} \otimes \ket{c_\kl}), 
  \end{align*}
\end{minipage}
\begin{minipage}[c]{.50\textwidth}
  \begin{align*}
    \ket{r_\kl} &= \frac{1}{2}
                  \begin{bmatrix}
                    r_\kl^{-} \\ r_\kl^{+}
                  \end{bmatrix}, \\
    \ket{c_\kl} &= \frac{1}{2}
                  \begin{bmatrix}
                    c_\kl^{-} \\ c_\kl^{+}
                  \end{bmatrix}. 
  \end{align*}
\end{minipage}
\newline

Here, $\circ$ denotes entry-wise multiplication and $\ket{1_n}$ is the all-ones vector of dimension $n$.  
The Fourier states are
$\ket{\phi^k_r} = \frac{1}{\sqrt{\nrows}} \sum_{i=0}^{\nrows-1} \omega^{ik}_{\nrows} \ket{i}$ and $\ket{\phi^k_c} = \frac{1}{\sqrt{\ncols}} \sum_{i=0}^{\ncols-1} \omega^{ik}_{\ncols} \ket{i}$, where $\omega_n = e^{2\pi \imath/n}$ denotes
the \nth{n} root of unity.

Let 
\begin{align*}
  \ket{r^1_\kl} &= \op{XZ} \ket{r_\kl},\\
  \ket{c^1_\kl} &= \op{XZ} \ket{c_\kl}.
\end{align*}
Then replacing $\ket{r_\kl}$ and $\ket{c_\kl}$ with $\ket{r^1_\kl}$ and
$\ket{c^1_\kl}$ in the above construction yields an eigenvector
with eigenphase $-\theta_\kl$, and replacing either one of the two yields an
eigenvector with eigenvalue~$1$. Let $\ket{u_\kl^1}$ and $\ket{v_\kl^1}$ be
defined with $\ket{r^1_\kl}$ and $\ket{c^1_\kl}$, respectively. We denote $\ket{u^0_\kl} = \ket{u_\kl}$, $\ket{v^0_\kl} = \ket{v_\kl}$. Let
$\ket{w_\kl^B}$ be defined accordingly for $B \in \{00, 01, 10, 11\}$.

By this definition, the $\nsize$ eigenvectors $\{\ket{w_\kl^B}\}$ of $\opW$ constitute an orthonormal basis for the grid, where the eigenvector $\ket{w_\kl^B}$ has eigenphase $\theta_\kl, 0, 0, -\theta_\kl$ for $B = 00, 01, 10, 11$. 

\thmsp
\begin{fact}
Both $\kpi$ and $\kpiz$ are $(+1)$-eigenvectors of $\opW$.
\end{fact}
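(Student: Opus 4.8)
The plan is to push the computation down from $\opW$ to the simpler operator $\opB\opA$, exploiting the basis-change identity $\opW = \cz\,(\opB\opA)\,\cz$ together with the facts that $\cz$ is an involution and $\kpiz = \cz\kpi$. Then $\opW\kpiz = \cz(\opB\opA)\cz\kpiz = \cz(\opB\opA)\kpi$ and $\opW\kpi = \cz(\opB\opA)\cz\kpi = \cz(\opB\opA)\kpiz$, so it suffices to establish the two identities $\opB\opA\kpi = \kpi$ and $\opB\opA\kpiz = \kpiz$; the conclusion then follows by applying $\cz$ once more and using $\cz\kpi=\kpiz$, $\cz\kpiz=\kpi$.

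The first identity is the statement that the uniform state sits inside each tessellation: computing $\inner{a_{ij}}{\pi} = 2/\sqrt{\nsize}$ for every even square and using $\sum_{ij}\ket{a_{ij}} = \tfrac12\sum_{i,j}\ket{i,j} = \tfrac{\sqrt{\nsize}}{2}\kpi$ gives $\proje\kpi = \kpi$, and the same computation gives $\projo\kpi = \kpi$. Combined with $\inner{\selfloop}{\pi} = 0$, this yields $\opA\kpi = 2\proje\kpi - \kpi = \kpi$ and likewise $\opB\kpi = \kpi$, hence $\opB\opA\kpi = \kpi$. For the second identity I would use that every $2\times 2$ square --- in the even tessellation \emph{and} in the odd one --- contains exactly one vertex both of whose coordinates are even, so $\cz$ negates exactly one of the four amplitudes of any square state. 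Consequently $\inner{a_{ij}}{\piz} = \inner{b_{ij}}{\piz} = 1/\sqrt{\nsize}$ for all $i,j$, which gives $\proje\kpiz = \projo\kpiz = \tfrac12\kpi$; since also $\inner{\selfloop}{\piz} = 0$, we get $\opA\kpiz = \kpi - \kpiz$ and $\opB\kpiz = \kpi - \kpiz$. A short cancellation then gives $\opB\opA\kpiz = \opB\kpi - \opB\kpiz = \kpi - (\kpi - \kpiz) = \kpiz$. Substituting back, $\opW\kpiz = \cz\kpi = \kpiz$ and $\opW\kpi = \cz\kpiz = \kpi$.

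I expect no real obstacle here; the only point needing care is the sign bookkeeping under $\cz$ --- verifying the ``exactly one both-even vertex per square'' claim for the odd tessellation as well as the even one (which uses that $\nrows$ and $\ncols$ are even, so the torus wraparound preserves parity), and noting that $2\proje\kpiz$ equals $\kpi$ rather than $\kpiz$, which is exactly why the $\opB\opA$-invariance of $\kpiz$ is slightly less immediate than that of $\kpi$ and needs the small cancellation above.
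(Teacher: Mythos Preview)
Your proof is correct, and it takes a genuinely different route from the paper's. The paper argues spectrally: it invokes the explicit eigenbasis $\{\ket{w_{kl}^B}\}$ of $\opW$ constructed in Section~4.1, observes that $\theta_{00}=0$, and then identifies $\kpi = \ket{w_{00}^{00}}$ and $\kpiz = \tfrac{1}{2}\big(\ket{w_{00}^{00}} + \ket{w_{00}^{01}} + \ket{w_{00}^{10}} - \ket{w_{00}^{11}}\big)$ as linear combinations of eigenvectors with eigenphase~$0$. By contrast, you work directly from the definitions of $\opA$, $\opB$, and $\cz$, computing $\proje\kpi=\kpi$, $\proje\kpiz=\tfrac{1}{2}\kpi$ (and likewise for $\projo$) via the elementary parity observation that every $2\times 2$ tessellation square contains exactly one both-even vertex. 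Your argument is more self-contained --- it does not rely on the Fourier-type spectral decomposition and could stand on its own before Section~4.1 is developed --- whereas the paper's proof is shorter because it reuses machinery already in place for the rest of the analysis. Both are valid; yours has the modest advantage that the reader can verify the Fact without first digesting the eigenvector formulas.
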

\begin{proof}
Based on \myeqref{eq:theta_def}, we note $\theta_{00} = 0$. The fact follows from the observation that \begin{gather*}
\kpi = \ket{w_{00}^{00}},\\
\kpiz = \frac{1}{2} \Big( \ket{w_{00}^{00}} + \ket{w_{00}^{01}} + \ket{w_{00}^{10}} - \ket{w_{00}^{11}} \Big).
\end{gather*}
\end{proof}

\subsection{Invariant subspaces} \label{sec:invariant_subspaces}

We partition the domain of $\opW$ into subspaces $\opW_{kl}$. The number of subspaces depends on the parity of $\frac{\ncols}{2}$. If $\frac{\ncols}{2}$ is odd, there are $\frac{(\nrows + 2) (\ncols - 2)}{8} + 1 $ invariant subspaces, and if $\frac{\ncols}{2}$ is even, there are $ \frac{(\nrows + 2) \ncols}{8} + \lfloor \frac{\nrows}{4} \rfloor - 1 $ invariant subspaces.
 Each $\opW_{kl}$ is spanned by a set of eigenvectors of $\opW$ with eigenphase $\theta_{kl}$, and is therefore invariant under the action of $\opW$. The projection onto subspace $\opW_{kl}$ is denoted $\proj_{kl}$.

First, observe that $\theta_{00} = 0$, so the eigenvectors $\ket{w_{00}^{00}}$ and $\ket{w_{00}^{11}}$ are both $(+1)$-eigenvectors.
The $(+1)$-eigenspace of $\opW$ therefore has dimension $\frac{N}{2} + 2 + 1$, where the last dimension comes from the selfloop state. We denote the $(+1)$-eigenspace as $\opW_{00}$ and its associated projection as~$\proj_{00}$.

For any $0 \leq k < \nrows/2$ and $0 \leq l < \ncols/2$, define

\begin{minipage}{.48\textwidth}
  \begin{equation*}
    k' = \begin{cases}
      \frac{\nrows}{2} - k & \textup{ if $0< k < \frac{\nrows}{2}$ } \\
      0 & \textup{ if $k = 0$}
    \end{cases}
  \end{equation*}
\end{minipage}
\begin{minipage}{.1\textwidth}
and
\end{minipage}
\begin{minipage}{.4\textwidth}
  \begin{equation*}
    l' = \begin{cases}
      \frac{\ncols}{2} - l & \textup{ if $0< l < \frac{\ncols}{2}$ } \\
      0 & \textup{ if $l = 0$.}
    \end{cases}
  \end{equation*}
\end{minipage}

Note that $k=k'$ exactly when $k=0$ or $k= \frac{\nrows}{4}$, and similarly for $l = l'$.

For $0 < k < \frac{\nrows}{2}, k \neq \frac{\nrows}{4}$ and $0 < l < \frac{\ncols}{4}$, define 
\begin{equation*}
  \opW_{kl} =
  \tspan \Big \{ \ket{w^{00}_\kl}, \ket{w^{00}_\kplp}, \ket{w^{11}_\kpl}, \ket{w^{11}_\klp} \Big \}.
\end{equation*}
Each of these subspaces has an associated eigenphase $\theta_\kl \neq 0, \pi$. When $\nrows$ or $\ncols$ is a multiple of 4, $\opW$ also has a $(-1)$-eigenspace. In this case, there are subspaces with eigenphase $\pi$ given by
 \begin{equation*}
  \opW_{kl} =
  \tspan \Big \{ \ket{w^{00}_\kl}, \ket{w^{00}_\kplp}, \ket{w^{11}_\kpl}, \ket{w^{11}_\klp} \Big \}.
\end{equation*}
for $k = \frac{\nrows}{4}$, $0 < l < \frac{\ncols}{4}$ and $l = \frac{\ncols}{4}$, $0 < k \leq \frac{\nrows}{4}$. These four vectors will be distinct unless both $k = \frac{\nrows}{4}$ and $l = \frac{\ncols}{4}$, in which case $\dim(\opW_{kl}) = 2$.

For $k = 0$ and $0 < l < \frac{\ncols}{2}$, the corresponding invariant subspace is \begin{align*}
  \opW_{kl} =
  \tspan \Big \{ \ket{w^{00}_{\kl}}, \ket{w^{11}_\klp} \Big \},
\end{align*}
and similarly for $l = 0$ and $0 < k < \frac{\nrows}{2}$, \begin{align*}
  \opW_{kl} =
  \tspan \Big \{ \ket{w^{00}_{\kl}}, \ket{w^{11}_\kpl} \Big \}.
\end{align*}

Partitioning the domain in this way allows us to closely analyse the behaviour of $\kplus$ and $\kminus$ under the action of $\opW$.

\thmsp
\begin{restatable}{lemma}{claimone}
\label{lem:claim1}
The following statements hold. \begin{align}
    \proj_{kl} \kplus  \label{eq:kp_perp_km}
    &\perp \proj_{kl} \kminus
    &\textup{ for all subspaces $\opW_{kl}$} \\[5pt]
    \|\proj_{kl} \kplus\|^2	\label{eq:kplusj}
    &= \frac{2}{3} \,\frac{\dim(\opW_{kl})}{N} 
    &\textup{ for all $k,l$ not both $0$} \\
    \|\proj_{kl} \kminus\|^2	\label{eq:kminusj}
    &= {2} \,\frac{\dim(\opW_{kl})}{N} 
    &\textup{ for all $k,l$ not both $0$}\\
    \|\proj_{00} \kplus\|^2	\label{eq:kpluszero}
    &= \frac{2}{3} \frac{N+2}{N} \\
    \|\proj_{00} \kminus	\|^2	\label{eq:kminuszero}
    &= \frac{4}{N}.
\end{align}
\end{restatable}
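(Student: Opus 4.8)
The plan is to compute, for each invariant subspace $\opW_{kl}$ from Section~\ref{sec:invariant_subspaces}, the quantities $\|\proj_{kl}\kplus\|^2$, $\|\proj_{kl}\kminus\|^2$, and the inner product of $\proj_{kl}\kplus$ with $\proj_{kl}\kminus$, by expanding $\kplus$ and $\kminus$ against the orthonormal eigenbasis $\{\ket{w^B_\kl}\}$ of $\opW$. The first reduction is that both $\kplus$ and $\kminus$ are supported on the $2\times 2$ block $\{\ket{0,0},\ket{0,1},\ket{1,0},\ket{1,1}\}$. Writing this block as (least significant bit of row)~$\otimes$~(least significant bit of column) and passing to the basis $\{\ket{+},\ket{-}\}$ on each factor, one gets $\keg=\ket{+}\otimes\ket{+}$ and $\kmarked=\tfrac12(\ket{+}+\ket{-})\otimes(\ket{+}+\ket{-})$, so $\kplus$ and $\kminus$ become explicit short combinations of the four product states $\ket{s_r}\otimes\ket{s_c}$ with $s_r,s_c\in\{+,-\}$. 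Since each eigenvector factors as $\ket{w^{B_1B_2}_\kl}=\ket{u^{B_1}_\kl}\otimes\ket{v^{B_2}_\kl}$, every overlap $\inner{w^B_\kl}{\plus}$ and $\inner{w^B_\kl}{\minus}$ then factors into a row part and a column part, each of which is a short linear combination of $r^\pm_\kl$ (resp.\ $c^\pm_\kl$) with coefficients $1$ and the root of unity $\omega^{-k}_\nrows$ (resp.\ $\omega^{-l}_\ncols$), up to the normalization $\tfrac{1}{2\sqrt\nrows}$ (resp.\ $\tfrac{1}{2\sqrt\ncols}$). These reduced overlaps are read directly off the eigenvector formulas.

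Next, for each $\opW_{kl}$ I would assemble the sum over its spanning eigenvectors. For a generic subspace, spanned by $\ket{w^{00}_\kl},\ket{w^{00}_\kplp},\ket{w^{11}_\kpl},\ket{w^{11}_\klp}$, the simplifications come from $\tk'=\pi-\tk$ and $\tl'=\pi-\tl$, which give $p_\kl=p_\kpl=p_\klp=p_\kplp$, the swaps $r^\pm_\klp=r^\mp_\kl$ and $c^\pm_\kpl=c^\mp_\kl$, and $\omega^{-k'}_\nrows=-e^{\imath\tk}$, together with the norm identities $(r^+_\kl)^2+(r^-_\kl)^2=4=(c^+_\kl)^2+(c^-_\kl)^2$ (and, where needed, $r^+_\kl r^-_\kl=2|\sin\tl|/p_\kl$, $c^+_\kl c^-_\kl=2|\sin\tk|/p_\kl$, which follow from $p^2_\kl=1-\cos^2\tk\cos^2\tl$). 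After substituting and expanding, the cross-terms in $\|\proj_{kl}\kplus\|^2$ and $\|\proj_{kl}\kminus\|^2$ cancel and the squares collapse to $\tfrac23\dim(\opW_{kl})/\nsize$ and $2\dim(\opW_{kl})/\nsize$ respectively; the boundary subspaces (two-dimensional, $k=0$ or $l=0$), the eigenphase-$\pi$ subspaces present when $4\mid\nrows$ or $4\mid\ncols$, and the degenerate two-dimensional case $k=\nrows/4,\,l=\ncols/4$ are treated identically with the appropriate shorter spanning set. In the same computation one checks that the inner product of $\proj_{kl}\kplus$ with $\proj_{kl}\kminus$ vanishes: the contribution of each spanning vector is cancelled against those of its partners under $k\mapsto k'$, $l\mapsto l'$.

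Finally, the subspace $\opW_{00}$ — which also contains $\kselfloop$ and all the eigenphase-$0$ vectors $\ket{w^{01}_\kl},\ket{w^{10}_\kl}$ — is handled without further computation: since $\kplus$ and $\kminus$ are unit vectors orthogonal to $\kselfloop$, completeness gives $\|\proj_{00}\kplus\|^2=1-\sum_{(k,l)\neq(0,0)}\|\proj_{kl}\kplus\|^2$ and likewise for $\kminus$, and plugging in $\sum_{kl}\dim(\opW_{kl})=\nsize+1$ and $\dim(\opW_{00})=\tfrac{\nsize}{2}+3$ yields $\tfrac23\tfrac{\nsize+2}{\nsize}$ and $\tfrac{4}{\nsize}$; orthogonality of $\proj_{00}\kplus$ and $\proj_{00}\kminus$ follows the same way from $\inner{\plus}{\minus}=0$. (The identities $\|\kplus\|=\|\kminus\|=1$ also provide a global consistency check on the per-subspace values.) I expect the main obstacle to be this case analysis and the phase bookkeeping — correctly enumerating the several shapes of $\opW_{kl}$ and verifying the exact cancellation that produces both the clean squared norms and the orthogonality in every case; the norm identities for $r^\pm_\kl,c^\pm_\kl$ are the algebraic engine that makes the sums collapse.
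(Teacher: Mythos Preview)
Your proposal is correct and follows essentially the same route as the paper: expand in the eigenbasis $\{\ket{w^B_\kl}\}$, use the symmetries $\tk'=\pi-\tk$, $\tl'=\pi-\tl$ to collapse the per-subspace sums, and handle $\opW_{00}$ by completeness. The only organizational difference is that the paper works with $\kmarked$ and $\keg$ rather than $\kplus$ and $\kminus$ directly --- it first shows $\bra{\marked}\proj_{kl}\keg = 0$ for $kl\neq 00$ and $\|\proj_{kl}\kmarked\|^2=\|\proj_{kl}\keg\|^2=\dim(\opW_{kl})/\nsize$, and then obtains all five claims by linear combination --- which slightly shortens the algebra since $\kmarked$ and $\keg$ have simpler overlaps with the product eigenvectors than $\kplus$ and $\kminus$ do.
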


\begin{proof}
See Appendix \ref{app:proof_of_claim1}.
\end{proof}

\section{Reduction to the slowest subspace} \label{sec:WFo}

Our memoryless walk, given in Theorem~\ref{thm:main_result}, achieves a success probability asymptotically close to 1. This is possible because our choice of $\sel$ reduces the walk asymptotically to a rotation in a single two-dimensional subspace. As we show, this subspace is exactly the slowest rotational subspace of the applied walk operator.
We prove this by giving a tight description of the smallest positive eigenvalue of the walk operator and its associated eigenvector. We also show that the two-dimensional rotation induced by the walk maps the initial state to the desired state, $\kselfloop$.

We consider two walk operators. The first consists of the real operator $\opW$ composed with the one-dimensional reflection $\opFo$. The second consists of $\opW$ composed with the two-dimensional rotation $\opF$. We discuss the first operator in this section, and then use the results to derive properties of the second operator in Section~\ref{sec:U}.

First, we prove three key lemmas about the slowest rotational subspace of $\opW \opFo$. Lemma \ref{lem:WFo_ephase} gives a tight bound on the smallest positive eigenphase of $\opW \opFo$, and Lemmas \ref{lem:aco} and \ref{lem:kpsi} characterize the asymptotic behaviour of its associated eigenvector. These three lemmas show that the action of $\opW \opFo$ on $\kpiz$ can be reduced to a rotation in the slowest rotational subspace. These results are the basis of the methods used in Section~\ref{sec:U} to prove our main result.

The intermediate operator $\opW \opFo$ is fundamentally a tool for analysis. However, it is interesting to note that $\opW \opFo$ can be used directly to find the marked state. The proof of the corollary below follows from a similar argument to our proof of the main theorem.

\thmsp
\begin{restatable}{corollary}{WFothm}
\label{thm:intermediate}
Fix $\sel = 1 - \frac{1}{\nsize + 1}$ and suppose $\nrows = \ncols$. Then there exists a constant $c > 0$ such that after $c \sqrt{\nsize \log \nsize}$ applications of $\opW \opFo$ to $\kpiz$, measuring the state will produce $\kselfloop$ with probability $1 - e(\nsize)$, where $e(\nsize) \in O(\frac{1}{\log \nsize})$.
\end{restatable}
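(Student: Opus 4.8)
The plan is to mirror the structure that the paper sets up for the main theorem, but applied directly to the intermediate operator $\opW\opFo$ instead of $\opW\opF$. The starting point is the three key lemmas promised in this section: Lemma~\ref{lem:WFo_ephase} gives a tight bound on the smallest positive eigenphase $\varphi_1$ of $\opW\opFo$, and Lemmas~\ref{lem:aco} and~\ref{lem:kpsi} describe the associated slowest eigenvector $\evec$ (and its conjugate $\evecb$) asymptotically. First I would use these to show that $\kpiz$ lies, up to a vanishing error, inside the slowest two-dimensional rotational subspace $\tspan\{\evec,\evecb\}$: concretely, write $\kpiz = \alpha\evec + \overline{\alpha}\evecb + \ket{\delta}$ where $\ket{\delta}$ is the component orthogonal to that subspace, and invoke Lemma~\ref{lem:kpsi} to bound $\|\ket{\delta}\|^2 \in O(\tfrac{1}{\log\nsize})$. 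Since $\opW\opFo$ acts on the slowest subspace as a rotation by $2\varphi_1$ and $\|\ket{\delta}\|$ can only grow additively under a unitary, after any number $t$ of steps the state stays within distance $O(1/\sqrt{\log\nsize})$ of the orbit of $\alpha\evec + \overline\alpha\evecb$ under that rotation.

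Next I would track where the rotation sends the in-subspace part. The content of Lemmas~\ref{lem:aco}–\ref{lem:kpsi} should be that, asymptotically, $\alpha\evec + \overline\alpha\evecb$ starts essentially at $\kpiz$ (a state with no $\kselfloop$ component) and that the antipodal point of the rotation, reached after rotating by $\pi$, is essentially $\kselfloop$ (or $\cz\kselfloop = \kselfloop$, which is why the statement is phrased in terms of $\kselfloop$ directly). So I would choose $t = c\sqrt{\nsize\log\nsize}$ so that $2\varphi_1 t \approx \pi$; this is possible with a suitable constant $c$ precisely because Lemma~\ref{lem:WFo_ephase} pins down $\varphi_1 \in \Theta(1/\sqrt{\nsize\log\nsize})$. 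Here one has to be a little careful: $\varphi_1$ is not exactly $\pi/(2t)$ for integer $t$, so there is a residual rotation angle of size $O(\varphi_1) = O(1/\sqrt{\nsize\log\nsize})$, contributing an additional $O(1/\nsize\log\nsize)$ to the error in the overlap — negligible compared with the $O(1/\log\nsize)$ term. Combining the subspace-truncation error and the rotation-angle error, the final state has overlap $1 - O(\tfrac{1}{\log\nsize})$ with $\kselfloop$, so measuring yields $\kselfloop$ with probability $1 - e(\nsize)$, $e(\nsize)\in O(\tfrac1{\log\nsize})$.

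The main obstacle I anticipate is not the rotation bookkeeping but establishing cleanly that the \emph{only} non-negligible part of $\kpiz$ under $\opW\opFo$ lives in the slowest rotational subspace — i.e.\ that the other rotational subspaces of $\opW\opFo$, together with its $(\pm1)$-eigenspaces, carry vanishing weight of $\kpiz$. This is exactly what Lemma~\ref{lem:kpsi} is for, but turning its asymptotic statement about the eigenvector into a statement about $\|\ket{\delta}\|^2$ requires knowing how $\kpiz$ decomposes across \emph{all} the invariant pieces, which in turn leans on the Lemma~\ref{lem:claim1} estimates for $\proj_{kl}\kplus$ and $\proj_{kl}\kminus$ and on the flip-flop/eigenvector machinery of Appendix~\ref{app:flipflop}. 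A secondary subtlety is the effect of the $\cz$ basis change: one should check that $\cz$ fixes $\kselfloop$ (it does, by definition) so that producing $\cz\kselfloop$ in the $z$-basis is the same as producing $\kselfloop$, and that $\kpiz = \cz\kpi$ is the correct initial state for $\opW\opFo$ as opposed to $\kpi$ for $\opU$ — which is consistent with \myeqref{eq:U_decomp}. Once these are in place, the remainder is the routine rotation estimate sketched above.
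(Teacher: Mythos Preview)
Your plan has a genuine gap: the claim that $\kpiz$ lies almost entirely in the slowest rotational subspace of $\opW\opFo$ is false, and Lemma~\ref{lem:kpsi} cannot be used to show $\|\ket{\delta}\|^2 \in O(1/\log\nsize)$. In fact, the $(+1)$-eigenspace of $\opW\opFo$ carries exactly half of the weight of $\kpiz$. The vector $\kustat = \kpiz - \kselfloop$ is a $(+1)$-eigenvector of $\opW\opFo$ (Lemma~\ref{lem:ustat}), it is orthogonal to the slowest rotational subspace, and $\langle\ustat|\piz\rangle = 1$ while $\|\kustat\|^2 = 2$, so the projection of $\kpiz$ onto the $(+1)$-eigenspace has norm $1/\sqrt{2}$. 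Consequently your $\ket\delta$ has $\|\ket\delta\|^2 \approx 1/2$, and the ``antipodal point'' picture also breaks: rotating the slowest-subspace component by $\pi$ simply multiplies it by $-1$; it does not move from $\kpiz$ to $\kselfloop$ inside that subspace.

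The missing idea is the decomposition in \myeqref{eq:piz_decomp}: with $\sel = 1 - \tfrac{1}{\nsize+1}$ one has $\kpiz = \tfrac12\kustat + \tfrac12(\kpiz+\kselfloop)$. The first summand is fixed by $\opW\opFo$; Lemmas~\ref{lem:aco} and~\ref{lem:kpsi} (packaged as Lemma~\ref{lem:WFo_proj}) are used to show that the second summand, $\kpiz+\kselfloop$, lies in the slowest rotational subspace up to an $O(1/\log\nsize)$ error. After $k = \lfloor \pi/\phione\rfloor \in \Theta(\sqrt{\nsize\log\nsize})$ steps (note the rotation per step is $\phione$, not $2\phione$), the second summand acquires a factor $-1$, and the algebra $\tfrac12(\kpiz-\kselfloop) - \tfrac12(\kpiz+\kselfloop) = -\kselfloop$ is what produces the target state. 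Once you insert $\kustat$ and \myeqref{eq:piz_decomp} at the start, the rest of your rotation-and-rounding bookkeeping goes through essentially as you wrote it.
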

\thmsp

Comparing Corollary \ref{thm:intermediate} with Theorem \ref{thm:main_result} shows that most of the finding behaviour of the memoryless walk comes from the first reflection $\opFo$. In Section~\ref{sec:U}, we show that composing $\opW \opFo$ with the second reflection $\opFt$ only changes the behaviour of the walk slightly, leading to our proof of Theorem \ref{thm:main_result}.

Our proofs of both Corollary \ref{thm:intermediate} and our main theorem rely on the following observation.

\thmsp
\begin{lemma} \label{lem:ustat}
The (unnormalized) vector \begin{align}
\kustat = \kpiz - \sqrt{\frac{\sel}{(1-\sel)N}} \kselfloop
\end{align}
is a $(+1)$-eigenvector for each of $\opW$, $\opFo$, and $\opFt$.
\end{lemma}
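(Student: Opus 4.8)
The plan is to verify the three eigenvector claims one at a time, using the elementary fact that a vector $\kv$ is a $(+1)$-eigenvector of a reflection $\opid - 2\kouter{w}$ precisely when $\inner{w}{v} = 0$. The claim for $\opW$ needs no work: $\kpiz$ is a $(+1)$-eigenvector of $\opW$ by the Fact in Section~\ref{sec:W}, $\kselfloop$ is trivially one, and $\kustat$ is a linear combination of these two vectors.

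For $\opFo = \opid - 2\kouter{\fo}$ I would compute $\inner{\fo}{\ustat}$ directly from $\kfo = \frac{1}{\sqrt{4-3\sel}}\big(\sqrt{\sel}\kminus - 2\sqrt{1-\sel}\kselfloop\big)$, which requires only two overlaps. First, since $\cz$ fixes $\kselfloop$ and $\inner{\selfloop}{\pi}=0$, we have $\inner{\selfloop}{\piz}=0$, hence $\inner{\selfloop}{\ustat} = -\sqrt{\sel/((1-\sel)\nsize)}$. Second, taking $\kmarked = \ket{0,0}$ and $\keg = \ket{a_{00}}$ without loss of generality, and using that $\inner{i,j}{\piz} = \bra{i,j}\cz\kpi$ equals $-1/\sqrt{\nsize}$ when $i$ and $j$ are both even and $+1/\sqrt{\nsize}$ otherwise, one gets $\inner{\marked}{\piz} = -1/\sqrt{\nsize}$ and $\inner{\eg}{\piz} = \frac{1}{2}\big(-1/\sqrt{\nsize} + 3/\sqrt{\nsize}\big) = 1/\sqrt{\nsize}$; since $\kminus = \kmarked - \keg$ carries no selfloop component this gives $\inner{\minus}{\ustat} = \inner{\marked}{\piz} - \inner{\eg}{\piz} = -2/\sqrt{\nsize}$. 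Substituting, the $\kminus$- and $\kselfloop$-contributions cancel:
\[
\sqrt{4-3\sel}\,\inner{\fo}{\ustat} = \sqrt{\sel}\,\inner{\minus}{\ustat} - 2\sqrt{1-\sel}\,\inner{\selfloop}{\ustat} = -\frac{2\sqrt{\sel}}{\sqrt{\nsize}} + \frac{2\sqrt{\sel}}{\sqrt{\nsize}} = 0 .
\]

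For $\opFt = \opid - 2\kouter{\ft}$ with $\kft = \sin(2\fangle)\kplus + \cos(2\fangle)\kfo$, it is enough to show $\kustat$ is orthogonal to both $\kplus$ and $\kfo$, since $\kft \in \tspan\{\kplus,\kfo\}$. Orthogonality to $\kfo$ is the previous step, and orthogonality to $\kplus = \frac{1}{\sqrt{3}}(\kmarked + \keg)$ follows from $\inner{\plus}{\ustat} = \inner{\plus}{\piz} = \frac{1}{\sqrt{3}}\big(\inner{\marked}{\piz} + \inner{\eg}{\piz}\big) = \frac{1}{\sqrt{3}}\big(-1/\sqrt{\nsize} + 1/\sqrt{\nsize}\big) = 0$, again using that $\kplus$ has no selfloop component.

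I do not expect a real obstacle here: the computation is bookkeeping. The one substantive point is to notice that the coefficient $\sqrt{\sel/((1-\sel)\nsize)}$ multiplying $\kselfloop$ in $\kustat$ is tuned exactly so that the $\kminus$-contribution to $\inner{\fo}{\ustat}$ is cancelled; consequently the sign conventions for the $\cz$-action on $\kpi$ and for $\kplus$, $\kminus$ must be tracked carefully, as an error there would destroy the cancellation. Note also that the argument nowhere uses $\nrows = \ncols$, so the lemma holds for all even $\nrows, \ncols$, as claimed.
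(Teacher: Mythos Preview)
Your proposal is correct and follows essentially the same approach as the paper's proof: both verify $\opW$-invariance from linearity, compute $\inner{\fo}{\ustat}=0$ directly from the formula for $\kfo$, and deduce $\opFt$-invariance from $\kft \in \tspan\{\kplus,\kfo\}$ together with $\kplus \perp \kpiz,\kselfloop$. The only difference is that you unpack the overlaps $\inner{\marked}{\piz}$ and $\inner{\eg}{\piz}$ explicitly via the $\cz$-action, whereas the paper simply quotes $\inner{\piz}{\minus}$ and the orthogonality of $\kplus$ to $\kpiz$ without showing the arithmetic.
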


\begin{proof}
Both $\kpiz$ and $\kselfloop$ are $(+1)$-eigenvectors of $\opW$, so $\kustat$ is a $(+1)$-eigenvector of $\opW$. To show $\kustat$ is a $(+1)$-eigenvector of $\opFo$, we compute \begin{align*}
\inner{\ustat}{\fo} = \frac{2 \sqrt{\sel}}{\sqrt{N (4 - 3\sel)}} + \frac{\sqrt{\sel}}{\sqrt{4 - 3\sel}} \inner{\piz}{\minus} = 0.
\end{align*}
Finally, $\kplus$ is orthogonal to both $\kpiz$ and $\kselfloop$, so $\kustat$ is orthogonal to $\kft$. Therefore, $\kustat$ is also a $(+1)$-eigenvector of $\opFt$.
\end{proof}

For the rest of the paper, we assume a square grid, so $\nrows = \ncols = \sqrt{\nsize}$. We also fix $\sel = 1 - \frac{1}{N+1}$. This choice of selfloop weight means $\kustat = \kpiz - \kselfloop$, so we can decompose the initial state $\kpiz$ as \begin{align}
\kpiz = \frac{1}{2} \kustat + \frac{1}{2} \Big( \kpiz + \kselfloop \Big). \label{eq:piz_decomp}
\end{align}

We show in Section~\ref{sec:WFo_evec} that for our chosen $\sel$, $\kpiz + \kselfloop$ lies asymptotically in the slowest rotational subspace of $\opW \opFo$. Therefore, $\opW \opFo$ can be used to apply a negative phase to this portion of the initial state. This rotates the state $\kpiz$ to the state $\frac{1}{2} \kustat - \frac{1}{2} \big( \kpiz + \kselfloop \big) = -\kselfloop$, as we make precise in our proof of Corollary \ref{thm:intermediate}.

When $\sel = 1 - \frac{1}{N+1}$, note that the vector $\kfo$ has the form
\begin{align}
\kfo &= \sqrt{\frac{N}{N+4}} \kminus - \frac{2}{\sqrt{N+4}} \kselfloop. \label{eq:kfo_decomp}
\end{align}

\subsection{Smallest eigenphase of $\opW \opFo$} \label{sec:WFo_angle}

We choose the operator $\opW \opFo$ as our intermediate step in the analysis of $\opW \opF$ because it is the composition of a well-characterized real operator with a one-dimensional reflection. This allows us to apply results from the literature about operators of this type. Here, we show how these results can be used to obtain a tight bound on the smallest positive eigenphase of $\opW \opFo$.

An overview of the applied results is given in Appendix~\ref{app:flipflop}. 

\thmsp
\begin{lemma} \label{lem:WFo_ephase}
The smallest positive eigenphase $\phione$ of $\opW \opFo$ satisfies $\phione \in \Theta ( \frac{1}{\sqrt{\nsize \log \nsize}} )$.
\end{lemma}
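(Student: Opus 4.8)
The plan is to apply the spectral machinery for a real operator composed with a one-dimensional reflection (the flip-flop theorem of \cite{DH17b} and the eigenvector analysis of \cite{Amb07}, as summarized in Appendix~\ref{app:flipflop}) to $\opW\opFo$. Since $\opW$ is real with completely known spectrum (Section~\ref{sec:W}) and $\opFo = \opid - 2\kouter{\fo}$ is a reflection about a single vector, the eigenphases of $\opW\opFo$ are governed by a scalar secular equation in terms of the overlaps of $\kfo$ with the eigenspaces of $\opW$. Concretely, the smallest positive eigenphase $\phione$ should be characterized (up to the usual $\tan$/$\cot$ form) by an equation of the type
\begin{equation*}
\cot\!\Big(\tfrac{\phione}{2}\Big) \;=\; \sum_{kl} \|\proj_{kl}\kfo\|^2 \cot\!\Big(\tfrac{\theta_{kl}}{2}\Big)
\end{equation*}
(plus a contribution from the $(+1)$-eigenspace $\opW_{00}$), so the first task is to extract the weights $\|\proj_{kl}\kfo\|^2$. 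Using \myeqref{eq:kfo_decomp}, $\kfo$ is a fixed linear combination of $\kminus$ and $\kselfloop$; since $\kselfloop$ lies entirely in $\opW_{00}$, the weights $\|\proj_{kl}\kfo\|^2$ for $kl\neq 00$ are just $\frac{N}{N+4}\|\proj_{kl}\kminus\|^2$, which are given exactly by \myeqref{eq:kminusj} in Lemma~\ref{lem:claim1}: $\|\proj_{kl}\kminus\|^2 = 2\dim(\opW_{kl})/N$.

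Next I would estimate the secular sum. The key point is that $\|\proj_{00}\kfo\|^2$ is $\Theta(1)$ — the $\kselfloop$ component contributes $\frac{4}{N+4}$ and the $\kminus$ component contributes $\frac{N}{N+4}\cdot\frac{4}{N}$ by \myeqref{eq:kminuszero}, so roughly $\frac{8}{N+4} = \Theta(1/N)$; meanwhile the weight on the small-angle subspaces is also $\Theta(1)$. With $\theta_{kl} = \pm\,\mathrm{acos}(1-2p_{kl}^2)$ and $p_{kl}^2 = 1-\cos^2\tk\cos^2\tl$, the eigenphases near $0$ come from $k,l$ both small, where $\theta_{kl} \approx 2p_{kl} \approx 2\sqrt{\tk^2+\tl^2}$, i.e. $\theta_{kl} = \Theta(\sqrt{k^2+l^2}/\sqrt{N})$. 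Then $\cot(\theta_{kl}/2) = \Theta(\sqrt{N}/\sqrt{k^2+l^2})$, and summing the weighted cotangents over a $\sqrt{N}\times\sqrt{N}$ grid of $(k,l)$ gives, by the standard comparison with $\int\!\!\int \frac{r\,dr\,d\phi}{r}$ over a disk of radius $\sqrt{N}$, a total of $\Theta(\sqrt{N}\cdot\sqrt{N}) = \Theta(N)$ with a logarithmic correction — more carefully, the weighted sum $\sum \frac{\dim(\opW_{kl})}{N}\cdot\frac{\sqrt N}{\sqrt{k^2+l^2}}$ behaves like $\Theta(\sqrt{N}\log N)$ because the $1/\sqrt{k^2+l^2}$ density integrates to $\Theta(\sqrt N)$ but the relevant sum for the secular equation actually produces the $\log N$ (this is the same $\log N$ that appears throughout the spatial-search literature for the grid and matches \cite{APN15}). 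Balancing this against $\cot(\phione/2)\approx 2/\phione$ yields $\phione = \Theta(1/\sqrt{N\log N})$.

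The main obstacle, and where I would spend the most care, is the asymptotic evaluation of the secular sum: controlling the near-diagonal ($k,l$ small) terms where $\cot(\theta_{kl}/2)$ blows up, and separately bounding the bulk terms and any terms near $\theta_{kl}=\pi$ (the $k=\nrows/4$ or $l=\ncols/4$ subspaces), to show that the sum is $\Theta(\sqrt{N\log N})$ and not something larger or smaller. This requires a clean two-sided estimate: a lower bound by restricting to an annulus of $(k,l)$ with $1 \le \sqrt{k^2+l^2}\le \sqrt{N}$ and comparing to an integral, and an upper bound by splitting into the region near the origin (handled by the integral comparison, which converges to $\Theta(\sqrt{N\log N})$ once the $\dim(\opW_{kl})/N$ weights are inserted) and the region bounded away from $\theta=0,\pi$ (where $\cot$ is bounded and the total weight is $\le 1$). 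I would also need to verify that $\kfo$ has no component in any $(-1)$-eigenspace or, if it does, that such a component only shifts $\phione$ negligibly — this follows since the $(-1)$-eigenspaces live at $\theta_{kl}=\pi$ where $\cot(\theta/2)=0$, so they do not affect the secular equation's behaviour near $\phione\to 0$. Once the $\Theta(\sqrt{N\log N})$ estimate for the sum is in hand, the conclusion $\phione\in\Theta(1/\sqrt{N\log N})$ is immediate from the secular equation.
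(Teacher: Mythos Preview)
Your overall strategy matches the paper's: apply Theorem~\ref{lem:flipflop_interlace} and Lemma~\ref{lem:flipflop_constraints} to $\opW\opFo$, read off the weights $\|\proj_{kl}\kfo\|^2$ from Lemma~\ref{lem:claim1}, and estimate the resulting secular sum. But the secular equation you wrote down is wrong, and following it literally does not give the claimed answer.

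The equation produced by Lemma~\ref{lem:flipflop_constraints} is
\[
\|\proj_{00}\kfo\|^2\,\cot\!\Big(\tfrac{\phione}{2}\Big)\;+\;\sum_{kl\neq 00}\|\proj_{kl}\kfo\|^2\,\cot\!\Big(\tfrac{\phione-\theta_{kl}}{2}\Big)\;=\;0,
\]
with the sum taken over \emph{all} invariant subspaces, including those at $-\theta_{kl}$. Two things follow. First, the $\cot(\phione/2)$ term carries the weight $\|\proj_{00}\kfo\|^2=8/(N+4)\in\Theta(1/N)$, not $1$; you assert $\Theta(1)$ and then compute $\Theta(1/N)$, and it is precisely this $\Theta(1/N)$ that drives the argument. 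Second, because $\kfo$ is real, the nonzero eigenphases come in $\pm\theta_{kl}$ pairs with equal weight, so the leading contributions $\mp\cot(\theta_{kl}/2)$ cancel exactly; your displayed sum $\sum_{kl}\|\proj_{kl}\kfo\|^2\cot(\theta_{kl}/2)$ is therefore identically zero over the paired subspaces, and in any case the one-sided version $\sum(\dim/N)\cdot\sqrt{N}/\sqrt{k^2+l^2}$ is $\Theta(1)$, not $\Theta(\sqrt{N}\log N)$, by your own integral estimate.

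What actually survives after the $\pm\theta_{kl}$ cancellation is the next-order term $-\phione\,\csc^2(\theta_{kl}/2)$, so the correct balance is
\[
\Theta\!\Big(\tfrac{1}{N\,\phione}\Big)\;=\;\phione\sum_{kl\neq 00}\frac{\dim(\opW_{kl})}{N}\,\csc^2\!\Big(\tfrac{\theta_{kl}}{2}\Big)\;=\;\phione\cdot\Theta(\log N),
\]
using $\csc^2(\theta_{kl}/2)=1/p_{kl}^2=1/(1-\cos^2\tk\cos^2\tl)$ and the standard lattice estimate $\sum_{k,l}1/(k^2+l^2)=\Theta(\log N)$. This yields $\phione\in\Theta(1/\sqrt{N\log N})$. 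The paper packages this as Fact~\ref{lem:sum0}: it first uses Theorem~\ref{lem:flipflop_interlace} to get $\phione<\theta_{10}=4\pi/\sqrt{N}$, then rules out $\phione\in\Theta(1/\sqrt{N})$ via a lower bound on the sum, and finally solves the balance under $\phione\in o(1/\sqrt{N})$. Your proposal has all the right pieces, but with your equation taken at face value one would get at best $\cot(\phione/2)\le\cot(\theta_{10}/2)=\Theta(\sqrt{N})$, hence $\phione=\Theta(1/\sqrt{N})$, missing the $\log N$.
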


\begin{proof}
First, we note that the smallest positive eigenvalue of $\opW$ is given by \begin{align}
\theta_{10} = \,\textup{acos} \bigg( 2 \cos^2 \! \bigg( \frac{2\pi}{\sqrt{\nsize}} \bigg) - 1 \bigg) = \frac{4\pi}{\sqrt{\nsize}}.
\end{align}
We know by Lemma \ref{lem:claim1} and the decomposition in \myeqref{eq:kfo_decomp} that $\kfo$ satisfies \begin{align*}
\| \proj_{00} \kfo \|^2 &= \frac{8}{N+4},  \\[5pt]
\| \proj_\kl \kfo \|^2 &= \frac{2 \dim(\opW_\kl)}{N+4}     &\textup{ for all $k,l$ not both $0$}. 
\end{align*}
In particular, $\kfo$ overlaps all eigenspaces of $\opW$, so by Theorem \ref{lem:flipflop_interlace}, $\phione < \theta_{10} = \frac{4\pi}{\sqrt{\nsize}}$.

By Lemma \ref{lem:flipflop_constraints}, $\phione$ must also satisfy the equation
\begin{align} \label{eq:flipflop_phione}
\| \proj_{00} \kfo \|^2 \cot \! \bigg( \frac{\phione}{2} \bigg) + \sum_{\kl \neq 00} \| \proj_\kl \kfo \|^2  \cot \! \bigg( \frac{\phione - \theta_\kl}{2} \bigg) = 0,
\end{align}
where the sum is taken over all invariant subspaces of $\opW$ except the $(+1)$-eigenspace, $\opW_{00}$.
Note that the $(-1)$-eigenspace is included this sum.
Because $\phione > 0$ and $\phione \in o(1)$, we have
\begin{align*} 
\| \proj_{00} \kfo \|^2 \cot \! \bigg( \frac{\phione}{2} \bigg) \in \Theta \Big( \frac{1}{\phione \nsize} \Big).
\end{align*}
Therefore, for \myeqref{eq:flipflop_phione} to hold, it must be the case that \begin{align} \label{eq:phione_sum}
\sum_{\kl \neq 00} \| \proj_\kl \kfo \|^2 \cot \! \bigg(\frac{\theta_\kl - \phione}{2}\bigg) \in \Theta \Big( \frac{1}{\phione \nsize} \Big).
\end{align}

We argue that there cannot be a solution $\phione \in \Theta(\frac{1}{\sqrt{\nsize}})$. By Fact \ref{lem:sum0}, we know that for such a $\phione$, the sum in \myeqref{eq:phione_sum} has order $\Omega(\frac{\log \nsize}{\sqrt{\nsize}})$. Therefore, it must be the case that $\phione \in o(\frac{1}{\sqrt{\nsize}})$.

By Fact \ref{lem:sum0}, we also know that if $\phione \in o(\frac{1}{\sqrt{\nsize}})$, then the sum in \myeqref{eq:phione_sum} has order $\Theta(\phione \log \nsize)$. Due to the requirement $\phione \log \nsize \in \Theta(\frac{1}{\phione \nsize})$, the only possible solution is $\phione \in \Theta ( \frac{1}{\sqrt{\nsize \log \nsize}} )$, as stated.
\end{proof}

\subsection{Slowest eigenvector of $\opW \opFo$} \label{sec:WFo_evec}

In this section, we use a constraint-solving approach to analyse the eigenvector of $\opW \opFo$ associated with eigenphase $\phione$. By determining its asymptotic behaviour, we show that the $\kpiz + \kselfloop$ component of \myeqref{eq:piz_decomp} lies in the span of this eigenvector and its conjugate. This shows that $\opW \opFo$ can be applied to rotate the initial state $\kpiz$ to the target state $\kselfloop$.

Define $\kminusp = \kminus + \frac{2}{\sqrt{N}} \kpiz$
to be the (unnormalized) component of $\kminus$ that is orthogonal to $\kpiz$. Note that by Lemma \ref{lem:claim1}, this vector is orthogonal to $(+1)$-eigenspace of $\opW$.

Let $\evec$ be an unnormalized eigenvector of $\opW \opFo$ with eigenphase $\alpha \neq 0, \pi$ and $\inner{\tevec}{\piz} \neq 0$, scaled such that $\inner{\tevec}{\piz} = \frac{1}{2}$. Because $\evec$ is perpendicular to $\kustat$, this implies $\inner{\tevec}{\selfloop} = \frac{1}{2}$. We decompose $\evec$ as 
\begin{align}
\evec =  \aco \kminusp + \frac{1}{2} \kpiz + \frac{1}{2} \kselfloop + \kpsi, \label{eq:evec_def}
\end{align}
where $\kpsi$ is an unnormalized vector orthogonal to $\kminusp$. By analysing the asymptotic behaviour of $\aco$ and $\kpsi$, we show that the real part of $\evec$ tends to $\frac{1}{2} (\kpiz + \kselfloop)$ when $\alpha = \phione$.

Note that any eigenvector of $\opW$ with eigenphase $\theta_\kl$ that is orthogonal to $\kfo$ is also an eigenvector of $\opW \opFo$ with eigenphase $\theta_\kl$. Therefore, $\proj_\kl \kpsi$ is some scalar multiple of $\proj_\kl \kminusp$ for each $k,l$. We determine this scalar factor in Lemma~\ref{lem:WFo_constraints}.

We further decompose both $\kminusp$ and $\kpsi$ into the invariant subspaces of $\opW$. Both $\kminusp$ and $\kpsi$ are orthogonal to the $(+1)$-eigenspace $\opW_{00}$, so we write the decomposition as
\begin{align}
\kminusp &= \sum_{kl \neq 00} \minusj \kminusj, \\
\kpsi &= \sum_{kl \neq 00} \kpsij,
\end{align} 
where the vectors $\kminusj$ are normalized for all $k,l$. We know by Lemma \ref{lem:claim1} that $\minusj = \sqrt{\frac{2 \dim(\opW_{kl})}{N}}$. The vectors $\kpsij$ in the decomposition of $\kpsi$ are unnormalized.

\thmsp
\begin{lemma} \label{lem:WFo_constraints}
The following equations must be satisfied.
\begin{align} \label{eq:WFo_constraint1}
\frac{8 \aco (N-4)}{\sqrt{N}(N+4)} - \frac{16}{N+4} = e^{\imath \alpha} - 1
\end{align}
\begin{align} \label{eq:WFo_constraint2}
\inner{\tminusj}{\tpsij} = \minusj \Bigg[ \aco - \frac{\sqrt{N}}{4} (e^{\imath \alpha} - 1) \bigg( \frac{1}{1 - e^{\imath (\alpha - \theta_\kl)}} \bigg) \Bigg] \quad \textup{ for all $k,l$ not both $0$}.
\end{align}
\end{lemma}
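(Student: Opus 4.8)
The plan is to extract the two constraint equations directly from the eigenvector relation $\opW \opFo \evec = e^{\imath\alpha} \evec$ by examining how it acts on the decomposition \eqref{eq:evec_def}. First I would use the fact that $\opFo = \opid - 2\kouter{\fo}$ is a one-dimensional reflection, so $\opFo \evec = \evec - 2\inner{\fo}{\tevec}\kfo$. The eigenvalue equation becomes $\opW(\evec - 2\inner{\fo}{\tevec}\kfo) = e^{\imath\alpha}\evec$, equivalently $(\opW - e^{\imath\alpha}\opid)\evec = 2\inner{\fo}{\tevec}\,\opW\kfo$. The key observation (already noted in the text before the lemma) is that each invariant subspace $\opW_{kl}$ of $\opW$ is $\opW$-invariant with eigenphase $\theta_{kl}$, and $\kfo \in \tspan(\kminus,\kselfloop)$, so I can project this identity onto each $\opW_{kl}$ and onto the $(+1)$-eigenspace $\opW_{00}$ separately. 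The scalar $\inner{\fo}{\tevec}$ needs to be computed from the decomposition \eqref{eq:evec_def} using \eqref{eq:kfo_decomp}: since $\kfo = \sqrt{N/(N+4)}\kminus - (2/\sqrt{N+4})\kselfloop$, and $\kminus = \kminusp - \frac{2}{\sqrt N}\kpiz$, I get $\inner{\fo}{\tevec}$ as an explicit combination of $\aco$, $\inner{\tevec}{\piz}=\tfrac12$, $\inner{\tevec}{\selfloop}=\tfrac12$, and the norm $\|\kminusp\|^2 = \sum_{kl\neq 00}\minusj^2 = \sum_{kl\neq 00}\frac{2\dim(\opW_{kl})}{N}$ (which telescopes using Lemma \ref{lem:claim1} and the dimension count of the subspaces, giving something like $\|\kminusp\|^2 = 2 - \frac{4}{N}$ since the total $\kminus$ norm over all subspaces including $\opW_{00}$ is controlled).

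For \eqref{eq:WFo_constraint1}, I would project the identity $(\opW - e^{\imath\alpha}\opid)\evec = 2\inner{\fo}{\tevec}\,\opW\kfo$ onto the $(+1)$-eigenspace $\opW_{00}$. On the left, $\proj_{00}\evec = \tfrac12\kpiz + \tfrac12\kselfloop$ (since $\kminusp$ and $\kpsi$ are orthogonal to $\opW_{00}$), and $\opW$ acts as $\opid$ there, so the left side is $(1 - e^{\imath\alpha})(\tfrac12\kpiz+\tfrac12\kselfloop)$. On the right, $\opW\proj_{00}\kfo = \proj_{00}\kfo$, which by \eqref{eq:kfo_decomp} and Lemma \ref{lem:claim1} has a known expression; comparing the coefficient of $\kpiz$ (or equivalently the squared norm, or an inner product with a fixed vector in $\opW_{00}$) yields \eqref{eq:WFo_constraint1} after substituting the explicit value of $\inner{\fo}{\tevec}$ in terms of $\aco$ and simplifying the arithmetic in $N$. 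The factor $\frac{8\aco(N-4)}{\sqrt N(N+4)} - \frac{16}{N+4}$ on the left-hand side is exactly what this bookkeeping should produce.

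For \eqref{eq:WFo_constraint2}, I would instead project onto a single invariant subspace $\opW_{kl}$ with $kl \neq 00$, where $\opW$ acts as multiplication by $e^{\imath\theta_{kl}}$. The left side becomes $(e^{\imath\theta_{kl}} - e^{\imath\alpha})\proj_{kl}\evec = (e^{\imath\theta_{kl}} - e^{\imath\alpha})(\aco\,\minusj\kminusj + \kpsij)$, using that $\proj_{kl}\kpiz = \proj_{kl}\kselfloop = 0$ for $kl\neq 00$. The right side is $2\inner{\fo}{\tevec}\,e^{\imath\theta_{kl}}\proj_{kl}\kfo$, and since $\proj_{kl}\kfo = \sqrt{N/(N+4)}\proj_{kl}\kminus = \sqrt{N/(N+4)}\proj_{kl}\kminusp = \sqrt{N/(N+4)}\,\minusj\kminusj$ (as $\kselfloop$ lives in $\opW_{00}$ and $\kminus$, $\kminusp$ agree outside $\opW_{00}$), both sides are proportional to $\kminusj$. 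Taking the inner product with $\kminusj$ and solving for $\inner{\tminusj}{\tpsij}$ gives an expression of the form $\minusj[\aco - (\text{stuff})\cdot\frac{1}{1 - e^{\imath(\alpha-\theta_{kl})}}]$; I would then verify that "stuff" simplifies to $\frac{\sqrt N}{4}(e^{\imath\alpha}-1)$ by plugging in $\inner{\fo}{\tevec}$ expressed via \eqref{eq:WFo_constraint1}, which conveniently repackages $2\inner{\fo}{\tevec}\sqrt{N/(N+4)}$ into $(e^{\imath\alpha}-1)$ up to the right normalization.

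The main obstacle I anticipate is the bookkeeping of normalization constants—specifically computing $\inner{\fo}{\tevec}$ and $\|\kminusp\|^2$ correctly, handling the factors of $\sqrt{N/(N+4)}$ and $\sqrt{\sel}$ versus the $\sel = 1 - \frac{1}{N+1}$ substitution, and confirming that the $\kminus$-versus-$\kminusp$ distinction (i.e., the piece of $\kminus$ living in $\opW_{00}$, which has squared norm $\frac{4}{N}$ by \eqref{eq:kminuszero}) is tracked consistently. Once $\inner{\fo}{\tevec}$ is pinned down, everything reduces to matching coefficients subspace-by-subspace, and the structural input—that $\opW$ is block-diagonal over the $\opW_{kl}$ and that $\kfo$ only touches $\kminus$ and $\kselfloop$—does all the real work.
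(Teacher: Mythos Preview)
Your proposal is correct and follows essentially the same approach as the paper: expand the eigenvalue equation $\opW\opFo\evec = e^{\imath\alpha}\evec$ using $\opFo = \opid - 2\kouter{\fo}$, compute the scalar $\inner{\fo}{\tevec}$ explicitly in terms of $\aco$, and then project onto $\opW_{00}$ (comparing the $\kpiz$-coefficient) for \eqref{eq:WFo_constraint1} and onto each $\opW_{kl}$ for \eqref{eq:WFo_constraint2}, finally using the first constraint to simplify the second. The paper packages the same computation via auxiliary scalars $\gamma_-,\gamma_*$ rather than your rearranged form $(\opW - e^{\imath\alpha}\opid)\evec = 2\inner{\fo}{\tevec}\,\opW\kfo$, but this is purely cosmetic. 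One small correction to your bookkeeping: $\|\kminusp\|^2 = 1 - \tfrac{4}{N}$, not $2 - \tfrac{4}{N}$ (since $\kminus$ is a unit vector and $\|\proj_{00}\kminus\|^2 = \tfrac{4}{N}$); with that fix your computation of $\inner{\fo}{\tevec}$ lands exactly on the paper's value $\tfrac{1}{\sqrt{N+4}}\bigl(\tfrac{\aco(N-4)}{\sqrt N}-2\bigr)$.
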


\begin{proof}
By definition, $\evec$ is an eigenvector of $\opW \opFo$ with eigenphase $\alpha$. We obtain the lemma by expanding the equation $\opW \opFo \evec = e^{\imath \alpha} \evec$ and solving for constraints.

Observe that
\begin{align*}
\inner{\fo}{\tevec} = \frac{1}{\sqrt{N+4}} \bigg( \frac{\aco (N-4)}{\sqrt{N}} - 2 \bigg).
\end{align*}
Using this property, we compute \begin{align*}
\opW \opFo \evec &= \opW \evec - 2 \opW \inner{\fo}{\tevec} \kfo \\
&= \gamm \opW \kminusp + \gams (\kpiz + \kselfloop) + \opW \kpsi,
\end{align*}
where \begin{align*}
\gamm &= \aco - \frac{2}{N+4} \Big( \aco (N-4) - 2 \sqrt{N} \Big), \\
\gams &= \frac{1}{2} +  \frac{4}{\sqrt{N}(N+4)} \Big( \aco (N-4) - 2 \sqrt{N} \Big).
\end{align*}

Setting $\opW \opFo \evec = e^{\imath \alpha} \evec$ and comparing coefficients on $\kpiz$, we get \begin{align*}
\gams = \frac{1}{2} e^{\imath \alpha},
\end{align*}
which can be expanded to give \myeqref{eq:WFo_constraint1}.

To get \myeqref{eq:WFo_constraint2}, we first solve $\opW \opFo \evec = e^{\imath \alpha} \evec$ on the subspace $\opW_\kl$ to get \begin{align}
\inner{\tminusj}{\tpsij} = \minusj \bigg( \frac{\aco e^{\imath \alpha} - \gamm e^{\imath \theta_\kl}}{e^{\imath \theta_\kl} - e^{\imath \alpha}} \bigg). \label{eq:kminus_kpsi_rot}
\end{align}
Next, we use \myeqref{eq:WFo_constraint1} to rewrite $\gamm$ as \begin{align*}
\gamm = \aco - \frac{\sqrt{N}}{4} (e^{\imath \alpha} - 1).
\end{align*}
Substituting this expression for $\gamm$ into \myeqref{eq:kminus_kpsi_rot} produces \myeqref{eq:WFo_constraint2}.
\end{proof}

Now, fix $\evec$ to be the eigenvector with eigenphase $\phione$. By Lemma \ref{lem:flipflop_constraints}, we know that $\inner{\tevec}{\piz} \neq 0$, so the constraints given in Lemma~\ref{lem:WFo_constraints} apply. These constraints, together with the bound on $\phione$ from Lemma \ref{lem:WFo_ephase}, define asymptotic bounds on $\aco$ and the real part of $\kpsi$. We use this to show that as $\nsize$ increases, the real part of $\evec$ converges to $\frac{1}{2} \kpiz + \kselfloop$.
 
Let $\evecb$ denote the entrywise conjugate of $\evec$. Then $\evec$ and $\evecb$ span the slowest rotational subspace of $\opW \opFo$. In this way, the following lemmas provide a close description of the spanning eigenvectors for the slowest rotational subspace of $\opW \opFo$.

\thmsp
\begin{lemma} \label{lem:aco}
Let $\alpha = \phione$. Then $|\aco| \in \Theta ( \frac{1}{\sqrt{\log N}} )$.
\end{lemma}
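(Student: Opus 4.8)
The plan is to estimate $\aco$ directly from the two constraints of Lemma~\ref{lem:WFo_constraints} together with the eigenphase bound $\phione \in \Theta(\frac{1}{\sqrt{N\log N}})$ from Lemma~\ref{lem:WFo_ephase}. First I would use \myeqref{eq:WFo_constraint1}, which reads $\frac{8\aco(N-4)}{\sqrt N (N+4)} - \frac{16}{N+4} = e^{\imath\phione}-1$. Since $\phione \in o(1)$ we have $e^{\imath\phione}-1 = \imath\phione + O(\phione^2)$, so this constraint alone pins down $\aco$ up to a multiplicative $(1+o(1))$ in terms of $\phione$: solving gives $\aco = \frac{\sqrt N}{8}(e^{\imath\phione}-1)(1+o(1)) + \frac{2}{\sqrt N}$, hence $|\aco| \in \Theta(\sqrt N \phione) + \Theta(\frac{1}{\sqrt N})$. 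The second term is negligible next to the first precisely when $\sqrt N\phione$ dominates $\frac{1}{\sqrt N}$, i.e. when $\phione \gg \frac1N$, which holds since $\phione \in \Theta(\frac{1}{\sqrt{N\log N}})$. Plugging in the bound on $\phione$ then yields $|\aco| \in \Theta\big(\sqrt N \cdot \frac{1}{\sqrt{N\log N}}\big) = \Theta(\frac{1}{\sqrt{\log N}})$, as claimed.

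The subtlety I anticipate — and the main obstacle — is that \myeqref{eq:WFo_constraint1} is one complex equation with $\aco$ complex, so it does \emph{not} by itself determine $\aco$; it only fixes the combination $\frac{8\aco(N-4)}{\sqrt N(N+4)}$, and since the coefficient of $\aco$ there is a positive real scalar, the equation in fact \emph{does} determine $\aco$ completely once $\phione$ is known. So the genuine work is verifying that the value of $\phione$ supplied by Lemma~\ref{lem:WFo_ephase} is consistent, i.e. that the $\evec$ we fixed (the eigenvector at eigenphase $\phione$) really does have $\inner{\tevec}{\piz}\neq 0$ so that the normalization $\inner{\tevec}{\piz}=\frac12$ and hence the constraints apply. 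This is exactly what Lemma~\ref{lem:flipflop_constraints} (the flip-flop machinery in Appendix~\ref{app:flipflop}) guarantees for the slowest eigenvector, so the argument goes through: one simply quotes that fact, then reads off $|\aco|$ from \myeqref{eq:WFo_constraint1}.

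Concretely the steps are: (i) record $e^{\imath\phione}-1 = \imath\phione(1+o(1))$ using $\phione\in o(1)$; (ii) invoke Lemma~\ref{lem:flipflop_constraints} to legitimize the use of Lemma~\ref{lem:WFo_constraints} for $\alpha=\phione$; (iii) solve \myeqref{eq:WFo_constraint1} for $\aco$, obtaining $\aco = \frac{\sqrt N(N+4)}{8(N-4)}\big(e^{\imath\phione}-1 + \frac{16}{N+4}\big)$; (iv) note the prefactor $\frac{\sqrt N(N+4)}{8(N-4)} = \frac{\sqrt N}{8}(1+o(1))$, so $|\aco| = \frac{\sqrt N}{8}\big|e^{\imath\phione}-1 + \frac{16}{N+4}\big|(1+o(1))$; (v) bound $\big|e^{\imath\phione}-1 + \frac{16}{N+4}\big| \in \Theta(\phione) + \Theta(\frac1N) = \Theta(\phione)$, using $\phione \in \Theta(\frac{1}{\sqrt{N\log N}}) \gg \frac1N$; (vi) conclude $|\aco| \in \Theta(\sqrt N \phione) = \Theta(\frac{1}{\sqrt{\log N}})$. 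The only calculation requiring care is step (v), where one must check the two contributions $\imath\phione$ and $\frac{16}{N+4}$ do not cancel and that the former dominates — both immediate from the magnitude of $\phione$.
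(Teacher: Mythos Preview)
Your proposal is correct and follows essentially the same approach as the paper: both solve \myeqref{eq:WFo_constraint1} for $\aco$ and then plug in the bound $|e^{\imath\phione}-1|\in\Theta(\frac{1}{\sqrt{N\log N}})$ from Lemma~\ref{lem:WFo_ephase}. The paper's proof is a two-sentence sketch of exactly the computation you spell out in steps (iii)--(vi), and the paper handles your ``subtlety'' about $\inner{\tevec}{\piz}\neq 0$ in the paragraph preceding the lemma (also by invoking Lemma~\ref{lem:flipflop_constraints}), so there is no genuine difference in method.
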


\begin{proof}
By Lemma \ref{lem:WFo_ephase}, we know that $|e^{\imath \phione} - 1| \in \Theta ( \frac{1}{\sqrt{N \log N}} )$. Applying this to \myeqref{eq:WFo_constraint1} produces the stated bound.
\end{proof}

\thmsp
\begin{lemma} \label{lem:kpsi}
Let $\alpha = \phione$. Let $\kpsi = \repsi + \imath \impsi$, where both $\repsi$ and $\impsi$ are vectors with real entries. Then $\| \repsi \| \in O(\frac{1}{\sqrt{ \log N}})$.
\end{lemma}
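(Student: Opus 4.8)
The plan is to extract the real part of the constraint equation \eqref{eq:WFo_constraint2} and bound it subspace-by-subspace, then sum over all invariant subspaces of $\opW$. Since $\proj_\kl\kpsi$ is a scalar multiple of $\proj_\kl\kminusp$ (by the remark preceding Lemma~\ref{lem:WFo_constraints}), we have $\kpsij = c_\kl\,\minusj\kminusj$ where $c_\kl = \inner{\tminusj}{\tpsij}/\minusj$, so $\|\proj_\kl\kpsi\|^2 = |c_\kl|^2\minusj^2 = |c_\kl|^2\cdot\frac{2\dim(\opW_\kl)}{N}$. The real part of $\kpsi$ then satisfies $\|\repsi\|^2 \le \|\kpsi\|^2 = \sum_{\kl\neq 00}|c_\kl|^2\cdot\frac{2\dim(\opW_\kl)}{N}$; actually to get the sharper $O(1/\sqrt{\log N})$ bound rather than something worse I would instead bound $\|\repsi\|^2 = \sum_{\kl}\|\Re(\proj_\kl\kpsi)\|^2 = \sum_\kl (\Re c_\kl)^2\cdot\frac{2\dim(\opW_\kl)}{N}$ (using that each $\proj_\kl\kminusp$ has real entries up to a global constant, so taking real part commutes with the subspace decomposition in the appropriate sense), and estimate $\Re c_\kl$ carefully.

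The next step is to compute $\Re c_\kl$ from \eqref{eq:WFo_constraint2}. Writing $\frac{1}{1-e^{\imath(\alpha-\theta_\kl)}} = \frac{1}{2}\bigl(1 + \imath\cot(\frac{\alpha-\theta_\kl}{2})\bigr)$, we get
\begin{align*}
c_\kl = \aco - \frac{\sqrt N}{4}(e^{\imath\phione}-1)\cdot\frac{1}{2}\Bigl(1+\imath\cot\tfrac{\phione-\theta_\kl}{2}\Bigr).
\end{align*}
Using $e^{\imath\phione}-1 = \imath\phione + O(\phione^2)$ and $\phione\in\Theta(1/\sqrt{N\log N})$, the factor $\frac{\sqrt N}{4}(e^{\imath\phione}-1)\in\Theta(1/\sqrt{\log N})$ in magnitude and is asymptotically purely imaginary. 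So the real part of $c_\kl$ is $\Re\aco$ plus a term of order $\frac{1}{\sqrt{\log N}}\cdot\cot(\frac{\phione-\theta_\kl}{2})$ coming from the $\imath\cdot\imath\cot$ product, plus lower-order pieces. Then
\begin{align*}
\|\repsi\|^2 \lesssim \sum_{\kl\neq 00}\Bigl((\Re\aco)^2 + \frac{1}{\log N}\cot^2\tfrac{\phione-\theta_\kl}{2}\Bigr)\frac{\dim(\opW_\kl)}{N}.
\end{align*}
Since $\sum_\kl\dim(\opW_\kl)\le N$, the first term contributes $(\Re\aco)^2\in O(1/\log N)$ by Lemma~\ref{lem:aco}. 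For the second term I need $\frac{1}{N}\sum_{\kl\neq 00}\dim(\opW_\kl)\cot^2(\frac{\theta_\kl-\phione}{2})\in O(1)$; this should follow from the same eigenvalue-distribution estimates packaged in Fact~\ref{lem:sum0} (the one already invoked in Lemma~\ref{lem:WFo_ephase}), possibly needing the squared version, since $\dim(\opW_\kl)/N \approx \|\proj_\kl\kfo\|^2$ up to constants and $\cot^2(\theta/2)\sim 4/\theta^2$ for small $\theta$, and the density of eigenphases $\theta_\kl$ near $0$ on the $\sqrt N\times\sqrt N$ torus is controlled.

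I would organize the proof as: (i) recall $\proj_\kl\kpsi = c_\kl\proj_\kl\kminusp$ and hence $\|\repsi\|^2 = \sum_\kl(\Re c_\kl)^2\minusj^2$; (ii) plug the small-$\phione$ expansion into \eqref{eq:WFo_constraint2} to get $\Re c_\kl = \Re\aco + O\!\bigl(\frac{1}{\sqrt{\log N}}|\cot\frac{\phione-\theta_\kl}{2}| + \frac{1}{\sqrt{\log N}}\bigr)$; (iii) split the sum, bound the $\aco$-part via Lemma~\ref{lem:aco} and $\sum\minusj^2\le 2$, and bound the $\cot$-part using the (squared) eigenphase-density estimate. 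The main obstacle I expect is step (iii)'s $\cot^2$ sum: one must be careful that eigenphases $\theta_\kl$ very close to $\phione$ do not blow up the sum. The resolution is that the interlacing property (Theorem~\ref{lem:flipflop_interlace}) keeps $\phione$ bounded away from the smallest nonzero $\theta_\kl = \theta_{10} = 4\pi/\sqrt N$ by a $\Theta$-factor — indeed $\phione\in o(1/\sqrt N)$ while $\theta_{10}\in\Theta(1/\sqrt N)$ — so $|\theta_\kl - \phione|\gtrsim\theta_\kl$ for all $\kl\neq 00$, reducing the $\cot^2$ sum to $\sum\minusj^2/\theta_\kl^2$, which is exactly the kind of sum Fact~\ref{lem:sum0} (or its proof) handles and which is $O(N)$, giving the desired $O(1/\log N)$ after the $1/N$ and $1/\log N$ prefactors.
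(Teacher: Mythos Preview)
Your step (i) is where the argument breaks. The vectors $\kminusj = \proj_\kl\kminusp/\minusj$ are \emph{not} real, not even up to a global phase: each invariant subspace $\opW_\kl$ with $\kl\neq 00$ lies entirely in a single complex eigenspace of $\opW$ (eigenvalue $e^{\imath\theta_\kl}$), and the projection of a real vector onto such an eigenspace is generically complex. Consequently $\Re(c_\kl\minusj\kminusj)\neq(\Re c_\kl)\minusj\kminusj$, and the identity $\|\repsi\|^2=\sum_\kl(\Re c_\kl)^2\minusj^2$ is false. What is true is that the conjugate subspace $\overline{\opW_\kl}$ (with eigenphase $-\theta_\kl$) carries the entrywise conjugate $\kminusjb$; taking the real part therefore requires pairing $c_\kl\kminusj$ with $c_{\overline\kl}\,\kminusjb$. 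When you do this correctly, the quantity that appears is $|c_\kl+\overline{c_{\overline\kl}}|^2$, and its dominant cotangent contribution is the \emph{difference} $\cot\!\big(\tfrac{\theta_\kl-\phione}{2}\big)-\cot\!\big(\tfrac{\theta_\kl+\phione}{2}\big)$, not a single $\cot$. This difference is small (of order $\phione/\theta_\kl^2$), and summing its square is exactly what Fact~\ref{lem:sum4} controls: $\sum_\kl\dim(\opW_\kl)\big[\cot(\tfrac{\theta_\kl+\phione}{2})-\cot(\tfrac{\theta_\kl-\phione}{2})\big]^2\in\Theta(\phione^2 N^2)=\Theta(N/\log N)$, yielding the needed $O(1/\log N)$ after the prefactors. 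The paper carries this out by writing $\kpsi=\aco\kminusp+\rho\kv$, splitting $\kv=\tfrac12\kminusp-\tfrac{\imath}{2}\kvim$, and then decomposing $\kvim=\kvone+\kvtwo+\kvthree$ so that the large individual cotangents combine into the purely imaginary $\kvone$ while only the differences survive in $\kvtwo$.

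There is also a secondary arithmetic slip. Your estimate ``$\frac{1}{N}\sum_{\kl\neq 00}\dim(\opW_\kl)\cot^2\!\big(\tfrac{\theta_\kl-\phione}{2}\big)\in O(1)$'' is off by a factor of $\log N$: since $\cot^2(\theta_\kl/2)+1=(1-\cos^2\tilde k\cos^2\tilde l)^{-1}$, the proof of Fact~\ref{lem:sum0} gives $\sum_{\kl\neq 00}\dim(\opW_\kl)\cot^2(\theta_\kl/2)\in\Theta(N\log N)$, hence $\frac{1}{N}\sum\dim\cdot\cot^2\in\Theta(\log N)$. So even taking your (incorrect) formula at face value, your bound on $\|\repsi\|^2$ would be $\frac{1}{\log N}\cdot\Theta(\log N)=\Theta(1)$, not $O(1/\log N)$. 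The missing logarithm is precisely what the conjugate-pairing cancellation recovers.
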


\begin{proof}
From \myeqref{eq:WFo_constraint2}, we know that \begin{align*}
\kpsi &= \sum_{\kl \neq 00} \minusj \Bigg[ \aco - \frac{\sqrt{N}}{4} (e^{\imath \alpha} - 1) \bigg( \frac{1}{1 - e^{\imath (\alpha - \theta_\kl)}} \bigg) \Bigg] \kminusj \\
&= \aco \kminusp + \rho \kv, \label{eq:decomppsi}
\end{align*}
where we define \begin{align*}
\rho &= - \frac{\sqrt{N}}{4} (e^{\imath \alpha} - 1), \\
\kv &= \sum_{\kl \neq 00} \minusj \bigg( \frac{1}{1 - e^{\imath( \alpha - \theta_\kl)}} \bigg) \kminusj.
\end{align*}

We know from Lemma \ref{lem:aco} that $\|\ \aco \kminusp \| \in \Theta(\frac{1}{\sqrt{\log N}})$. Thus, it remains to consider $\rho \kv$.

Examining $\rho$ shows that for $\alpha \in \Theta(\frac{1}{\sqrt{ N \log N}})$, \begin{align*}
&|\Re(\rho)| \in \Theta \bigg( \frac{1}{\sqrt{N} \log N} \bigg), & |\Im(\rho)| \in \Theta \bigg( \frac{1}{\sqrt{\log N}} \bigg).
\end{align*}
Therefore, we can prove the lemma by showing that $\| \Re(\kv) \| \in O(\sqrt{\nsize \log \nsize})$ and $\| \Im(\kv) \| \in O(1)$.
Observe that for all $k,l$,

\begin{minipage}{.5\textwidth}
  \begin{align*}
  \Re \bigg( \frac{1}{1 - e^{\imath (\alpha - \theta_\kl)}} \bigg) = \frac{1}{2},
  \end{align*}
\end{minipage}
\begin{minipage}{.5\textwidth}
  \begin{align*}
  \Im \bigg( \frac{1}{1 - e^{\imath (\alpha - \theta_\kl)}} \bigg) = -\frac{1}{2} \cot \! \bigg( \frac{\theta_\kl - \alpha}{2} \bigg).
  \end{align*}
\end{minipage}

Using this property, we split the coefficients of $\kv$ into their real and imaginary parts, giving
\begin{align*}
\kv &= \frac{1}{2} \sum_{\kl \neq 00} \minusj \kminusj - \frac{\imath}{2} \sum_{\kl \neq 00} \minusj \cot \! \bigg( \frac{\theta_\kl - \alpha}{2} \bigg) \kminusj \\
&= \frac{1}{2} \kminusp - \frac{\imath}{2} \kvim, 
\end{align*}
where \begin{align*}
\kvim = \sum_{\kl \neq 00} \minusj \cot \! \bigg( \frac{\theta_\kl - \alpha}{2} \bigg) \kminusj.
\end{align*}
Note that $\kminusp$ is real-valued and has norm $\Theta(1)$.

We now bound the real and imaginary parts of $\kvim$.
Recall that for each subspace $\opW_\kl$ with eigenphase $0 < \theta_\kl < \pi$, there is a corresponding subspace with eigenphase $-\theta_\kl$, which we denote  $\overline{\opW_\kl}$. Because $\kminusp$ is real-valued, it must be the case that the normalized projection of $\kminusp$ onto $\overline{\opW_\kl}$ is $\kminusjb$, the entrywise conjugate of $\kminusj$. Using this property, we decompose $\kvim$ as 
\begin{align*}
\kvim &= \sum_{0 < \theta_\kl < \pi} \minusj \Bigg[ \cot \! \bigg( \frac{\theta_\kl - \alpha}{2} \bigg) \kminusj - \cot \! \bigg(\frac{\theta_\kl + \alpha}{2} \bigg) \kminusjb \Bigg] \notag \\
& \quad + \sum_{\theta_\kl = \pi} \minusj \cot \! \bigg(\frac{\theta_\kl - \alpha}{2} \bigg) \kminusj \\
&= \kvone + \kvtwo + \kvthree,
\end{align*}
where \begin{align*}
\kvone &= \sum_{0 < \theta_\kl < \pi} \minusj \Bigg[ \cot \! \bigg(\frac{\theta_\kl + \alpha}{2} \bigg) \kminusj - \cot \! \bigg(\frac{\theta_\kl + \alpha}{2} \bigg) \kminusjb \Bigg], \\
\kvtwo &= \sum_{0 < \theta_\kl < \pi} \minusj \Bigg[ \cot \! \bigg(\frac{\theta_\kl - \alpha}{2} \bigg) - \cot \! \bigg(\frac{\theta_\kl + \alpha}{2} \bigg) \Bigg] \kminusj, \\
\kvthree &= \sum_{\theta_\kl = \pi} \minusj \cot \! \bigg(\frac{\theta_\kl - \alpha}{2} \bigg) \kminusj.
\end{align*}
Note that the sums are taken over the invariant subspaces of $\opW$ whose eigenphases lie in the indicated range.
We bound the norms of these three components individually. First, observe that \begin{align*}
\kvthree =  \tan \! \bigg(\frac{\alpha}{2} \bigg) \! \sum_{\theta_\kl = \pi} \minusj \kminusj,
\end{align*}
where $\sum_{\theta_\kl = \pi} \minusj \kminusj$ is the projection of $\kminus$ onto the $(-1)$-eigenspace of $\opW$. Therefore, $\kvthree$ must be entirely real-valued, with norm $\| \kvthree \| \in O(\frac{1}{\sqrt{N \log N}})$ by Lemma \ref{lem:WFo_ephase}.

The vector $\kvone$ is entirely imaginary-valued, with norm \begin{align*}
\| \kvone \|^2 &= \sum_{0 < \theta_\kl < \pi} \minusj^2 \Bigg[ \cot \! \bigg(\frac{\theta_\kl + \alpha}{2} \bigg)^2 + \cot \! \bigg(\frac{\theta_\kl + \alpha}{2} \bigg)^2 \Bigg]   \\
&\leq \frac{16}{\nsize} \sum_{0 < \theta_\kl < \pi} \cot \! \bigg(\frac{\theta_\kl + \alpha}{2} \bigg)^2 \\
&= \frac{16}{\nsize} \sum_{0 < \theta_\kl < \pi} \! \Bigg( \frac{\cot(\frac{\theta_\kl}{2}) \cot(\frac{\alpha}{2}) - 1}{\cot(\frac{\alpha}{2}) + \cot(\frac{\theta_\kl}{2})} \Bigg)^2 \\
&\leq \frac{16}{\nsize} \cot^2 \! \bigg( \frac{\alpha}{2} \bigg) \sum_{0 < \theta_\kl < \pi} \! \Bigg( \frac{\cot(\frac{\theta_\kl}{2})}{\cot(\frac{\alpha}{2}) + \cot(\frac{\theta_\kl}{2})} \Bigg)^2.
\end{align*}
There are $O(\nsize)$ terms in the final sum, each of which is at most $1$, so $\| \kvone \|^2 \in O(\nsize \log \nsize)$.

Finally, the vector $\kvtwo$ has both real and imaginary parts, and has norm
\begin{align*}
\| \kvtwo \|^2 &= \sum_{0 < \theta_\kl < \pi} \minusj^2 \Bigg[ \cot \! \bigg(\frac{\theta_\kl + \alpha}{2} \bigg) - \cot \! \bigg(\frac{\theta_\kl - \alpha}{2} \bigg) \Bigg]^2 \\
&= \frac{2}{\nsize} \sum_{0 < \theta_\kl < \pi} \dim(\opW_\kl) \Bigg[ \cot \! \bigg(\frac{\theta_\kl + \alpha}{2} \bigg) - \cot \! \bigg(\frac{\theta_\kl - \alpha}{2} \bigg) \Bigg]^2 \\
&= \frac{1}{\nsize} \sum_{\kl \neq 00} \dim(\opW_\kl) \Bigg[ \cot \! \bigg(\frac{\theta_\kl + \alpha}{2} \bigg) - \cot \! \bigg(\frac{\theta_\kl - \alpha}{2} \bigg) \Bigg]^2.
\end{align*}
By Fact \ref{lem:sum4}, this implies $\| \kvtwo \|^2 \in O(\frac{1}{\log \nsize})$.

Combining the bounds on $\kvone$, $\kvtwo$ and $\kvthree$, we bound the norm of the real and imaginary parts of $\kvim$. Thus, \begin{align*}
&\|\Re(\kvim) \| \leq \| \kvthree \| + \| \kvtwo \| \in O \Big( \frac{1}{\sqrt{\log \nsize}} \Big), \\[1mm]
&\|\Im(\kvim) \| \leq \| \kvone \| + \| \kvtwo \| \in O(\sqrt{\nsize \log \nsize}).
\end{align*}

Because $\kv = \frac{1}{2} \kminusp - \frac{\imath}{2} \kvim$, this shows in particular that $\| \Re(\kv) \| \in O(\sqrt{\nsize \log \nsize})$ and $\| \Im(\kv) \| \in O(1)$. We combine this with the bounds on $\rho$ to obtain  $\| \repsi \| \in O(\frac{1}{\sqrt{ \log \nsize}})$ as stated.
\end{proof}

Lemmas \ref{lem:aco} and \ref{lem:kpsi} show that the real part of $\evec$ tends to $\frac{1}{2}(\kpiz + \kselfloop)$ as $\nsize$ increases. This implies that $\kpiz + \kselfloop$ lies asymptotically in the slowest rotational subspace of $\opW \opFo$. We make this precise in the following lemma.

\thmsp
\begin{lemma} \label{lem:WFo_proj}
Let $\proj_\phione$  denote the projection onto the slowest rotational subspace of $\opW \opFo$, which is spanned by the eigenvectors with eigenphases $\pm \phione$. Then \begin{align}
\Big \| \proj_\phione \Big( \kpiz + \kselfloop \Big) \Big \| = \sqrt{2} - O \Big( \frac{1}{\log N} \Big).
\end{align}
\end{lemma}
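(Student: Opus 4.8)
The plan is to express $\kpiz + \kselfloop$ in terms of the slowest eigenvector $\evec$ and its conjugate, and then bound the norm of the component orthogonal to $\tspan\{\evec, \evecb\}$. Recall from \eqref{eq:evec_def} that the scaled eigenvector decomposes as $\evec = \aco \kminusp + \frac{1}{2}\kpiz + \frac{1}{2}\kselfloop + \kpsi$, so that
\begin{align*}
\tfrac{1}{2}\big(\kpiz + \kselfloop\big) = \evec - \aco \kminusp - \kpsi.
\end{align*}
Taking the real part and using that $\kpiz$, $\kselfloop$, $\kminusp$ are all real-valued, we get $\frac{1}{2}(\kpiz + \kselfloop) = \Re(\evec) - \Re(\aco)\kminusp - \repsi$. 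So the target vector differs from $2\Re(\evec)$ — which lies in the slowest rotational subspace — by the error term $2\Re(\aco)\kminusp + 2\repsi$. Since $\|\kminusp\| \in \Theta(1)$ by Lemma~\ref{lem:claim1}, Lemma~\ref{lem:aco} gives $\|2\Re(\aco)\kminusp\| \in O(\frac{1}{\sqrt{\log N}})$, and Lemma~\ref{lem:kpsi} gives $\|2\repsi\| \in O(\frac{1}{\sqrt{\log N}})$. Hence $\big\|(\kpiz + \kselfloop) - 2\Re(\evec)\big\| \in O(\frac{1}{\sqrt{\log N}})$.

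Next I would control $\|2\Re(\evec)\|$ itself and confirm it is $\sqrt{2} - O(\frac{1}{\log N})$. Writing $\evec = \Re(\evec) + \imath\,\Im(\evec)$, the same identity shows $\Im(\evec) = \Im(\aco)\kminusp + \impsi$, and since $\inner{\tevec}{\piz} = \frac{1}{2}$ is real, $\inner{\Im(\evec)}{\piz} = 0$; combined with the $O$-bounds on $\Im(\aco)$ (which is $\Theta(1)$, so this needs a little care) and on $\impsi$ this is not automatically small, so instead I would compute $\|\kpiz + \kselfloop\|^2 = 2$ directly (both are unit vectors orthogonal to each other), and use $\big\|\proj_\phione(\kpiz+\kselfloop)\big\| \geq \|2\Re(\evec)\|_{\text{proj}} - \|\text{error}\|$, noting $2\Re(\evec)$ already lies \emph{inside} the slowest rotational subspace so it projects to itself. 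Thus $\big\|\proj_\phione(\kpiz+\kselfloop)\big\| \geq \|2\Re(\evec)\| - O(\frac{1}{\sqrt{\log N}})$, and the reverse triangle inequality with $\|\kpiz+\kselfloop\| = \sqrt{2}$ gives $\|2\Re(\evec)\| \geq \sqrt{2} - O(\frac{1}{\sqrt{\log N}})$. This would only yield an error of order $\frac{1}{\sqrt{\log N}}$, not $\frac{1}{\log N}$.

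To sharpen the bound to $O(\frac{1}{\log N})$ — the main obstacle — I would work with squared norms rather than norms. Write $\kpiz + \kselfloop = \proj_\phione(\kpiz+\kselfloop) + \kw$ with $\kw$ orthogonal to the slowest rotational subspace; then $\|\proj_\phione(\kpiz+\kselfloop)\|^2 = 2 - \|\kw\|^2$, and it suffices to show $\|\kw\|^2 \in O(\frac{1}{\log N})$. Now $\kw$ is the part of $\kpiz+\kselfloop = 2\evec - 2\aco\kminusp - 2\kpsi$ orthogonal to $\tspan\{\evec,\evecb\}$, i.e. $\kw = -2\,(\opid - \proj_\phione)(\aco\kminusp + \kpsi)$, whose norm is at most $2\|\aco\kminusp + \kpsi\|$. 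The point is that $\aco\kminusp + \kpsi = \aco\kminusp + \aco\kminusp + \rho\kv = 2\aco\kminusp + \rho\kv$ from the proof of Lemma~\ref{lem:kpsi}, and from that proof $\rho\kv = \frac{\rho}{2}\kminusp - \frac{\imath\rho}{2}\kvim$ where $\rho \in \Theta(\frac{1}{\sqrt{\log N}})$ in modulus but $\kvim$ splits into a large imaginary part $\kvone$ of norm $O(\sqrt{N\log N})$ and small pieces. The product $\rho \cdot \Im(\kvim)$ then has norm $O(1)$, which is too large — so the naive bound fails, and I must instead observe that $-\frac{\imath\rho}{2}\kvim$ is \emph{already} (approximately) the imaginary axis direction aligned with $\Im(\evec)$, hence nearly inside the slowest rotational subspace up to the genuinely orthogonal remainder. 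Concretely, I would argue that $\evec$ and $\evecb$ together span both $\Re(\evec)$ and $\Im(\evec)$, and that $(\opid - \proj_\phione)(\aco\kminusp + \kpsi)$ only picks up the genuinely off-subspace contributions, which are governed by $\|\kvtwo\|^2 \in O(\frac{1}{\log N})$ (Fact~\ref{lem:sum4}) and $\|\kvthree\|^2 \in O(\frac{1}{N\log N})$ and $\|\Re(\aco\kminusp)\|^2 \in O(\frac{1}{\log N})$, while the $\kvone$ part — being purely imaginary and proportional to a fixed real combination projected to $\overline{\opW_{kl}}$ vs $\opW_{kl}$ — lies essentially in the span of $\evec - \evecb$. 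Making this alignment rigorous is the crux: I expect to show $\kw = -2(\opid-\proj_\phione)\big(\Re(\aco)\kminusp + \repsi + \imath(\text{component of }\Im \text{ orthogonal to }\Im(\evec))\big)$ and that the last piece is again controlled by $\kvtwo$ and $\kvthree$, giving $\|\kw\|^2 \in O(\frac{1}{\log N})$ as required.
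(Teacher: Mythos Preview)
Your first paragraph is exactly right and is essentially the paper's starting point: you have
\[
\kpiz + \kselfloop \;=\; 2\Re(\evec) \;+\; e,\qquad e := 2\Re(\aco)\kminusp + 2\repsi,\quad \|e\|\in O\!\Big(\tfrac{1}{\sqrt{\log N}}\Big),
\]
with $2\Re(\evec) = \evec + \evecb \in \tspan\{\evec,\evecb\}$. The gap is that you then throw this away and try to bound norms with the triangle inequality, which costs you a square root. The one-line finish you are missing is Pythagoras: since $2\Re(\evec)$ already lies in the slowest rotational subspace,
\[
\kw \;=\; (\opid - \proj_\phione)(\kpiz+\kselfloop) \;=\; (\opid - \proj_\phione)(e),
\]
so $\|\kw\| \le \|e\| \in O(1/\sqrt{\log N})$, and therefore
\[
\big\|\proj_\phione(\kpiz+\kselfloop)\big\|^2 \;=\; 2 - \|\kw\|^2 \;=\; 2 - O\!\Big(\tfrac{1}{\log N}\Big),
\]
which gives the claimed $\sqrt{2} - O(1/\log N)$. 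The paper phrases the same idea slightly differently: it notes the \emph{exact} inner product $(\bra{\tevec}+\bra{\tevecb})(\kpiz+\kselfloop)=2$ and the norm $\|\evec+\evecb\| = \sqrt{2} + O(1/\log N)$ (the latter using that $e \perp (\kpiz+\kselfloop)$, since $\kminusp$ and $\kpsi$ are both orthogonal to $\opW_{00}$), then divides.

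Your third paragraph, where you switch to the complex decomposition $\kpiz+\kselfloop = 2\evec - 2\aco\kminusp - 2\kpsi$ and try to argue that the large imaginary piece $\kvone$ is ``aligned with $\Im(\evec)$'', is unnecessary and the alignment claim is not substantiated; the real decomposition from your first paragraph already does the job. Also a small slip: you write that $\Im(\aco)$ is $\Theta(1)$, but Lemma~\ref{lem:aco} gives $|\aco|\in\Theta(1/\sqrt{\log N})$, so $\Im(\aco)$ is $O(1/\sqrt{\log N})$ as well.
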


\begin{proof}
Let $\rea$ denote the real part of $\aco$. Observe that \begin{align*}
\evec + \evecb = \kpiz + \kselfloop + 2 \rea \kminusp + 2 \repsi.
\end{align*}
By Lemma \ref{lem:aco}, we have $|\rea| \in O(\frac{1}{\sqrt{\log N}})$, and by Lemma \ref{lem:kpsi} we have $\| \repsi \| \in O(\frac{1}{\sqrt{\log N}})$. Therefore, \begin{align*}
\| \evec + \evecb \| = \sqrt{2} + O \Big( \frac{1}{\log N} \Big).
\end{align*}
We can also see that \begin{align*}
\Big( \bra{\tevec} + \bra{\tevecb} \Big) \Big( \kpiz + \kselfloop \Big) = 2.
\end{align*}

Noting that $\proj_\phione$ denotes the projection onto $\tspan \{ \evec, \evecb \}$, this implies that \begin{align*}
\Big \| \proj_\phione \Big( \kpiz + \kselfloop \Big)  \Big \| \geq \sqrt{2} - O\Big( \frac{1}{\log N} \Big).
\end{align*}
\end{proof}

\subsection{Proof of corollary}

Applying the operator $\opW \opFo$ to the initial state $\kpiz$ yields an optimal algorithm for finding $\kselfloop$. The proof of this corollary follows from a similar argument to the proof of Theorem \ref{thm:main_result}. We apply our characterization of the slowest rotational subspace of $\opW \opFo$, given by Lemmas \ref{lem:WFo_ephase} and \ref{lem:WFo_proj}, to show the action of $\opW \opFo$ on $\kpiz$ is asymptotically restricted to this single subspace. The result is a Grover-like algorithm that rotates $\kpiz$ to our desired state $\kselfloop$.

\thmsp
\WFothm*

\begin{proof}
Recall the decomposition of $\kpiz$ in \myeqref{eq:piz_decomp}.
By Lemma \ref{lem:ustat}, we know that $\kustat$ is a $(+1)$-eigenvector of $\opW \opFo$. Letting $\proj_\phione$ denote the projection onto the slowest rotational subspace of $\opW \opFo$, we decompose $\kpiz + \kselfloop$ as \begin{align*}
\kpiz + \kselfloop = \proj_\phione \Big(\kpiz + \kselfloop \Big) + \ket{\bot}.
\end{align*}
for some vector $\ket{\bot}$. By Lemma \ref{lem:WFo_proj}, we know that $\| \ket{\bot} \| \in O(\frac{1}{\log N})$. We also know from Lemma \ref{lem:WFo_ephase} that the slowest rotational subspace has eigenphase $\phione \in \Theta(\frac{1}{\sqrt{N \log N}})$. Therefore, there exists a constant $c$ such that $c \sqrt{N \log N} = \lfloor \frac{\pi}{\phione} \rfloor = k$. After $k$ applications of $\opW \opFo$ to $\kpiz$, we get the state \begin{align*}
(\opW \opFo)^k \kpiz &= \frac{1}{2} \Big( \kpiz - \kselfloop \Big) - \frac{1}{2} \proj_\phione \Big( \kpiz + \kselfloop \Big) + \ket{\rho} \\
&= - \kselfloop + \frac{1}{2} \ket{\bot} + \ket{\rho}.
\end{align*}
Here, $\ket{\rho}$ is some state that captures both the result of applying $(\opW \opFo)^k$ to $\ket{\bot}$ and the small error incurred by the rounding of $\frac{\pi}{\phione}$, and has norm $\| \ket{\rho} \| \in O(\frac{1}{\log \nsize})$.
Thus, measuring the state will produce $\kselfloop$ with probability $1 - e(N)$, where $e(N) \in O(\frac{1}{\log N})$ as stated.
\end{proof}

\section{Finding with a memoryless walk} \label{sec:U}

We now present the proof of our main theorem, which is stated as follows.

\thmsp
\mainresult*
\thmsp

Our proof uses the decomposition of $\kpiz$ given in \myeqref{eq:piz_decomp}. We show that the state $\frac{1}{2} (\kpiz + \kselfloop)$ lies asymptotically in the slowest rotational subspace of $\opW \opF$. Therefore, $\opW \opF$ can be used to rotate $\kpiz$ to a state close to $-\kselfloop$. Applying the change of basis $\cz$ yields the result as stated.

The proof is based on relating the slowest rotational subspaces of $\opW \opFo$ and $\opW \opF = (\opW \opFo) \opFt$. We continue to apply the results from Appendix \ref{app:flipflop}, this time to analyse the real operator $\opW \opFo$ composed with the one-dimensional reflection $\opFt$. We show the slowest rotational subspaces of $\opW \opFo$ and $\opW \opF$ have the same asymptotic bound on the rotational angle, and that both asymptotically contain $\frac{1}{2} (\kpiz + \kselfloop)$. By using $\opW \opFo$ as an intermediate operator, we are thus able to tightly characterize the slowest rotational subspace of a real operator $\opW$, composed with a two-dimensional rotation $\opF$.

\subsection{Relationship with $\opW \opFo$}

We begin by relating the slowest rotational subspaces of $\opW \opF$ and $\opW \opFo$. Note that when $\sel = 1 - \frac{1}{\nsize + 1}$, the vector $\kft$ has the form \begin{align}
\kft &= \frac{\sqrt{3N} \sqrt{N+4} }{2(N+1)}  \kplus + \frac{2 - N}{2(N+1)} \kfo. \label{eq:kft}
\end{align}

Let the eigenphases of $\opW \opFo$ different from $0, \pi$ be denoted by $\pm \varphi_k$ for $1 \leq k \leq m$, where $0 < |\varphi_1| \leq |\varphi_2| \leq \cdots \leq |\varphi_m| < \pi$. Let the associated eigenvectors be $\ket{A_{k}^{\pm}}$. Both $\kft$ and $\opW \opFo$ are real-valued, so we can decompose $\kft$ into the eigenbasis of $\opW \opFo$ as \begin{align}
\kft = g_0 \ket{A_0} + \sum_{k=1}^{m} g_k \Big( \ket{A_k^+} + \ket{A_k^-} \Big) + g_{-1} \ket{A_{-1}}, \label{eq:kft_decomp}
\end{align}
where $\ket{A_0}$ is a $(+1)$-eigenvector, $\ket{A_{-1}}$ is a $(-1)$-eigenvector, and all $g_i$ are non-negative real numbers.

By Lemma \ref{lem:claim1}, $\proj_\kl \kplus \perp \proj_\kl \kfo$ for each invariant subspace $\opW_\kl$ of $\opW$ (including $\opW_{00}$). Therefore, the eigenvectors $\proj_\kl \kplus$ of $\opW$ are also eigenvectors of $\opW \opFo$ with the same eigenphases~$\theta_\kl$.

\thmsp
\begin{lemma} \label{lem:kft_coefficients}
The decomposition of $\kft$ in \myeqref{eq:kft_decomp} satisfies: \begin{align}
&g_0^2 = \frac{1}{2} + O \Big( \! \frac{1}{\sqrt{\nsize}} \Big)\\
&g_1^2 \in O \Big(\frac{1}{\log \nsize} \Big).
\end{align}
\end{lemma}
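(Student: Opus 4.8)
Both $\opW\opFo$ and $\kft$ are real, and by \myeqref{eq:kft} we may write $\kft = \sin(2\fangle)\kplus + \cos(2\fangle)\kfo$, with $\sin^2(2\fangle) = \frac{3N(N+4)}{4(N+1)^2}$ (equivalently $4\sin^2\fangle\cos^2\fangle$ for $\sin^2\fangle=\tfrac34\sel$) and $|\cos(2\fangle)| = \frac{N-2}{2(N+1)} < \tfrac12$. In the decomposition \myeqref{eq:kft_decomp}, $g_0$ is the norm of the projection of $\kft$ onto the $(+1)$-eigenspace $V_+$ of $\opW\opFo$, and $g_1 = |\inner{A_1^+}{\ft}|$ for the unit eigenvector $\ket{A_1^+}$ of $\opW\opFo$ with eigenphase $\phione$. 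I would bound these two quantities separately.

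For $g_0$, the crucial step is to pin down $V_+$ exactly. If $\opW\opFo\kv = \kv$ then $\kv - 2\inner{\fo}{v}\kfo = \opW^\dagger\kv$; pairing this with $\proj_{00}\kfo\in\opW_{00}$, which is nonzero since $\|\proj_{00}\kfo\|^2 = \frac{8}{N+4}$ by \myeqref{eq:kfo_decomp} and Lemma \ref{lem:claim1}, and using $\opW\proj_{00}\kfo = \proj_{00}\kfo$, forces $\inner{\fo}{v} = 0$ and hence $\opW\kv = \kv$. Thus $V_+ = \{\kv\in\opW_{00}:\inner{\fo}{v}=0\}$, which inside $\opW_{00}$ is the orthogonal complement of $\proj_{00}\kfo$; in particular $\proj_{V_+}\kfo = \proj_{V_+}(\proj_{00}\kfo) = 0$. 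For the $\kplus$ term, Lemma \ref{lem:claim1} gives $\proj_{00}\kplus\perp\proj_{00}\kminus$ and $\kplus\perp\kselfloop$, so $\proj_{00}\kplus\perp\proj_{00}\kfo$; since the remaining components of $\kplus$ lie in subspaces $\opW_\kl$ with $\kl\neq 00$, which are orthogonal to $V_+\subseteq\opW_{00}$, we get $\proj_{V_+}\kplus = \proj_{00}\kplus$. Hence $\proj_{V_+}\kft = \sin(2\fangle)\,\proj_{00}\kplus$, and using \myeqref{eq:kpluszero},
\[ g_0^2 = \sin^2(2\fangle)\,\|\proj_{00}\kplus\|^2 = \frac{3N(N+4)}{4(N+1)^2}\cdot\frac{2(N+2)}{3N} = \frac{(N+2)(N+4)}{2(N+1)^2} = \frac12 + O\!\Big(\tfrac1N\Big), \]
which is stronger than the stated bound.

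For $g_1$, I would first argue that $e^{\imath\phione}$ is a simple eigenvalue of $\opW\opFo$: by Lemma \ref{lem:WFo_ephase} we have $\phione < \theta_{10}$, so $e^{\imath\phione}$ differs from every eigenvalue of $\opW$, hence is a ``new'' eigenvalue arising from composition with $\opFo$, and these are simple by the strict-interlacing part of the flip-flop theorem (Appendix \ref{app:flipflop}). Thus $\ket{A_1^+} = \evec/\|\evec\|$ for the eigenvector $\evec$ analysed in Section \ref{sec:WFo_evec}, normalized so $\inner{\tevec}{\piz} = \tfrac12$. Since $\kplus$ is orthogonal to $\kminusp$, $\kpiz$, $\kselfloop$, and (because $\proj_\kl\kpsi$ is a multiple of $\proj_\kl\kminusp$ while $\proj_\kl\kplus\perp\proj_\kl\kminus$ by Lemma \ref{lem:claim1}) also $\kpsi$, the decomposition \myeqref{eq:evec_def} gives $\inner{\plus}{\tevec} = 0$, so $\inner{\ft}{\tevec} = \cos(2\fangle)\,\inner{\fo}{\tevec}$. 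Plugging in $\inner{\fo}{\tevec} = \frac{1}{\sqrt{N+4}}\big(\frac{\aco(N-4)}{\sqrt N} - 2\big)$ from the proof of Lemma \ref{lem:WFo_constraints}, together with $|\aco|\in O(1/\sqrt{\log N})$ from Lemma \ref{lem:aco}, gives $|\inner{\ft}{\tevec}|\in O(1/\sqrt{\log N})$. Finally $\|\evec\| \geq |\inner{\piz}{\tevec}| = \tfrac12$ by Cauchy--Schwarz, so $g_1 = |\inner{\ft}{\tevec}|/\|\evec\| \leq 2\,|\inner{\ft}{\tevec}| \in O(1/\sqrt{\log N})$, whence $g_1^2\in O(1/\log N)$.

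The norm computations and the reuse of earlier formulas are routine; the one genuinely load-bearing point is the exact identification of the $(+1)$-eigenspace of $\opW\opFo$ — equivalently, the observation that $\proj_{V_+}$ annihilates the $\kfo$-component of $\kft$ and acts as $\proj_{00}$ on its $\kplus$-component — together with confirming that $e^{\imath\phione}$ is simple so that $\ket{A_1^+}$ is the vector $\evec$ already characterized. That is where the care is needed; the remainder is substitution.
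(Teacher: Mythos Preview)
Your proof is correct and follows essentially the same approach as the paper's: for $g_0$ you compute the projection of $\kft = \sin(2\fangle)\kplus + \cos(2\fangle)\kfo$ onto the $(+1)$-eigenspace of $\opW\opFo$, and for $g_1$ you show $\inner{A_1^+}{+}=0$ and then bound $|\inner{A_1^+}{\fo}|$ via Lemma~\ref{lem:aco} and the formula for $\inner{\fo}{\tevec}$ from Lemma~\ref{lem:WFo_constraints}, exactly as the paper does.

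Your treatment of $g_0$ is in fact sharper than the paper's. The paper bounds $\|\projwfo\kfo\|^2$ by $O(1/\sqrt{\nsize})$, whereas your explicit identification $V_+ = \{\kv\in\opW_{00}:\inner{\fo}{v}=0\}$ shows $\proj_{V_+}\kfo = 0$ exactly, yielding $g_0^2 = \tfrac12 + O(1/\nsize)$ rather than $\tfrac12 + O(1/\sqrt{\nsize})$. Your added care in confirming that $e^{\imath\phione}$ is a simple eigenvalue (so that $\ket{A_1^+}$ really is the normalized $\evec$) is also a point the paper leaves implicit. For $\inner{A_1^+}{+}=0$, the paper argues spectrally (each $\proj_\kl\kplus$ is an eigenvector of $\opW\opFo$ with eigenphase $\theta_\kl\neq\phione$), while you argue directly from the decomposition~\eqref{eq:evec_def}; the two arguments are equivalent.
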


\begin{proof}
Recall that $0 < \phione < |\theta_\kl|$ for all nonzero eigenphases $\theta_\kl$ of $\opW$. Using the property that $\proj_\kl \kplus$ is an eigenvector of $\opW \opFo$ with eigenvalue $\theta_\kl$, this implies that $\bra{A_1^+} \proj_\kl \kplus = 0$ for all $k, l$, so $\inner{A_1^+}{\plus} = 0$.
Let $\projwfo$ denote the projection onto the $(+1)$-eigenspace of $\opW\opFo$. Then we have $\|\projwfo \kplus \|^2 = \| \proj_{00} \kplus\|^2 = \frac{2(N+2)}{3N}$ by Lemma \ref{lem:claim1}.

We bound $\| \projwfo \kfo \|^2$ by observing that 
\begin{align*}
\| \projwfo\kfo \|^2 &= \sum_{\theta_\kl = \pi} \| \proj_\kl \kfo \|^2 \\
&= \frac{N}{N+4} \sum_{\theta_\kl = \pi} \| \proj_\kl \kminus \|^2 \\
&= \frac{N}{N+4} \sum_{\theta_\kl = \pi} \frac{2 \dim(\opW_\kl)}{N} \in O \Big( \! \frac{1}{\sqrt{N}} \Big),
\end{align*}
where the final bound follows from the property that there are $O(\sqrt{N})$ terms in the sum.

Thus, using \myeqref{eq:kft} and the property that $\projwfo \kplus$ and $\projwfo \kfo$ are orthogonal, \begin{align*}
g_0^2 &= \frac{3N (N+4)}{4(N+1)^2} \smallspace \| \projwfo \kplus \|^2 + \frac{(2 - N)^2}{4 (N+1)^2} \smallspace \| \projwfo \kfo \|^2 \\
&= \frac{3N (N+4)}{4(N+1)^2} \bigg( \frac{2(N+2)}{3N} \bigg) + O \Big(\! \frac{1}{\sqrt{N}} \Big)\\
&= \frac{1}{2} + O \Big( \! \frac{1}{\sqrt{N}} \Big).
\end{align*}

To bound $g_1^2$, consider the eigenvector decomposition given in \myeqref{eq:evec_def} in the case where $\alpha = \phione$. Then $\ket{A_1^+}$ is the normalized version of $\evec$. By definition, we have $\| \evec \|^2 \geq \frac{1}{2}$, so 
\begin{align*}
g_1^2 &= | \inner{A_1^+}{f_2} |^2 \\
&= \cos^2 (2 \fangle) | \inner{A_1^+}{f_1} |^2  + \sin^2(2 \fangle) | \inner{A_1^+}{+} |^2 \\
&= \cos^2 (2 \fangle) | \inner{A_1^+}{f_1} |^2  \\
&\leq | \inner{\tevec}{f_1} |^2 / \| \evec \|^2 \\
&\leq 2 | \inner{\tevec}{f_1} |^2 \\
&= 2 \Bigg| \tinyspace \aco \tinyspace \sqrt{\frac{N}{N+4}} \bigg(1 - \frac{4}{N} \bigg) - \frac{2}{\sqrt{N+4}} \Bigg|^2 \in O \Big( \frac{1}{\log N} \Big),
\end{align*}
where the final bound follows from Lemma \ref{lem:aco}.
\end{proof}

Lemma \ref{lem:kft_coefficients} shows that $\kft$ has a large constant overlap with the $(+1)$-eigenspace of $\opW \opFo$, and a vanishing overlap with the slowest rotational space. Intuitively, this suggests the reflection $\opFt$ will have little effect on the slowest rotational subspace of the intermediate walk $\opW \opFo$. As discussed in Section~\ref{sec:WFo}, this subspace is where most of the action of the walk takes place. This observation is the basis for the proof of Lemma \ref{lem:WF_proj}.

In the next lemma, we use the constraints from Lemma \ref{lem:flipflop_constraints} to show that the smallest positive eigenphase of $\opW \opF$ is asymptotically close to $\phione$.

\thmsp
\begin{lemma} \label{lem:WF_ephase}
Let $\beta$ denote the smallest eigenphase of $\opW \opF$. Then \begin{align}
\beta = \phione - O \bigg( \frac{1}{\sqrt{\nsize} (\log \nsize)^{3/2}} \bigg).
\end{align}
\end{lemma}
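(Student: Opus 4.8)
The plan is to apply the flip-flop constraint equation from Lemma~\ref{lem:flipflop_constraints} (see Appendix~\ref{app:flipflop}) to the real operator $\opW \opFo$ composed with the one-dimensional reflection $\opFt$, exactly as was done in the proof of Lemma~\ref{lem:WFo_ephase} for $\opW$ composed with $\opFo$. Using the decomposition of $\kft$ in the eigenbasis of $\opW \opFo$ from \myeqref{eq:kft_decomp}, the smallest positive eigenphase $\beta$ of $\opW \opF$ must satisfy
\begin{align*}
g_0^2 \cot \! \Big( \frac{\beta}{2} \Big) + \sum_{k=1}^{m} g_k^2 \bigg[ \cot \! \Big( \frac{\beta - \varphi_k}{2} \Big) + \cot \! \Big( \frac{\beta + \varphi_k}{2} \Big) \bigg] + g_{-1}^2 \cot \! \Big( \frac{\beta - \pi}{2} \Big) = 0,
\end{align*}
and moreover $\beta < \varphi_1 = \phione$ by the interlacing property (Theorem~\ref{lem:flipflop_interlace}), since $\kft$ overlaps the $(+1)$-eigenspace of $\opW\opFo$ with constant weight $g_0^2 \approx \tfrac12$ by Lemma~\ref{lem:kft_coefficients}. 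The goal is to show $\phione - \beta$ is small, of order $\frac{1}{\sqrt{N}(\log N)^{3/2}}$.

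The key steps, in order, are: (1) isolate the dominant terms of the constraint equation. The $(+1)$-eigenspace term contributes $g_0^2 \cot(\beta/2) \in \Theta(1/\beta)$ since $g_0^2 = \tfrac12 + O(1/\sqrt N)$ is a constant and $\beta \in \Theta(\phione) = \Theta(1/\sqrt{N\log N})$ is small (the lower bound $\beta \in \Theta(\phione)$ itself needs to be established first, by the same argument as in Lemma~\ref{lem:WFo_ephase} applied to the $g_k^2$ weights, which satisfy the analogue of Fact~\ref{lem:sum0} — here one uses that $\sum_k g_k^2 \cot(\varphi_k/2)$ behaves like the corresponding sum for $\opW$ since most $g_k^2$ come from the $\proj_\kl\kplus$ eigenvectors whose weights are $\tfrac13$ of the $\kminus$ weights). (2) Show the term $k=1$ is the one responsible for pulling $\beta$ below $\phione$: since $g_1^2 \in O(1/\log N)$ by Lemma~\ref{lem:kft_coefficients} and $\beta < \phione = \varphi_1$, the term $g_1^2 \cot((\beta - \varphi_1)/2)$ is large and negative, of order $g_1^2 / (\phione - \beta)$. (3) The remaining sum over $k \geq 2$ together with the $(-1)$-eigenspace term is $O(\beta \log N) = O(\phione \log N)$ by the same estimate as in Lemma~\ref{lem:WFo_ephase}, which matches $\Theta(1/\beta)$ up to constants. (4) Balance: setting $g_0^2 \cot(\beta/2)$ (order $1/\beta \sim \sqrt{\log N}/\sqrt N \cdot \sqrt{\log N}$... i.e.\ $\Theta(\sqrt{N/\log N})$ wait, $1/\phione \in \Theta(\sqrt{N\log N})$) against the $k=1$ term of order $g_1^2/(\phione - \beta)$ forces $\phione - \beta \in O\big(g_1^2 \cdot \phione\big) = O\big(\frac{1}{\log N} \cdot \frac{1}{\sqrt{N\log N}}\big) = O\big(\frac{1}{\sqrt N (\log N)^{3/2}}\big)$, as claimed. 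One must also check the sign: the $k=1$ term being negative means the other terms (which are positive, dominated by $g_0^2 \cot(\beta/2) > 0$) must compensate, which is consistent with $\beta < \phione$, and rules out $\beta$ being much smaller than $\phione$ because then the $k=1$ term would dominate everything. I would carry this out by comparing the constraint equation for $\opW\opF$ at $\beta$ with the constraint equation for $\opW\opFo$ at $\phione$ (which equals zero), subtracting, and bounding the difference term by term.

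\textbf{Main obstacle.} The hard part will be step (3)/(1): controlling the sum $\sum_{k \geq 2} g_k^2 [\cot(\tfrac{\beta-\varphi_k}{2}) + \cot(\tfrac{\beta+\varphi_k}{2})]$ and showing it has the same $\Theta(\phione \log N)$ order as the corresponding $\opW$-sum, since the eigenphases $\varphi_k$ of $\opW\opFo$ are not the clean $\theta_\kl$ values — they are interlaced among them — and the weights $g_k^2$ are not as explicit as the $\|\proj_\kl\kfo\|^2$ were. The cleanest route is probably to avoid $\varphi_k$ entirely: rewrite the constraint for $\opW\opF = (\opW\opFo)\opFt$ but re-expand in the eigenbasis of $\opW$ itself rather than of $\opW\opFo$, using that $\opFt = \opid - 2\kouter{\ft}$ with $\kft$ given by \myeqref{eq:kft} as an explicit combination of $\kplus$ and $\kfo$, and that the $\proj_\kl\kplus$ directions are already eigenvectors of $\opW\opFo$. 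This reduces everything to sums over the $\theta_\kl$ with weights built from $\|\proj_\kl\kplus\|^2$ and $\|\proj_\kl\kfo\|^2$, which are known from Lemma~\ref{lem:claim1}, and lets me reuse Facts~\ref{lem:sum0} and~\ref{lem:sum4} directly. A secondary subtlety is handling the rounding/sign conventions so that "$\beta - O(\cdots)$" (rather than $\beta + O(\cdots)$) is justified, i.e.\ confirming $\beta \le \phione$, which follows from interlacing once $g_0^2$ is shown to be bounded away from $0$.
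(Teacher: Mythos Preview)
Your overall plan matches the paper's: apply Lemma~\ref{lem:flipflop_constraints} to $\opW\opFo$ composed with $\opFt$, get $\beta < \phione$ from interlacing, and balance $g_0^2\cot(\beta/2) \in \Theta(\sqrt{N\log N})$ against the $k=1$ term $\sim g_1^2/(\phione-\beta)$ to extract $\phione - \beta \in O(g_1^2\,\phione)$. The difference is entirely in how you handle the $k\geq 2$ tail, which you correctly flag as the obstacle---but your proposed workarounds (reusing Fact~\ref{lem:sum0} via the $\proj_\kl\kplus$ weights, or re-expanding the constraint in the eigenbasis of $\opW$) are more work than needed and would require controlling the interlaced $\varphi_k$ and their weights individually.

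The paper's trick avoids all of that. Rewrite the constraint via the identity
\[
\cot\!\Big(\tfrac{\varphi_k-\beta}{2}\Big) - \cot\!\Big(\tfrac{\varphi_k+\beta}{2}\Big)
= 2\cot\!\Big(\tfrac{\beta}{2}\Big)\,\frac{\cot^2(\varphi_k/2)+1}{\cot^2(\beta/2)-\cot^2(\varphi_k/2)}
\]
and divide through by $\cot(\beta/2)$, so the full sum must equal $g_0^2 - O(\tfrac{1}{N\log N}) = \tfrac12 + O(\tfrac{1}{\sqrt N})$. For $k\geq 2$, interlacing gives $\varphi_k \geq \theta_{10} \in \Theta(\tfrac{1}{\sqrt N})$, hence $\cot^2(\varphi_k/2) \in O(N)$, while $\cot^2(\beta/2) \in \Omega(N\log N)$ from $\beta < \phione$; each term is then $g_k^2 \cdot O(\tfrac{1}{\log N})$, and $\sum_k g_k^2 \leq 1$ bounds the whole tail by $O(\tfrac{1}{\log N})$. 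Thus the $k=1$ term alone equals $\tfrac12 - O(\tfrac{1}{\log N})$, and combining with $g_1^2 \in O(\tfrac{1}{\log N})$ and $\cot^2(\phione/2) \in \Theta(N\log N)$ yields $\cot^2(\beta/2) - \cot^2(\phione/2) \in O(N)$, whence $\phione - \beta \in O\!\big(\tfrac{1}{\sqrt N(\log N)^{3/2}}\big)$. Note this uses only the \emph{upper} bound $\beta < \phione$; you do not need to establish $\beta \in \Theta(\phione)$ in advance as your step~(1) suggests, and your step~(3) estimate $O(\beta\log N)$, while perhaps true, is stronger than needed and does not follow ``by the same estimate as in Lemma~\ref{lem:WFo_ephase}'' since the $\varphi_k$ are not the $\theta_\kl$.
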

Note that in particular, this implies $\beta \in \Theta ( \! \frac{1}{\sqrt{\nsize \log \nsize}} \! )$.

\begin{proof}
First, we derive an upper bound on $\beta$ using the flip-flop theorem.
By Lemma \ref{lem:claim1}, $\kplus$ intersects every eigenspace of $\opW$. Each of the eigenvectors $\proj_\kl \kplus$ is also an eigenvector of $\opW \opFo$, so in particular, this implies that $g_0 > 0$, $g_{-1} > 0$, and $g_k > 0$ for the $k$ corresponding to the eigenphases $\theta_\kl$. Therefore, by Theorem \ref{lem:flipflop_interlace}, $0 < \beta < \phione$. Applying Lemma \ref{lem:WFo_ephase}, we obtain $\beta \in O(\frac{1}{\sqrt{N \log N}})$.

To obtain the upper bound on $\phione - \beta$, we apply Lemma \ref{lem:flipflop_constraints}, which states that $\beta$ must satisfy \begin{align}
g_0^2 \cot \! \bigg( \frac{\beta}{2} \bigg) + \sum_{k=1}^{m} g_k^2 \Bigg[ \cot \! \bigg( \frac{\varphi_k + \beta}{2}\bigg) - \cot \! \bigg( \frac{\varphi_k - \beta}{2}\bigg) \Bigg] - g_{-1}^2 \tan \! \bigg(\frac{\beta}{2} \bigg) = 0.
\end{align}

We apply trigonometric identities to rewrite this as \begin{align*}
g_0^2 \cot \! \bigg( \frac{\beta}{2} \bigg) - 2 \cot \! \bigg(\frac{\beta}{2} \bigg) \sum_{k=1}^{m} g_k^2 \Bigg( \frac{\cot^2(\frac{\varphi_k}{2}) + 1}{\cot^2(\frac{\beta}{2}) - \cot^2(\frac{\varphi_k}{2})}  \Bigg) - g_{-1}^2 \tan \! \bigg(\frac{\beta}{2} \bigg) = 0.
\end{align*}

Applying our upper bound on $\beta$, we know that $g_{-1}^2 \tan^2 (\frac{\beta}{2}) \in O(\frac{1}{N \log N})$, so we have \begin{align}
2 \sum_{k=1}^m g_k^2 \frac{\cot^2 (\frac{\varphi_k}{2} ) + 1}{\cot^2(\frac{\beta}{2}) - \cot^2 (\frac{\varphi_k}{2}) } =  g_0^2 - O \Big( \frac{1}{N \log N} \Big) = \frac{1}{2} + O \Big( \frac{1}{\sqrt{N}} \Big), \label{eq:beta_bound_pt1}
\end{align}
where the last equality follows by Lemma \ref{lem:kft_coefficients}.

Because the smallest positive eigenphase of $\opW$ has order $\Theta(\frac{1}{\sqrt{N}})$, Theorem \ref{lem:flipflop_interlace} implies that $\varphi_k \in \Omega(\frac{1}{\sqrt{N}})$ for $k \geq 2$. Therefore, $\cot^2 (\frac{\varphi_k}{2}) \in O(N)$ for $k \geq 2$, while $\cot^2(\frac{\beta}{2}) \in \Omega(N \log N)$. Also note that $\sum_k g_k^2 \leq 1$. This means that \begin{align}
 \sum_{k=2}^m g_k^2 \frac{\cot^2 (\frac{\varphi_k}{2} ) + 1}{\cot^2(\frac{\beta}{2}) - \cot^2 (\frac{\varphi_k}{2}) } \in O \Big(\frac{1}{\log N} \Big). \label{eq:beta_bound_pt2}
\end{align}

Combining \myeqref{eq:beta_bound_pt1} and \myeqref{eq:beta_bound_pt2}, we get \begin{align*}
2 g_1^2 \frac{\cot^2 (\frac{\phione}{2} ) + 1}{\cot^2(\frac{\beta}{2}) - \cot^2 (\frac{\phione}{2}) } = \frac{1}{2} - O \Big(\frac{1}{\log N} \Big).
\end{align*}
We know that $\cot^2(\frac{\phione}{2}) \in O(N \log N)$ by Lemma \ref{lem:WFo_ephase}. We also have $g_1^2 \in O(\frac{1}{\log N})$ by Lemma \ref{lem:kft_coefficients}. This implies that \begin{align*}
\cot^2 \! \bigg(\frac{\beta}{2} \bigg) - \cot^2 \! \bigg(\! \frac{\phione}{2}  \! \bigg) \in O(N).
\end{align*}
Because $\cot(\frac{\beta}{2}) + \cot (\frac{\phione}{2}) \in \Omega(\sqrt{N \log N})$, it must be the case that \begin{align*}
\cot \! \bigg(\frac{\beta}{2} \bigg) - \cot \! \bigg( \! \frac{\phione}{2} \! \bigg) \in O \bigg( \! \frac{\sqrt{N}}{\sqrt{\log N}} \! \bigg).
\end{align*}
Applying the Taylor expansion for cotangent, we get \begin{align*}
\frac{1}{\beta} - \frac{1}{\phione} = \frac{\phione - \beta}{\phione \beta} \in O \bigg( \! \frac{\sqrt{N}}{\sqrt{\log N}} \! \bigg). 
\end{align*}
We know that $\frac{1}{\phione} \in \Theta(\sqrt{N \log N})$, so this implies that $\frac{1}{\beta} \in \Theta(\sqrt{N \log N})$. Thus, \begin{align*}
\phione - \beta \in O \bigg( \! \frac{1}{\sqrt{\nsize} (\log \nsize)^{3/2}} \! \bigg).
\end{align*}
\end{proof}

Finally, we show that $\kpiz + \kselfloop$ lies asymptotically in the slowest rotational subspace of $\opW \opF$. To do so, we apply our bounds from Lemmas \ref{lem:kft_coefficients} and \ref{lem:WF_ephase} to show that the slowest eigenvectors of $\opW \opF$ are asymptotically close to the slowest eigenvectors of $\opW \opFo$. 

\thmsp
\begin{lemma} \label{lem:WF_proj}
Let $\proj_\beta$ denote the projection onto the slowest rotational subspace of $\opW \opF$, which is spanned by the eigenvectors with eigenphases $\pm \beta$. Then \begin{align}
\Big \| \proj_\beta \Big(\kpiz + \kselfloop \Big) \Big\| = \sqrt{2} - O \Big(\frac{1}{\log N} \Big).
\end{align}
\end{lemma}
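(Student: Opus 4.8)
The plan is to show that the slowest rotational subspace of $\opW\opF$ is $O(1/\sqrt{\log\nsize})$-close to the slowest rotational subspace of $\opW\opFo$, in the sense that $\|\proj_\beta-\proj_\phione\|\in O(1/\sqrt{\log\nsize})$, and then to combine this with Lemma~\ref{lem:WFo_proj}. Write $\opW\opF=(\opW\opFo)\opFt$ and let $\ket{\xi}$ be the (unnormalized) eigenvector of $\opW\opF$ with eigenphase $\beta$, with $\ket{\overline{\xi}}$ its entrywise conjugate, so that $\proj_\beta$ projects onto $\tspan\{\ket{\xi},\ket{\overline{\xi}}\}$. Since $0<\beta<\phione$ and $\phione$ is the smallest positive eigenphase of $\opW\opFo$, $e^{\imath\beta}$ is not an eigenvalue of $\opW\opFo$; so I would solve $\opW\opFo\opFt\ket{\xi}=e^{\imath\beta}\ket{\xi}$ by writing $\opFt\ket{\xi}=\ket{\xi}-2\inner{\ft}{\xi}\kft$ and expanding $\kft$ in the eigenbasis of $\opW\opFo$ via \eqref{eq:kft_decomp}, obtaining the flip-flop formula
\begin{align*}
\ket{\xi}\;\propto\;\frac{g_0}{1-e^{\imath\beta}}\ket{A_0}\;+\;\sum_{k=1}^{m}g_k\!\left(\frac{\ket{A_k^+}}{1-e^{\imath(\beta-\varphi_k)}}+\frac{\ket{A_k^-}}{1-e^{\imath(\beta+\varphi_k)}}\right)\;+\;\frac{g_{-1}}{1+e^{\imath\beta}}\ket{A_{-1}}.
\end{align*}

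Using $|1-e^{\imath\phi}|^2=4\sin^2(\phi/2)$, the ratio of the squared norm of the component of $\ket{\xi}$ orthogonal to $\mathrm{range}(\proj_\phione)=\tspan\{\ket{A_1^+},\ket{A_1^-}\}$ to the full squared norm of $\ket{\xi}$ equals $L/(L+T)$, where (with $\delta:=\phione-\beta$) the $\ket{A_1^\pm}$ contribution is $T:=g_1^2\big(\tfrac{1}{4\sin^2(\delta/2)}+\tfrac{1}{4\sin^2((\beta+\phione)/2)}\big)$ and the ``leakage'' is $L:=\tfrac{g_0^2}{4\sin^2(\beta/2)}+\sum_{k\ge2}g_k^2\big(\tfrac{1}{4\sin^2((\beta-\varphi_k)/2)}+\tfrac{1}{4\sin^2((\beta+\varphi_k)/2)}\big)+\tfrac{g_{-1}^2}{4\cos^2(\beta/2)}$. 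By Lemma~\ref{lem:kft_coefficients} ($g_0^2=\tfrac{1}{2}+O(1/\sqrt\nsize)$), Lemma~\ref{lem:WF_ephase} ($\beta\in\Theta(1/\sqrt{\nsize\log\nsize})$), the bound $\varphi_k\in\Omega(1/\sqrt\nsize)$ for $k\ge2$ (Theorem~\ref{lem:flipflop_interlace}), and $\sum_k g_k^2\le 1$, the $\ket{A_0}$ term dominates and $L\in\Theta(\nsize\log\nsize)$.

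Next I would pin down $\delta$ from the flip-flop constraint for $\beta$ (Lemma~\ref{lem:flipflop_constraints}), which, after isolating the $k=1$ term, reads
\begin{align*}
g_1^2\!\left[\cot\!\Big(\tfrac{\delta}{2}\Big)-\cot\!\Big(\tfrac{\phione+\beta}{2}\Big)\right]=g_0^2\cot\!\Big(\tfrac{\beta}{2}\Big)-g_{-1}^2\tan\!\Big(\tfrac{\beta}{2}\Big)-\sum_{k\ge2}g_k^2\!\left[\cot\!\Big(\tfrac{\varphi_k-\beta}{2}\Big)-\cot\!\Big(\tfrac{\varphi_k+\beta}{2}\Big)\right].
\end{align*}
The right-hand side is $\Theta(1/\beta)=\Theta(\sqrt{\nsize\log\nsize})$: the term $g_0^2\cot(\beta/2)$ dominates, $g_{-1}^2\tan(\beta/2)\in O(\beta)$, and the last sum is $O(\beta\nsize)=O(\sqrt{\nsize/\log\nsize})$, using $\cot\frac{\varphi_k-\beta}{2}-\cot\frac{\varphi_k+\beta}{2}=\frac{2\sin\beta}{\cos\beta-\cos\varphi_k}\in O(\beta/\varphi_k^2)$ together with $\sum_{k\ge2}g_k^2/\varphi_k^2\le\varphi_2^{-2}\sum_k g_k^2\in O(\nsize)$ --- an estimate of the same flavour as those in the proofs of Lemma~\ref{lem:WF_ephase} and Fact~\ref{lem:sum4}. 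On the left, $\cot\frac{\phione+\beta}{2}\in O(1/\phione)$ while $g_1^2\in O(1/\log\nsize)$, so that contribution is negligible and $g_1^2\cot(\delta/2)\in\Theta(1/\beta)$. Hence $T\ge\frac{g_1^2}{4\sin^2(\delta/2)}\approx\tfrac14 g_1^2\cot^2(\delta/2)\in\Theta(1/(g_1^2\beta^2))=\Theta(\nsize\log\nsize/g_1^2)$, which exceeds $L$ by a factor $\Omega(1/g_1^2)=\Omega(\log\nsize)$. Therefore $L/(L+T)$ is $\Theta(g_1^2)$, hence $O(1/\log\nsize)$, so that $\|(\opid-\proj_\phione)\ket{\xi}\|\le O(1/\sqrt{\log\nsize})\,\|\ket{\xi}\|$; the same holds for $\ket{\overline{\xi}}$ since $\mathrm{range}(\proj_\phione)$ is conjugation-invariant, and as $\ket{\xi}\perp\ket{\overline{\xi}}$ span $\mathrm{range}(\proj_\beta)$ this yields $\|\proj_\beta-\proj_\phione\|\in O(1/\sqrt{\log\nsize})$.

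Finally, write $\kpiz+\kselfloop=\proj_\phione(\kpiz+\kselfloop)+\ket{w}$; by Lemma~\ref{lem:WFo_proj}, $\|\ket{w}\|^2=2-\|\proj_\phione(\kpiz+\kselfloop)\|^2=O(1/\log\nsize)$. Then, using $\|\kpiz+\kselfloop\|=\sqrt2$,
\begin{align*}
\big\|(\opid-\proj_\beta)(\kpiz+\kselfloop)\big\|\;\le\;\|\proj_\phione-\proj_\beta\|\,\|\kpiz+\kselfloop\|+\|\ket{w}\|\;=\;O(1/\sqrt{\log\nsize}),
\end{align*}
whence $\|\proj_\beta(\kpiz+\kselfloop)\|^2=2-O(1/\log\nsize)$, and taking square roots gives the stated bound. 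The main obstacle is the third step: establishing the two-sided estimate $g_1^2\cot(\delta/2)\in\Theta(1/\beta)$ (equivalently, $\delta\in\Theta(g_1^2\beta)$), which requires carefully verifying that every other term of the flip-flop constraint for $\beta$ --- in particular the $\Theta(\nsize)$-term sum over the intermediate eigenphases $\varphi_k$ --- is genuinely subdominant to $g_0^2\cot(\beta/2)$.
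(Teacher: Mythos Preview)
Your proposal is correct and follows essentially the same approach as the paper: both expand the slowest eigenvector of $\opW\opF=(\opW\opFo)\opFt$ in the eigenbasis of $\opW\opFo$ via the flip-flop formula, show that the $\ket{A_1^\pm}$ component dominates (your $T\gg L$ is exactly the paper's $\|\ket{e_\beta}\|^2=g_1^2\cot^2(\tfrac{\beta-\phione}{2})+O(\nsize\log\nsize)$ with the first term dominant), and then combine with Lemma~\ref{lem:WFo_proj}. The only cosmetic differences are that you package the closeness as $\|\proj_\beta-\proj_\phione\|$ and re-derive the two-sided estimate $g_1^2\cot(\delta/2)\in\Theta(1/\beta)$ from the flip-flop constraint, whereas the paper computes the individual overlaps $|\langle A_1^\pm|B_1^\pm\rangle|$ and invokes the upper bound $\delta\in O\!\big(\nsize^{-1/2}(\log\nsize)^{-3/2}\big)$ already established in Lemma~\ref{lem:WF_ephase}; your ``main obstacle'' is therefore not an obstacle, since that computation is precisely the content of the proof of Lemma~\ref{lem:WF_ephase}.
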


\begin{proof}
We use the decomposition of $\kft$ in \myeqref{eq:kft_decomp}. By Lemma \ref{lem:flipflop_constraints}, the (unnormalized) eigenvector of $\opW \opF$ associated with eigenphase $\beta$ is $\ket{e_\beta} = \kft + \imath \ket{e_\beta^\perp}$, where \begin{equation}
\begin{aligned}
\ket{e_\beta^\perp} = g_0 \cot \! \bigg( \frac{\beta}{2} \bigg) \ket{A_0} + \sum_{k=1}^{m} g_k \Bigg[ \cot \! \bigg( \frac{\beta - \varphi_k}{2} \bigg) \ket{A_k^+} + \cot \! \bigg( \frac{\beta + \varphi_k}{2} \bigg) \ket{A_k^-} \Bigg] \\
- g_{-1} \tan \! \bigg( \frac{\beta}{2} \bigg) \ket{A_{-1}}. \label{eq:e_beta_perp_def}
\end{aligned}
\end{equation}

Let $\ket{B_1^+}$ denote the normalization of $\ket{e_\beta}$, and let $\ket{B_1^-}$ denote the conjugate of $\ket{B_1^+}$. Then $\proj_\beta$ is a projection onto the span of $\ket{B_1^+}$ and $\ket{B_1^-}$.

We know from the proof of Lemma \ref{lem:WFo_proj} that 
\begin{align*}
\Big| \Big( \bra{A_1^+} + \bra{A_1^-} \Big) \Big( \kpiz + \kselfloop \Big) \Big| = 2 - O\Big(\frac{1}{\log N} \Big),
\end{align*}
where $\ket{A_1^+}, \ket{A_1^-}$ are the normalizations of $\evec$ and $\evecb$, respectively. We prove Lemma \ref{lem:WF_proj} by showing that $\ket{A_1^+}$ and $\ket{A_1^-}$ have large overlap with $\ket{B_1^+}$ and $\ket{B_1^-}$, respectively.

We know from \myeqref{eq:e_beta_perp_def} that \begin{align*}
\| \ket{e_\beta} \|^2 &= \| \ket{e_\beta^\perp} \|^2 + \| \kft \|^2 \\
&= g_0^2 \cot^2 \! \bigg(\frac{\beta}{2} \bigg) + \sum_{k=1}^{m} g_k^2 \Bigg[ \cot^2 \! \bigg(\frac{\beta - \varphi_k}{2} \bigg) + \cot^2 \! \bigg( \frac{\beta + \varphi_k}{2} \bigg) \Bigg] + g_{-1}^2 \tan^2 \! \bigg(\frac{\beta}{2} \bigg) + 1.
\end{align*}
By Lemmas \ref{lem:kft_coefficients} and \ref{lem:WF_ephase}, we know that $g_0^2 \cot^2(\frac{\beta}{2}) \in O(\nsize \log \nsize)$ and that $g_{-1}^2 \tan^2(\frac{\beta}{2}) \in O(\frac{1}{\nsize \log \nsize})$. Recall from \myeqref{eq:beta_bound_pt2} that \begin{align*}
\sum_{k=2}^{m} g_k^2 \Bigg[ \cot^2 \! \bigg(\frac{\beta - \varphi_k}{2} \bigg) + \cot^2 \! \bigg(\frac{\beta + \varphi_k}{2} \bigg) \Bigg] \in O \Big(\frac{1}{\log \nsize} \Big).
\end{align*}
Finally, we know from Lemma \ref{lem:WF_ephase} that $\cot^2(\frac{\beta + \phione}{2}) \in O(\nsize \log \nsize)$. Thus, \begin{align*}
\| \ket{e_\beta} \|^2 = g_1^2 \cot^2 \! \bigg( \frac{\beta - \phione}{2} \bigg) + O(\nsize \log \nsize).
\end{align*}

By Lemma \ref{lem:kft_coefficients}, $g_1^2 \in O(\frac{1}{\log \nsize})$, and by Lemma \ref{lem:WF_ephase}, $\cot^2(\frac{\beta - \phione}{2}) \in \Omega( \nsize(\log \nsize)^{3})$. Therefore,
\begin{align*}
|\inner{A_1^+}{B_1^+}|^2 &= \frac{|\inner{A_1^+}{e_\beta}|^2}{\| \ket{e_\beta} \|^2} \\
&= \frac{|\inner{A_1^+}{f_2} + \imath \inner{A_1^+}{e_\beta^\perp}|^2}{\| \ket{e_\beta} \|^2} \\
&= \frac{\Big| g_1 + \imath g_1 \cot \Big(\frac{\beta - \phione}{2} \Big) \Big|^2}{\| \ket{e_\beta} \|^2}\\
&= 1 - O \Big( \frac{1}{\log \nsize} \Big),
\end{align*}
so $|\inner{A_1^+}{B_1^+}| = 1 - O(\frac{1}{\log \nsize})$. Similarly, one can show that \begin{align*}
|\inner{A_1^-}{B_1^+}| &\in O \Big( \frac{1}{\log^2 \nsize} \Big), \\
|\inner{A_1^+}{B_1^-}| &\in O \Big( \frac{1}{\log^2 \nsize} \Big), \\
|\inner{A_1^-}{B_1^-}| &= 1 - O \Big( \frac{1}{\log \nsize} \Big).
\end{align*}
Combining these results, we get \begin{align*}
\Big| \Big( \bra{B_1^+} + \bra{B_1^-} \Big) \Big( \ket{A_1^+} + \ket{A_1^-} \Big) \Big| = 2 - O \Big( \frac{1}{\log \nsize} \Big).
\end{align*}

Therefore, \begin{align*}
\Big \| \proj_\beta \Big( \kpiz + \kselfloop \Big) \Big \| &\geq \frac{1}{\sqrt{2}} \Big| \Big( \bra{B_1^+} + \bra{B_1^-} \Big) \Big( \kpiz + \kselfloop \Big) \Big| \\
&\geq \frac{1}{2\sqrt{2}} \Big| \Big( \bra{B_1^+} + \bra{B_1^-} \Big) \Big( \ket{A_1^+} + \ket{A_1^-} \Big) \Big| \Big| \Big(\bra{A_1^+} + \bra{A_1^-} \Big) \Big( \kpiz + \kselfloop \Big) \Big| \\
& =  \sqrt{2} - O \Big( \frac{1}{\log \nsize} \Big).
\end{align*}
\end{proof}

\subsection{Proof of main result}

Through Lemmas \ref{lem:WF_ephase} and \ref{lem:WF_proj}, we have an asymptotic description of the slowest rotational subspace of $\opW \opF$. The description shows that as $\nsize$ increases, the action of $\opW \opF$ on $\kpiz$ approaches a rotation in this slowest rotational subspace. This property is what allows us to map our initial state to the target state $\kselfloop$ with probability approaching 1. Applying the relationship $\opW \opF = \subz{\opU}$, we thus obtain the proof of our main result.

\thmsp
\mainresult*

\begin{proof}
First, observe that for any $k$, \begin{align*}
\opU^k \kpi = \cz (\opW \opF)^k \cz \kpi = \cz (\opW \opF)^k \kpiz.
\end{align*}
Thus, we prove that after $c \sqrt{\nsize \log \nsize}$ applications of $\opW \opF$ to $\kpiz$, measuring the state will produce $\cz \kselfloop = \kselfloop$ with the stated probability.

Recall from \myeqref{eq:piz_decomp} that $\kpiz$ can be decomposed as \begin{align*}
\kpiz = \frac{1}{2} \kustat + \frac{1}{2} \Big( \kpiz + \kselfloop \Big) .
\end{align*}
By Lemma \ref{lem:ustat}, we know that $\kustat$ is a $(+1)$-eigenvector of $\opW \opF$. Letting $\proj_\beta$ denote the projection onto the slowest rotational subspace of $\opW \opF$, we decompose $\kpiz + \kselfloop$ as \begin{align*}
\kpiz + \kselfloop = \proj_\beta \Big( \kpiz + \kselfloop \Big) + \ket{\bot}.
\end{align*}
for some vector $\ket{\bot}$. By Lemma \ref{lem:WF_proj}, we know that $\| \ket{\bot} \| \in O(\frac{1}{\log \nsize})$. We also know from Lemma \ref{lem:WF_ephase} that applying $\opW \opF$ to a vector in the slowest rotational subspace will result in a rotation of the vector by the angle $\beta \in \Theta(\frac{1}{\sqrt{\nsize \log \nsize}})$. Therefore, there exists a constant $c$ such that $c \sqrt{\nsize \log \nsize} = \lfloor \frac{\pi}{\beta} \rfloor = k$. After $k$ applications of $\opW \opF$ to $\kpiz$, we get the state \begin{align*}
(\opW \opF)^k \kpiz &= \frac{1}{2} \Big( \kpiz - \kselfloop \Big) - \frac{1}{2} \proj_\beta \Big( \kpiz + \kselfloop \Big) + \ket{\rho} \\
&= - \kselfloop + \frac{1}{2} \ket{\bot} + \ket{\rho}.
\end{align*}
Here, $\ket{\rho}$ is some state that captures both the result of applying $(\opW \opF)^k$ to $\ket{\bot}$ and the small error incurred by the rounding of $\frac{\pi}{\beta}$, and has norm $\| \ket{\rho} \| \in O(\frac{1}{\log \nsize})$.
Thus, measuring the state will produce $\kselfloop$ with probability $1 - e(\nsize)$, where $e(\nsize) \in O(\frac{1}{\log \nsize})$.
\end{proof}

\section{Conclusion}

We give a quantum walk that uses minimal memory and $\Theta(\sqrt{\nsize \log \nsize})$ steps to find a unique marked vertex on a two-dimensional grid. In doing so, we show how interpolated walks can be adapted to the memoryless setting. By adding a selfloop to the marked vertex, our walk boosts the probability of measuring the marked state from $O(\frac{1}{\log \nsize})$ to $1 - O(\frac{1}{\log \nsize})$, while preserving the simplicity of the tessellation-based structure. 

We give a precise analysis of how the selfloop affects the walk dynamics by showing that our walk asymptotically reduces to a rotation in a single two-dimensional subspace. Applying this rotation evolves the initial state to the selfloop state, from which the marked state can be obtained by straightforward amplitude amplification.

As part of our proof, we give a precise description of the slowest rotational subspace of our memoryless walk operator. This is done using its decomposition into a real operator composed with a two-dimensional rotation. The techniques we use to analyse such an operator are general enough they have the potential to be used in the analysis of other walks as well. This includes developing and analysing memory-optimal spatial search algorithms for other types of graph, as well as for handling graphs with multiple marked vertices.

\section*{Acknowledgements}

The authors are grateful to Zhan Yu for collaboration in the early stages of the project. This work was supported in part by the Alberta Graduate Excellence Scholarship program (AGES), the Alberta Innovates Graduate Student Scholarships program, and the National Sciences and Engineering Research Council of Canada (NSERC).


\newcommand{\etalchar}[1]{$^{#1}$}


\section{Appendix}

\subsection{Composition with a reflection} \label{app:flipflop}

We discuss techniques to characterize the spectra for the composition of a real operator with a one-dimensional reflection. Operators with this structure appear at multiple points in our work. In this section, we present two lemmas from the quantum walk literature that we apply in our analysis of both $\opW \opFo$ and $\opW \opF$.

Consider an arbitrary real unitary operator $\opT$ acting on a space $\hilb$, and let $\ks\in \hilb$ be a state with real amplitudes. Define $\refs = \opid - \ketbra{\ts}{\ts}$ to be the reflection of state $\ks$. The goal of this section is to describe the spectra of the operator $\opT \refs$.

Because $\opT$ is real-valued, its eigenvalues different from $\pm 1$ come in complex conjugate pairs. We denote the eigenvalues as $e^{\pm \imath \phi_k}$ for $k = 1, 2, \hdots m$, corresponding to the eigenvectors $\ket{T_k^\pm}$. We then decompose $\ks$ into the eigenbasis of $\opT$ as \begin{align}
\ks = s_0 \ket{T_0} + \sum_k s_k \Big( \ket{T_k^+} + \ket{T_k^-} \Big) + s_{-1} \ket{T_{-1}}.
\end{align}
Here, $\ket{T_0}$ and $\ket{T_{-1}}$ are eigenvectors of $\opT$ with eigenvalues $+1$ and $-1$, respectively. The coefficients $s_0$, $s_{-1},$ and all $s_k$ are chosen to be non-negative real numbers by multiplying the eigenvectors with appropriate phases. This decomposition allows us to state the following lemma, originally given by~\cite{Amb07}.

\thmsp
\begin{lemma} \label{lem:flipflop_constraints}
Consider the (unnormalized) state $\ke = \ks + \imath \keperp$, where \begin{align}
\keperp = s_0 \cot \! \bigg( \frac{\alpha}{2} \bigg) \ket{T_0} + \sum_k s_k \Bigg[ \cot \! \bigg( \frac{\alpha - \phi_k}{2} \bigg) \ket{T_k^+} + \cot \! \bigg( \frac{\alpha + \phi_k}{2} \bigg) \ket{T_k^-}  \Bigg] - s_{-1} \tan \! \bigg(\frac{\alpha}{2} \bigg) \ket{T_{-1}},
\end{align}
and $\keperp$ is orthogonal to $\ks$. 
If $\alpha$ is a solution of the equation \begin{align}
s_0^2 \cot \! \bigg( \frac{\alpha}{2} \bigg) + \sum_k s_k^2 \Bigg[ \cot \! \bigg(\frac{\alpha - \phi_k}{2} \bigg) + \cot \! \bigg( \frac{\alpha + \phi_k}{2} \bigg) \Bigg] - s_{-1}^2 \tan \! \bigg(\frac{\alpha}{2} \bigg) = 0,
\end{align}
then $\ke$ is an eigenvector of $\opT \refs$ with eigenvalue $e^{\imath \alpha}$.
\end{lemma}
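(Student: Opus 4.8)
The plan is a direct verification that the prescribed vector $\ke = \ks + \imath\keperp$ satisfies $\opT\refs\ke = e^{\imath\alpha}\ke$, carried out one $\opT$-invariant subspace at a time. The first and only conceptual step is to recognise what the displayed equation is doing: expanding $\inner{\ts}{\te^\perp}$ in the eigenbasis of $\opT$ and using that $\ket{T_0},\ket{T_k^\pm},\ket{T_{-1}}$ are orthonormal together with the stated coefficients of $\ks$ and $\keperp$, one sees that $\inner{\ts}{\te^\perp}$ equals exactly the left-hand side of the displayed equation. So the hypothesis ``$\alpha$ solves the equation'' is nothing other than $\keperp\perp\ks$ (which is also the orthogonality asserted in the statement). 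Granting it, and using that $\refs$ negates $\ks$ and fixes every vector orthogonal to $\ks$, we obtain the clean identity $\refs\ke = -\ks + \imath\keperp$.

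With $\refs\ke = -\ks + \imath\keperp$ in hand, the rest is a termwise check after applying $\opT$. On a rotation block $\tspan\{\ket{T_k^+},\ket{T_k^-}\}$, the $\ket{T_k^+}$-coefficient of $-\ks+\imath\keperp$ is $s_k\bigl(-1 + \imath\cot\tfrac{\alpha-\phi_k}{2}\bigr)$, which $\opT$ scales by $e^{\imath\phi_k}$, while the $\ket{T_k^+}$-coefficient of $e^{\imath\alpha}\ke$ is $e^{\imath\alpha}s_k\bigl(1 + \imath\cot\tfrac{\alpha-\phi_k}{2}\bigr)$; the two coincide by the elementary identities $-1 + \imath\cot\theta = \imath e^{\imath\theta}/\sin\theta$ and $1 + \imath\cot\theta = \imath e^{-\imath\theta}/\sin\theta$ applied at $\theta = \tfrac{\alpha-\phi_k}{2}$, both sides reducing to $s_k\,\imath\,e^{\imath(\alpha+\phi_k)/2}/\sin\theta$. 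The $\ket{T_k^-}$ component is the identical calculation with $\theta = \tfrac{\alpha+\phi_k}{2}$, the $\ket{T_0}$ component (eigenvalue $+1$) with $\theta = \tfrac{\alpha}{2}$, and the $\ket{T_{-1}}$ component (eigenvalue $-1$) uses instead $1 \pm \imath\tan\theta = e^{\pm\imath\theta}/\cos\theta$ at $\theta = \tfrac{\alpha}{2}$, the extra sign being supplied by the eigenvalue $-1$. Each match is an identity in $\alpha$, so no further constraint is generated and $\opT\refs\ke = e^{\imath\alpha}\ke$ follows.

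There is no real obstacle here --- the argument is mechanical --- but two small points warrant care. First, the whole calculation hinges on the single reduction $\refs\ke = -\ks + \imath\keperp$, so the place to be precise is the bookkeeping identifying the displayed equation with $\inner{\ts}{\te^\perp}=0$. Second, the formula defining $\keperp$ is only meaningful when $\sin\tfrac{\alpha}{2}$, $\sin\tfrac{\alpha\pm\phi_k}{2}$, and $\cos\tfrac{\alpha}{2}$ are all nonzero --- i.e. $\alpha\neq 0,\pi$ and $\alpha\neq\pm\phi_k$ --- which is implicit in the statement and is exactly what makes $e^{\imath\alpha}$ a genuinely new eigenvalue of $\opT\refs$, distinct from those of $\opT$; it is also why the lemma is later applied only to eigenphases different from $0$ and $\pi$.
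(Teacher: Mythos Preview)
Your direct verification is correct. The key identification that the displayed equation is precisely $\inner{\ts}{\te^\perp}=0$, hence $\refs\ke=-\ks+\imath\keperp$, together with the termwise check via $-1+\imath\cot\theta=\imath e^{\imath\theta}/\sin\theta$ and $1+\imath\cot\theta=\imath e^{-\imath\theta}/\sin\theta$, goes through exactly as you wrote (and your handling of the $(-1)$-eigenvector via $1\pm\imath\tan\theta=e^{\pm\imath\theta}/\cos\theta$ is likewise correct).

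As for comparison: the paper does not give its own proof of this lemma. It is stated in Appendix~\ref{app:flipflop} as a known result attributed to~\cite{Amb07}, so there is no in-paper argument to compare against. Your elementary verification is the standard one; the only thing to flag is the paper's typo $\refs=\opid-\ketbra{\ts}{\ts}$, which must of course be read as the reflection $\opid-2\ketbra{\ts}{\ts}$ for the computation (and the lemma) to make sense, exactly as you assumed.
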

\thmsp

This lemma allows us to determine the eigenvectors and eigenvalues of $\opT \refs$ by specifying a set of constraints they must satisfy. The lemma is applied in~\cite{Amb07}, and with slight variations in~\cite{AKR05},~\cite{Tul08} and~\cite{DH17b}, to obtain bounds on the smallest eigenphase of a walk operator. We use the lemma for the same purpose, applying it to obtain a lower bound for the smallest eigenphase of $\opW \opFo$ and $\opW \opF$ in Lemmas~\ref{lem:WFo_ephase} and \ref{lem:WF_ephase}. We also use a similar technique in our analysis of the eigenvector $\evec$ in Lemma~\ref{lem:WFo_constraints}, where we derive a set of constraints and use them to find properties of $\aco$ and~$\kpsi$.

The next theorem we state describes the behaviour of the eigenphases of $\opT \refs$ in relation to those of $\opT$. The flip-flop theorem of~\cite{DH17b} describes how the eigenphases of the operators interlace, with the exact pattern of interlacing depending on the eigenspaces of $\opT$ that $\ks$ intersects. We limit the theorem statement to the case we apply in this paper, where $\ks$ intersects the $(+1)$-eigenspace, the $(-1)$-eigenspace, and at least one other eigenspace of $\opT$.

\thmsp
\begin{theorem}[Flip-flop theorem] \label{lem:flipflop_interlace}
Consider any real unitary $\opT$ and let $\ks$ be a state with real amplitudes in the same space. Denote the positive eigenphases of $\opT$ different from $0, \pi$ by $0 < \phi_1 \leq \phi_2 \leq \cdots \leq \phi_m < \pi$. If $s_0 \neq 0$, $s_{-1} \neq 0$ and $s_k \neq 0$ for some $k$, then $\opT \refs$ has $m+1$ two-dimensional eigenspaces, and no $(+1)$- or $(-1)$-eigenspaces which overlap $\ks$. The positive eigenphases $\alpha_j$ of $\opT \refs$ satisfy the inequality $0 < \alpha_0 < \phi_1 \leq \alpha_1 \leq \cdots \leq \phi_m \leq \alpha_m < \pi$.
\end{theorem}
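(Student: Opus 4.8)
The plan is to reduce to the cyclic subspace generated by $\ks$, identify the ``new'' eigenvalues of $\opT\refs$ as the roots of the secular equation of Lemma~\ref{lem:flipflop_constraints}, and then obtain the interlacing from an elementary monotonicity analysis of that equation.

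First I would pass to $\mathcal{S}$, the smallest $\opT$-invariant subspace containing $\ks$ (equivalently, the span of $\{\opT^{n}\ks : n\in\mathbb{Z}\}$). Since $\refs$ fixes every vector orthogonal to $\ks$ and $\mathcal{S}^{\perp}$ is $\opT$-invariant, $\opT\refs$ decomposes as $(\opT\refs|_{\mathcal{S}})\oplus(\opT|_{\mathcal{S}^{\perp}})$, so every eigenvector of $\opT\refs$ overlapping $\ks$ lies in $\mathcal{S}$, while on $\mathcal{S}^{\perp}$ we have $\opT\refs=\opT$. On $\mathcal{S}$ the operator $\opT$ has simple spectrum, because the projection of $\ks$ onto any eigenspace of $\opT$ is one-dimensional, so $\mathcal{S}$ meets each eigenspace of $\opT$ in at most a line. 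Writing $p$ for the number of distinct positive eigenphases $\neq 0,\pi$ of $\opT$ overlapped by $\ks$, the hypotheses $s_0\neq 0$ and $s_{-1}\neq 0$ give $\dim\mathcal{S}=2p+2$, with an orthonormal basis $\{\ket{T_0},\ket{T_{-1}},\ket{T_k^{\pm}}\}$ in which $\ks$ has all coefficients nonzero.

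Next, by Lemma~\ref{lem:flipflop_constraints}, writing $f(\alpha)=s_0^2\cot(\alpha/2)+\sum_k s_k^2[\cot(\tfrac{\alpha-\phi_k}{2})+\cot(\tfrac{\alpha+\phi_k}{2})]-s_{-1}^2\tan(\alpha/2)$ for the left-hand side of the constraint equation there, every $\alpha\in(0,\pi)$ with $f(\alpha)=0$ gives an eigenvector $\ke=\ks+\imath\keperp$ of $\opT\refs$ with eigenvalue $e^{\imath\alpha}$, and its entrywise conjugate gives $e^{-\imath\alpha}$. The crux is to count the zeros of $f$ on $(0,\pi)$: differentiating term by term shows $f$ is strictly decreasing on each interval between consecutive poles; the poles of $f$ in $(0,\pi)$ are exactly $0$, $\pi$, and the distinct $\phi_k$ overlapped by $\ks$; near each interior pole $f$ runs from $+\infty$ down to $-\infty$; and --- this is where $s_0\neq 0$ and $s_{-1}\neq 0$ are essential --- $f(\alpha)\to+\infty$ as $\alpha\to 0^{+}$ while $f(\alpha)\to-\infty$ as $\alpha\to\pi^{-}$. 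Hence $f$ has exactly one zero strictly inside each of the $p+1$ gaps cut out in $(0,\pi)$ by the overlapped $\phi_k$, and these $p+1$ zeros $\alpha_0<\cdots<\alpha_p$ are distinct and lie in the open interval.

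Finally I would assemble the pieces. Eigenvectors of $\opT\refs|_{\mathcal{S}}$ for the $2(p+1)$ distinct eigenvalues $e^{\pm\imath\alpha_j}$ are linearly independent, and $2(p+1)=\dim\mathcal{S}$, so they exhaust $\opT\refs|_{\mathcal{S}}$; in particular $\opT\refs$ has no $(\pm1)$-eigenvector in $\mathcal{S}$, and since any $(\pm1)$-eigenvector of $\opT\refs$ lies in $\mathcal{S}$ or in $\mathcal{S}^{\perp}\perp\ks$, no $(\pm1)$-eigenspace of $\opT\refs$ overlaps $\ks$. Combined with $\opT\refs|_{\mathcal{S}^{\perp}}=\opT|_{\mathcal{S}^{\perp}}$, the complete list of two-dimensional rotating eigenspaces of $\opT\refs$ is these $p+1$ together with those inherited from the eigenspaces of $\opT$ that $\ks$ meets in lower dimension or misses entirely; a short count gives $(p+1)+(m-p)=m+1$ of them, and interleaving the inherited eigenphases with the secular zeros yields $0<\alpha_0<\phi_1\le\alpha_1\le\cdots\le\phi_m\le\alpha_m<\pi$, with strict inequalities except at repeated eigenphases of $\opT$ or eigenphases not overlapped by $\ks$. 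I expect this last bookkeeping to be the main obstacle: in the generic case (simple spectrum, $\ks$ hitting every eigenspace of $\opT$) the monotonicity argument alone gives the strict chain with $m=p$ directly, but pinning down exactly $m+1$ eigenspaces and the weak interlacing in the presence of degeneracies and missing overlaps requires carefully accounting for the eigenspaces of $\opT$ that $\ks$ does not see.
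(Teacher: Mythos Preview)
The paper does not prove this theorem; it is quoted from~\cite{DH17b} in Appendix~\ref{app:flipflop} and stated without proof, so there is no in-paper argument to compare against. Your approach --- restrict to the cyclic subspace $\mathcal{S}$, identify the new eigenphases as the roots of the secular function $f$ from Lemma~\ref{lem:flipflop_constraints}, and read off interlacing from the strict monotonicity of $f$ between its poles --- is the standard proof of this kind of rank-one interlacing and is essentially correct. The key steps (termwise $f'<0$; poles of $f$ on $(0,\pi)$ exactly at the overlapped $\phi_k$; $f\to+\infty$ at $0^+$ and $f\to-\infty$ at $\pi^-$ forced by $s_0\neq 0$ and $s_{-1}\neq 0$; one root per gap; dimension count $2(p+1)=\dim\mathcal{S}$) are all sound.

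You rightly flag the degenerate bookkeeping as the only real work left. One point to make explicit when you do it: the statement lists the eigenphases $\phi_1\le\cdots\le\phi_m$ \emph{with multiplicity}, so an eigenphase $\phi$ of multiplicity $d$ that $\ks$ overlaps contributes one secular root strictly inside the gap to its right and $d-1$ inherited copies of $\phi$ from $\mathcal{S}^\perp$; an eigenphase $\ks$ misses entirely contributes $d$ inherited copies. This is exactly what produces the weak inequalities $\phi_k\le\alpha_k\le\phi_{k+1}$ at repeats or gaps, and summing gives the count $(p+1)+(m-p)=m+1$ you claim. With that accounting written out, the argument closes.
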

\thmsp

We apply this theorem in Lemmas~\ref{lem:WFo_ephase} and \ref{lem:WF_ephase} to obtain an upper bound on the smallest positive eigenphases of $\opW \opFo$ and $\opW \opF$, respectively. One of the contributions of our work is to show how Theorem \ref{lem:flipflop_interlace} can be used in combination with Lemma \ref{lem:flipflop_constraints} to tightly bound these eigenphases. We show that this approach can be used in the case of an operator composed with a reflection, and then by applying a second reflection, to an operator composed with a two-dimensional rotation.

\subsection{Decomposition of $\kplus$ and $\kminus$} \label{app:proof_of_claim1}

To analyse the behaviour of $\opW \opFo$ and $\opW \opF$, we specify how $\opW$ acts on vectors in the non-trivial eigenspaces of $\opF$. Recall from the definitions in \myeqref{eq:kplus_def} and \myeqref{eq:kminus_def} that $\kplus$ and $\kminus$ are orthonormal vectors that have the same span as $\kmarked$ and $\keg$. Together with $\kselfloop$, they span a space that includes the two-dimensional subspace on which $\opF$ acts non-trivially. In this appendix, we prove Lemma \ref{lem:claim1}, which describes how $\kplus$ and $\kminus$ decompose into the invariant subspaces of $\opW$.

To simplify notation, define
\begin{align*}
  { \everymath={\displaystyle}
  \begin{array}{l@{\;=\;}c@{\;=\;}l@{\;=\;}r}
    s_\kl^+ 
    & \frac{1}{2}(r_\kl^+ + r_\kl^-)
    & \multicolumn{2}{@{}l}{
      \hphantom{\signl} \sqrt{1+ \frac{\sin \tl}{p_\kl}}} \\[4.8mm]
    s_\kl^- 
    & \frac{1}{2}(r_\kl^+ - r_\kl^-)
    & \multicolumn{2}{@{}l}{
      \signl \sqrt{1- \frac{\sin \tl}{p_\kl}}}
  \end{array}
      }
\end{align*}
and
\begin{align*}
  { \everymath={\displaystyle}
  \begin{array}{l@{\;=\;}c@{\;=\;}l@{\;=\;}r}
    d_\kl^+ 
    & \frac{1}{2}(c_\kl^+ + c_\kl^-)
    & \multicolumn{2}{@{}l}{
      \hphantom{\signk} \sqrt{1+ \frac{\sin \tk}{p_\kl}}} \\[4.8mm]
    d_\kl^- 
    & \frac{1}{2}(c_\kl^+ - c_\kl^-)
    & \multicolumn{2}{@{}l}{
      \signk \sqrt{1- \frac{\sin \tk}{p_\kl}}}.
  \end{array}
      }
\end{align*}

In the case where $k=l=0$, we define $s^+_{00} = d^+_{00} = \sqrt{2}$ and $s^-_{00} = d^-_{00} = 0$. Note that $r_\kl^\pm = s_\kl^+ \pm s_\kl^-$ and $c_\kl^\pm = d_\kl^+ \pm d_\kl^-$.

Recall that both $\keg$ and $\kmarked$ lie in the span of the basis states $\ket{00}$, $\ket{01}$, $\ket{10}$ and $\ket{11}$.
We compute the projections of these basis states
onto the components of the eigenvectors of~$\opW$.

\begin{minipage}{.5\textwidth}
  \begin{align}
    \inner{0}{u_\kl}
    &= \sqrt{2} \inner{0}{r_\kl} \inner{0}{\phi^k_r}
      = \frac{1}{\sqrt{2 \nrows}} r_\kl^{-} \notag \\
    \inner{1}{u_\kl}
    &= \sqrt{2} \inner{1}{r_\kl} \inner{1}{\phi^k_r}
      = \frac{1}{\sqrt{2 \nrows}} r_\kl^{+} \omega^k_{\nrows} \notag
  \end{align}
\end{minipage}
\begin{minipage}{.5\textwidth}
  \begin{align}
    \inner{0}{v_\kl}
    &= \sqrt{2} \inner{0}{c_\kl} \inner{0}{\phi^l_c}
      = \frac{1}{\sqrt{2 \ncols}} c_\kl^{-} \notag\\
    \inner{1}{v_\kl}
    &= \sqrt{2} \inner{1}{c_\kl} \inner{1}{\phi^l_c}
      = \frac{1}{\sqrt{2 \ncols}} c_\kl^{+} \omega^l_{\ncols} \notag
  \end{align}
\end{minipage}

\begin{minipage}{.5\textwidth}
  \begin{align*}
    \inner{0}{u^1_\kl}
    &= \sqrt{2} \inner{0}{r^1_\kl} \inner{0}{\phi^k_r}
      = - \frac{1}{\sqrt{2 \nrows}} r_\kl^{+} \\
    \inner{1}{u^1_\kl}
    &= \sqrt{2} \inner{1}{r^1_\kl} \inner{1}{\phi^k_r}
      = \frac{1}{\sqrt{2 \nrows}} r_\kl^{-} \omega^k_{\nrows} 
  \end{align*}
\end{minipage}
\begin{minipage}{.5\textwidth}
  \begin{align*}
    \inner{0}{v^1_\kl}
    &= \sqrt{2} \inner{0}{c^1_\kl} \inner{0}{\phi^l_c}
      = - \frac{1}{\sqrt{2 \ncols}} c_\kl^{+} \\
    \inner{1}{v^1_\kl}
    &= \sqrt{2} \inner{1}{c^1_\kl} \inner{1}{\phi^l_c}
      = \frac{1}{\sqrt{2 \ncols}} c_\kl^{-} \omega^l_{\ncols}.
  \end{align*}
\end{minipage}

Now, using the property that
\begin{align*}
  \frac{1}{2}
  \Big(r_\kl^{+} \omega^k_{\nrows} + r_\kl^{-}\Big) 
  &= \frac{1}{2} \omega^{k/2}_{\nrows}
    \Big(
    r_\kl^{+} \omega^{k/2}_{\nrows}
    + r_\kl^{-}\omega_\nrows^{-k/2} 
    \Big) 
    = \omega^{k/2}_{\nrows}
    \Big(
    \cos\bigg( \frac{\tk}{2} \bigg) s_\kl^{+}
    + \imath \sin \bigg( \frac{\tk}{2}\bigg) s_\kl^{-}
    \Big)    \\
  \frac{1}{2}
  \Big(r_\kl^{-} \omega^k_{\nrows} - r_\kl^{+}\Big) 
  &= \frac{1}{2} \omega^{k/2}_{\nrows}
    \Big(
    r_\kl^{-} \omega^{k/2}_{\nrows}
    - r_\kl^{+}\omega_\nrows^{-k/2} 
    \Big) 
    = \omega^{k/2}_{\nrows}
    \Big(
    -\cos \bigg(\frac{\tk}{2} \bigg) s_\kl^{-}
    + \imath \sin\bigg( \frac{\tk}{2} \bigg) s_\kl^{+}
    \Big),    
\end{align*}
we compute
\begin{align*}
  \frac{1}{\sqrt{2}} (\bra{0} + \bra{1}) \ket{u_\kl}
  &= \frac{1}{\sqrt{\nrows}} \omega^{k/2}_{\nrows}
    \Big(
    \cos \bigg( \frac{\tk}{2} \bigg) s_\kl^{+}
    + \imath \sin\bigg(\frac{\tk}{2} \bigg) s_\kl^{-}
    \Big) \\
  \frac{1}{\sqrt{2}} (\bra{0} + \bra{1}) \ket{u^1_\kl}
  &= \frac{1}{\sqrt{\nrows}} \omega^{k/2}_{\nrows}
    \Big(
    -\cos\bigg(\frac{\tk}{2}\bigg) s_\kl^{-}
    + \imath \sin\bigg(\frac{\tk}{2}\bigg) s_\kl^{+}
    \Big) \\
  \frac{1}{\sqrt{2}} (\bra{0} + \bra{1}) \ket{v_\kl}
  &= \frac{1}{\sqrt{\ncols}} \omega^{l/2}_{\ncols}
    \Big(
    \cos\bigg(\frac{\tl}{2}\bigg) d_\kl^{+}
    + \imath \sin\bigg(\frac{\tl}{2}\bigg) d_\kl^{-}
    \Big) \\
  \frac{1}{\sqrt{2}} (\bra{0} + \bra{1}) \ket{v^1_\kl}
  &= \frac{1}{\sqrt{\ncols}} \omega^{l/2}_{\ncols}
    \Big(
    -\cos\bigg(\frac{\tl}{2}\bigg) d_\kl^{-}
    + \imath \sin\bigg(\frac{\tl}{2}\bigg) d_\kl^{+}
    \Big).
\end{align*}

Excluding the case $k=l=0$, the squared amplitudes of the projections are then 
\newline
\begin{minipage}{.49\textwidth}
  \begin{align*}
    \big\| \inner{0}{u_\kl} \big\|^2
    &= \frac{1}{\nrows} 
      \bigg(
      1 - \frac{\sin \tk  \cos\tl}{p_\kl}\,
      \bigg) \\
    \big\| \inner{0}{u^1_\kl} \big\|^2
    &= \frac{1}{\nrows} 
      \bigg(
      1 + \frac{\sin \tk  \cos\tl}{p_\kl}\,
      \bigg) \\
    \big\| \inner{0}{v_\kl} \big\|^2
    &= \frac{1}{\ncols} 
      \bigg(
      1 - \frac{\cos \tk  \sin\tl}{p_\kl}\,
      \bigg)\\
    \big\| \inner{0}{v^1_\kl} \big\|^2
    &= \frac{1}{\ncols} 
      \bigg(
      1 + \frac{\cos \tk  \sin\tl}{p_\kl}\,
      \bigg)\\
  \end{align*}
\end{minipage}
\begin{minipage}{.49\textwidth}
  \begin{align*}
    \bigg\| \frac{1}{\sqrt{2}} (\bra{0} + \bra{1}) \ket{u_\kl} \bigg\|^2
    &= \frac{1}{\nrows} 
      \bigg(
      1 + \frac{\cos \tk \sin \tl}{p_\kl}\,
      \bigg) \\
    \bigg\| \frac{1}{\sqrt{2}} (\bra{0} + \bra{1}) \ket{u^1_\kl} \bigg\|^2
    &= \frac{1}{\nrows} 
      \bigg(
      1 - \frac{\cos \tk \sin \tl}{p_\kl}\,
      \bigg) \\ 
    \bigg\| \frac{1}{\sqrt{2}} (\bra{0} + \bra{1}) \ket{v_\kl} \bigg\|^2
    &= \frac{1}{\ncols} 
      \bigg(
      1 + \frac{\sin \tk \cos \tl}{p_\kl}\,
      \bigg)\\
    \bigg\| \frac{1}{\sqrt{2}} (\bra{0} + \bra{1}) \ket{v^1_\kl} \bigg\|^2
    &= \frac{1}{\ncols} 
      \bigg(
      1 - \frac{\sin \tk \cos \tl}{p_\kl}\,
      \bigg).
  \end{align*}
  \vspace*{2mm}
\end{minipage}

\begin{fact} \label{lem:keg_kmarked_proj}
For any subspace $\opW_{kl}$, \begin{align}
\bra{\marked} \proj_{kl} \keg = \begin{cases}
\frac{1}{2} & \quad \text{if $k = l = 0$} \\
0 & \quad \text{otherwise}
\end{cases}.
\end{align}
\end{fact}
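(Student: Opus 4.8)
The plan is to expand $\bra{\marked}\proj_{kl}\keg$ over the $\opW$-eigenvectors spanning $\opW_{kl}$, using the product structure together with the overlap formulas just assembled. Write $\kmarked=\ket{0,0}$ and $\keg=\ket{a_{00}}=\tfrac12\sum_{i',j'=0}^{1}\ket{i',j'}$ in the row$\,\otimes\,$column factorisation of the grid; both vectors lie in the four-dimensional space spanned by $\ket{0,0},\ket{0,1},\ket{1,0},\ket{1,1}$, which is exactly the space on which we computed the projections of the eigenvectors $\ket{w^B_\kl}=\ket{u}\otimes\ket{v}$ above. For each such eigenvector,
\[
\bra{\marked}\kouter{w}\keg=\Big(\inner{0}{u}\;\overline{\tfrac{1}{\sqrt2}(\bra{0}+\bra{1})\ket{u}}\Big)\cdot\Big(\inner{0}{v}\;\overline{\tfrac{1}{\sqrt2}(\bra{0}+\bra{1})\ket{v}}\Big),
\]
and every factor appearing here is one of the quantities tabulated just above (or its conjugate). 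Hence $\bra{\marked}\proj_{kl}\keg=\sum_{\ket{w}}\bra{\marked}\kouter{w}\keg$ is a finite sum of known products.

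For $(k,l)\neq(0,0)$ the claim is that this sum vanishes. I would isolate one reusable identity: for every admissible index pair $(k,l)\neq(0,0)$,
\[
\bra{\marked}\kouter{w^{00}_\kl}\keg+\bra{\marked}\kouter{w^{11}_\kpl}\keg=0,\qquad \kp=\tfrac{\nrows}{2}-k .
\]
Granting this, $\bra{\marked}\proj_{kl}\keg=0$ for every $\opW_{kl}$ with $(k,l)\neq(0,0)$ and $k,l$ both nonzero: the generic subspace $\tspan\{\ket{w^{00}_\kl},\ket{w^{00}_\kplp},\ket{w^{11}_\kpl},\ket{w^{11}_\klp}\}$ is covered by applying the identity once to $(k,l)$ and once to $(\kp,\lp)$ (using $(\kp)'=k$), and the $\pi$-eigenvalue subspaces likewise split into pairs of the same form, the two-dimensional corner being a single such pair. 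To prove the identity, substitute into the overlap formulas for the partner $\ket{w^{11}_\kpl}$ the relations induced by $k\mapsto\kp$: $p_\kl$ is unchanged, the row angle becomes $\pi-\tk$ so $\cos\tfrac{\tk}{2}$ and $\sin\tfrac{\tk}{2}$ interchange, $\omega_\nrows^{k/2}$ becomes $\imath\,\overline{\omega_\nrows^{k/2}}$, and $r^{\pm}_\kl,c^{\pm}_\kl$ occur with their $\pm$ superscripts swapped; feeding this into the factored expression of the previous paragraph and simplifying with $(s^{\pm}_\kl)^2=1\pm\tfrac{\sin\tl}{p_\kl}$ and $(d^{\pm}_\kl)^2=1\pm\tfrac{\sin\tk}{p_\kl}$ turns the second term into minus the first. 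The remaining subspaces have $k=0$ or $l=0$; the pairing does not apply to them, but each term vanishes on its own — at a row index of $0$ one has $\tk=0$, which makes $\inner{0}{v_{0l}}$ proportional to $c^{-}_{0l}=d^{+}_{0l}-d^{-}_{0l}=0$ and the row factor of $\ket{w^{11}_{0\lp}}$ proportional to $s^{-}_{0\lp}=0$ (here $p_{0\lp}=\sin(\pi-\tl)$), so $\ket{w^{00}_{0l}}$ and $\ket{w^{11}_{0\lp}}$ both contribute $0$; the $l=0$ family is symmetric.

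For $(k,l)=(0,0)$ I would not compute directly. The $\nsize$ eigenvectors $\{\ket{w^B_\kl}\}$ together with $\kselfloop$ form an orthonormal basis of the $(\nsize+1)$-dimensional space, and the invariant subspaces $\opW_{k_1l_1}$ partition this basis, so $\sum_{k_1l_1}\proj_{k_1l_1}=\opid$. Therefore $\sum_{k_1l_1}\bra{\marked}\proj_{k_1l_1}\keg=\inner{0,0}{a_{00}}=\tfrac12$, and combined with the vanishing of every term with $(k_1,l_1)\neq(0,0)$ this forces $\bra{\marked}\proj_{00}\keg=\tfrac12$.

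The main obstacle is the pairwise identity: the cancellation is \emph{not} a matching of the phases $\omega_\nrows^{\pm k/2}\omega_\ncols^{\pm l/2}$ (these do not align between the paired terms) but a genuine identity entangling the half-angle factors with the radicals $s^{\pm}_\kl,d^{\pm}_\kl$, and carrying it through while keeping straight the several subspace shapes (generic, the two families of $\pi$-eigenvalue subspaces, the corner, the two boundary families, and the $k=l=0$ normalisations $r^{\pm}_{00}=c^{\pm}_{00}=\sqrt2$) is the bookkeeping-heavy part. If instead one invokes Lemma~\ref{lem:claim1} as given, there is a one-line alternative: from $\kmarked=\tfrac{\sqrt3}{2}\kplus+\tfrac12\kminus$ and $\keg=\tfrac{\sqrt3}{2}\kplus-\tfrac12\kminus$ one gets $\bra{\marked}\proj_{kl}\keg=\tfrac34\|\proj_{kl}\kplus\|^2-\tfrac14\|\proj_{kl}\kminus\|^2$, the cross terms vanishing because $\proj_{kl}\kplus\perp\proj_{kl}\kminus$, and the stated values then drop out of the norm formulas — but since that lemma is itself established here, presumably using the present fact, I would give the self-contained computation above.
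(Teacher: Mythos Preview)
Your proposal is correct and follows essentially the same route as the paper: both rely on the cancellation $\bra{\marked}\kouter{w^{00}_\kl}\keg=-\bra{\marked}\kouter{w^{11}_\kpl}\keg$ inside each $\opW_{kl}$, and both recover the $(0,0)$ value from completeness via $\inner{\marked}{\eg}=\tfrac12$. The only cosmetic difference is that the paper also records the companion identity with $l\mapsto l'$, which lets the boundary subspaces ($k=0$ or $l=0$) be handled by the same pairing, whereas you treat those two-dimensional boundary pieces by checking that each summand vanishes individually (via $c^-_{0l}=0$, $s^-_{0\lp}=0$, and their symmetric counterparts); both arguments are valid.
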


\begin{proof}
Recall that $\kmarked = \ket{00}$ and $\keg = \frac{1}{2} (\ket{0} + \ket{1}) \otimes (\ket{0} + \ket{1})$. Consider any $k,l$ not both $0$. Then \begin{align*}
& \quad \inner{\marked}{w_{\kpl}^{11}} \inner{w_{\kpl}^{11}}{\eg} \\
&= \frac{1}{2N} (r_{\kpl}^+ c_{\kpl}^+ ) \Big( \omega_{\nrows}^{\kp /2} \omega_{\ncols}^{l/2} \Big) \Big(-\cos \bigg( \frac{\tkp}{2} \bigg) s_{\kpl}^- + \imath \sin \bigg( \frac{\tkp}{2} \bigg) s_{\kpl}^+ \Big) \Big( -\cos \bigg( \frac{\tl}{2} \bigg) d_{\kpl}^- + \imath \sin \bigg( \frac{\tl}{2} \bigg) d_{\kpl}^+ \Big) \\
&= \frac{-1}{2N} (r_{\kl}^+ c_{\kl}^- ) \Big( \omega_{\nrows}^{-k/2} \omega_{\ncols}^{l/2} \Big) \Big(\cos \bigg( \frac{\tk}{2} \bigg) s_{\kl}^+ + \imath \sin \bigg( \frac{\tk}{2} \bigg) s_{\kl}^- \Big) \Big( \cos \bigg( \frac{\tl}{2} \bigg) d_{\kl}^- + \imath \sin \bigg( \frac{\tl}{2} \bigg) d_{\kl}^+ \Big) \\
&=  \frac{-1}{2N} (r_{\kl}^- c_{\kl}^- ) \Big( \omega_{\nrows}^{k/2} \omega_{\ncols}^{l/2} \Big) \Big(\cos \bigg( \frac{\tk}{2} \bigg) s_{\kl}^+ + \imath \sin \bigg( \frac{\tk}{2} \bigg) s_{\kl}^- \Big) \Big( \cos \bigg( \frac{\tl}{2} \bigg) d_{\kl}^+ + \imath \sin \bigg( \frac{\tl}{2} \bigg) d_{\kl}^- \Big) \\
&= \quad -\inner{\marked}{w_{\kl}^{00}} \inner{w_{\kl}^{00}}{\eg}.
\end{align*}

Similarly, it can be derived that \begin{align*}
\inner{\marked}{w_{\klp}^{11}} \inner{w_{\klp}^{11}}{\eg} &= -\inner{\marked}{w_{\kl}^{00}} \inner{w_{\kl}^{00}}{\eg}.
\end{align*}

Using the property that $(k')' = k$, this further implies that $\inner{\marked}{w_{\kplp}^{00}} \inner{w_{\kplp}^{00}}{\eg} = \inner{\marked}{w_{\kl}^{00}} \inner{w_{\kl}^{00}}{\eg}$. 
By definition of the invariant subspaces $\opW_\kl$ in Section~\ref{sec:invariant_subspaces}, this shows that for any $k, l$ not both $0$, $\bra{\marked} \proj_{kl} \keg = 0$. 
It follows that $\bra{\marked} \proj_{00} \keg = \inner{\marked}{\eg} = \frac{1}{2}$.
\end{proof}

\thmsp
\claimone*

\begin{proof}
Observe that for any $k, l$ not both zero, 
\begin{align*}
& \quad \| \inner{\marked}{w_\kl^{00}} \|^2 + \| \inner{\marked}{w_\klp^{11}} \|^2 + \| \inner{\marked}{w_\kpl^{11}} \|^2 + \| \inner{\marked}{w_\kplp^{00}} \|^2 \\[4mm]
&= \frac{1}{N} \bigg( 1 - \frac{\sin\tk\cos\tl}{p_\kl} \bigg) \! \bigg( 1 - \frac{\cos\tk\sin\tl}{p_\kl} \bigg) + \frac{1}{N} \bigg( 1 - \frac{\sin\tk\cos\tl}{p_\kl} \bigg) \! \bigg( 1 + \frac{\cos\tk\sin\tl}{p_\kl} \bigg) \nonumber \\[4mm]
&\quad + \frac{1}{N} \bigg( 1 + \frac{\sin\tk\cos\tl}{p_\kl} \bigg) \! \bigg( 1 - \frac{\cos\tk\sin\tl}{p_\kl} \bigg) + \frac{1}{N} \bigg( 1 + \frac{\sin\tk\cos\tl}{p_\kl} \bigg) \! \bigg( 1 + \frac{\cos\tk\sin\tl}{p_\kl} \bigg) \\[4mm]
&= \frac{4}{N} = \| \inner{\eg}{w_\kl^{00}} \|^2 + \| \inner{\eg}{w_\klp^{11}} \|^2 + \| \inner{\eg}{w_\kpl^{11}} \|^2 + \| \inner{\eg}{w_\kplp^{00}} \|^2.
\end{align*}

Therefore, for any subspace $\opW_{kl}$ with $kl \neq 00$, we have $\bra{\marked} \proj_{kl} \ket{\marked} = \bra{\eg} \proj_{kl} \ket{\eg} = \frac{\dim(\opW_{kl})}{N}$. Next, recall that $\opW_{00}$ refers to the $(+1)$-eigenspace of $\opW$. We know that both $\kmarked$ and $\keg$ are normalized, so
\begin{align*}
\bra{\marked} \proj_{00} \ket{\marked} = \bra{\eg} \proj_{00} \ket{\eg} = 1 - \sum_{\kl \neq 00} \frac{\dim(\opW_\kl)}{N} = \frac{N+4}{2N}.
\end{align*}

Applying Fact \ref{lem:keg_kmarked_proj}, this implies that
\begin{align*}
\sqrt{3} \bra{-} \proj_{kl} \ket{+} = \bra{\marked} \proj_{kl} \ket{\marked} + \bra{\marked} \proj_{kl} \keg - \bra{\eg} \proj_{kl} \kmarked - \bra{\eg} \proj_{kl} \keg = 0,
\end{align*}
for any subspace $\opW_{kl}$. This proves \myeqref{eq:kp_perp_km}.

To prove \myeqref{eq:kplusj}, we compute
\begin{align*}
\bra{+} \proj_{kl} \ket{+} = \frac{1}{3} \Big[ \bra{\marked} \proj_{kl} \kmarked - \bra{\marked} \proj_{kl} \keg - \bra{\eg} \proj_{kl} \kmarked + \bra{\eg} \proj_{kl} \keg \Big] = \frac{2 \dim(\opW_{kl})}{3N},
\end{align*}
and similarly for \myeqref{eq:kminusj}.

Using the property that $\kplus$ and $\kminus$ are normalized, this implies that \begin{align*}
\| \proj_{00} \kplus \|^2 = 1 - \sum_{kl \neq 00} \frac{2\dim(\opW_{kl})}{3N} = \frac{2(N+2)}{3N},
\end{align*}
which proves \myeqref{eq:kpluszero}. We can similarly compute that $\| \proj_{00} \kminus \|^2 = \frac{4}{N}$, proving \myeqref{eq:kminuszero}.
\end{proof}

\pagebreak
\subsection{Sums}

In this section, we prove asymptotic bounds on a set of sums over the spectra of $\opW$. We assume a square grid, with $\nrows = \ncols = \sqrt{\nsize}$.

\thmsp
\begin{fact} \label{lem:sum0}
Suppose $0 < \alpha < \theta_\kl$ for all $k,l$, and consider the sum
\begin{align}
\sum_{kl \neq 00} \dim(\opW_\kl) \cot \! \bigg(\frac{\theta_\kl - \alpha}{2} \bigg).
\end{align}
\begin{enumerate}
\item If $\alpha \in \Theta(\frac{1}{\sqrt{\nsize}})$, then the sum has order $\Omega(\sqrt{\nsize} \log \nsize)$.

\item If $\alpha \in o(\frac{1}{\sqrt{\nsize}})$, then the sum has order $\Theta(\alpha \nsize \log \nsize)$.
\end{enumerate}
\end{fact}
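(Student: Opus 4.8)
The plan is to reorganise the sum so that the signs $\signk\signl$ carried by the eigenphases disappear, and then compare the result term by term with the classical lattice sum $\sum_{(a,b)\ne(0,0)}(a^2+b^2)^{-1}=\Theta(\log\nsize)$.

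\emph{Step 1: collapsing conjugate pairs.} The subspaces $\opW_\kl$ with $\kl\ne 00$ are exactly those spanned by the eigenvectors of $\opW$ with nonzero eigenphase, and those eigenvectors are precisely $\ket{w^{00}_\kl}$ and $\ket{w^{11}_\kl}$ for $(k,l)\ne(0,0)$ (ranging over $0\le k<\nrows/2$, $0\le l<\ncols/2$), with eigenphases $\theta_\kl$ and $-\theta_\kl$. Regrouping the subspace sum as a sum over eigenvectors,
\[
  \sum_{\kl\ne 00}\dim(\opW_\kl)\cot\!\Big(\tfrac{\theta_\kl-\alpha}{2}\Big)
  \;=\;\sum_{(k,l)\ne(0,0)}\Big[\cot\!\Big(\tfrac{\theta_\kl-\alpha}{2}\Big)+\cot\!\Big(\tfrac{-\theta_\kl-\alpha}{2}\Big)\Big].
\]
Applying $\cot A-\cot B=\sin(B-A)/(\sin A\sin B)$ with $A=\tfrac12(|\theta_\kl|-\alpha)$, $B=\tfrac12(|\theta_\kl|+\alpha)$, together with $\sin A\sin B=\tfrac12(\cos\alpha-\cos|\theta_\kl|)=\sin^2(|\theta_\kl|/2)-\sin^2(\alpha/2)$, each bracketed term simplifies, independently of the sign of $\theta_\kl$, to $\sin\alpha\,\big/\,\big(\sin^2(|\theta_\kl|/2)-\sin^2(\alpha/2)\big)$. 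By \myeqref{eq:theta_def} we have $\sin^2(|\theta_\kl|/2)=p_\kl^2=1-\cos^2\tk\cos^2\tl$, so the sum equals
\[
  S(\alpha)\;:=\;\sum_{(k,l)\ne(0,0)}\frac{\sin\alpha}{\,p_\kl^2-\sin^2(\alpha/2)\,}.
\]
The hypothesis (read as: $\alpha$ positive and below every eigenphase magnitude, i.e.\ $\alpha<\theta_{10}=4\pi/\sqrt{\nsize}$) forces $\sin^2(\alpha/2)<p_\kl^2$ for all $(k,l)\ne(0,0)$, so each term of $S(\alpha)$ is positive.

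\emph{Step 2: size of $p_\kl^2$.} From $p_\kl^2=\sin^2\tk+\cos^2\tk\sin^2\tl=\sin^2\tl+\cos^2\tl\sin^2\tk$ we get $\max(\sin^2\tk,\sin^2\tl)\le p_\kl^2\le\sin^2\tk+\sin^2\tl$, so $p_\kl^2\in\Theta(\sin^2\tk+\sin^2\tl)$. With $\nrows=\ncols=\sqrt{\nsize}$ and $k^{*}:=\min(k,\tfrac{\nrows}{2}-k)$, $l^{*}:=\min(l,\tfrac{\ncols}{2}-l)$, concavity of $\sin$ on $[0,\pi/2]$ gives $\sin^2\tk\in\Theta((k^{*})^2/\nsize)$ and likewise for $l$, hence constants $0<c_1\le c_2$ with
\[
  c_1\,\frac{(k^{*})^2+(l^{*})^2}{\nsize}\;\le\;p_\kl^2\;\le\;c_2\,\frac{(k^{*})^2+(l^{*})^2}{\nsize}.
\]

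\emph{Step 3: comparison with the lattice sum.} Since $p_\kl^2-\sin^2(\alpha/2)\le p_\kl^2$, every term of $S(\alpha)$ is at least $\sin\alpha\cdot\nsize/\big(c_2((k^{*})^2+(l^{*})^2)\big)$; as $(k,l)$ ranges over its indices, each lattice point $(a,b)$ with $0\le a\le\nrows/4$, $0\le b\le\ncols/4$ is hit as $(k^{*},l^{*})$ between one and four times, so, using $\sin\alpha\in\Theta(\alpha)$,
\[
  S(\alpha)\;\ge\;\frac{\sin\alpha\cdot\nsize}{c_2}\sum_{\substack{0\le a\le\nrows/4,\ 0\le b\le\ncols/4\\(a,b)\ne(0,0)}}\frac{1}{a^2+b^2}\;\in\;\Omega\big(\alpha\,\nsize\log\nsize\big),
\]
where $\sum_{1\le a^2+b^2\le M^2}(a^2+b^2)^{-1}=\Theta(\log M)$ (group pairs by $\max(a,b)$) with $M\in\Theta(\sqrt{\nsize})$. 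For $\alpha\in\Theta(1/\sqrt{\nsize})$ the right side is $\Omega(\sqrt{\nsize}\log\nsize)$, which is part~1. For part~2, let $\alpha\in o(1/\sqrt{\nsize})$, so $\sin^2(\alpha/2)\in o(1/\nsize)$; since $p_\kl^2\ge c_1/\nsize$ for all $(k,l)\ne(0,0)$, for large $\nsize$ we have $\sin^2(\alpha/2)\le\tfrac12 p_\kl^2$, hence $p_\kl^2-\sin^2(\alpha/2)\ge\tfrac12 p_\kl^2\ge\tfrac{c_1}{2}\cdot\frac{(k^{*})^2+(l^{*})^2}{\nsize}$, and $S(\alpha)\le\frac{2\sin\alpha\cdot\nsize}{c_1}\sum_{(a,b)\ne(0,0)}(a^2+b^2)^{-1}\in O(\alpha\,\nsize\log\nsize)$. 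With the lower bound this gives $S(\alpha)\in\Theta(\alpha\,\nsize\log\nsize)$.

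\emph{Main obstacle.} The one delicate step is Step~1: although the subspace eigenphases $\theta_\kl=\signk\signl\,\textup{acos}(1-2p_\kl^2)$ carry signs and the sum of \myeqref{eq:phione_sum} literally includes the $(-1)$-eigenspace and the low-dimensional boundary subspaces, pairing each subspace with its conjugate makes the $\cot$ differences collapse, via the product-to-sum identity, into the manifestly positive quantity $S(\alpha)$, in which $\dim(\opW_\kl)$ has vanished. Everything afterwards is the logarithmic divergence of $\sum(a^2+b^2)^{-1}$ out to radius $\Theta(\sqrt{\nsize})$ (the source of the $\log\nsize$ factor) together with the elementary two-sided bound on $p_\kl^2$.
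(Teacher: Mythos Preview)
Your proof is correct and follows essentially the same route as the paper: both regroup the subspace sum as a sum over all $(k,l)\neq(0,0)$, pair each eigenphase with its negative, and apply a trigonometric identity to collapse each pair into a single positive term (the paper writes this term in cotangent form, you in the equivalent sine form $\sin\alpha/(p_\kl^2-\sin^2(\alpha/2))$), then compare with $\sum 1/p_\kl^2\in\Theta(\nsize\log\nsize)$ and treat the two regimes of $\alpha$ by the same denominator estimates. The only real difference is presentational: you spell out the lattice-sum bound $\sum_{(a,b)\ne(0,0)}(a^2+b^2)^{-1}=\Theta(\log\nsize)$ via the substitution $k^*,l^*$, whereas the paper simply asserts the asymptotic for $\sum 1/(1-\cos^2\tk\cos^2\tl)$.
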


\begin{proof}
Instead of taking the sum over the subspaces $\opW_\kl$, which partition the domain of $\opW$, we convert to a sum over $k$ and $l$. Recall that each pair $0 \leq k, l \leq \sqrt{\nsize}/2 - 1$ corresponds to two eigenvectors of $\opW$: $\ket{w_\kl^{00}}$ with eigenphase $\theta_\kl$ and $\ket{w_\kl^{11}}$ with eigenphase $-\theta_\kl$. Using this property, we rewrite the sum as 
\begin{equation}
\begin{aligned}[b]
&\sum_{kl \neq 00} \dim(\opW_\kl) \cot \! \bigg(\frac{\theta_\kl - \alpha}{2} \bigg)\\
&= \sum_{k = 0}^{\frac{\sqrt{\nsize}}{2} - 1} \sum_{\substack{l = 0\\ \text{not both 0}}}^{\frac{\sqrt{\nsize}}{2} - 1}  \cot \! \bigg( \frac{\theta_\kl - \alpha}{2} \bigg) - \cot \! \bigg( \frac{\theta_\kl + \alpha}{2} \bigg) \\
&= 2 \sum_{k = 0}^{\frac{\sqrt{\nsize}}{2} - 1} \sum_{\substack{l = 0\\ \text{not both 0}}}^{\frac{\sqrt{\nsize}}{2} - 1} \frac{\cot(\frac{\alpha}{2}) \Big( \cot^2(\frac{\theta_\kl}{2}) + 1 \Big)}{\cot^2(\frac{\alpha}{2}) - \cot^2(\frac{\theta_\kl}{2})}, \label{eq:double_sum}
\end{aligned}
\end{equation}
where the final equality follows from angle sum identities.

Next, observe that by the definition of $\theta_\kl$, \begin{align*}
\cot^2 \! \bigg(\frac{\theta_\kl}{2} \bigg) + 1 = \frac{2}{1 - \cos \theta_\kl} = \frac{1}{1 - \cos^2 \tk \cos^2 \tl}.
\end{align*}
We therefore consider the sum \begin{align*}
\sum_{k = 0}^{\frac{\sqrt{\nsize}}{2} - 1} \sum_{\substack{l = 0\\ \text{not both 0}}}^{\frac{\sqrt{\nsize}}{2} - 1} \frac{1}{1 - \cos^2 \tk \cos^2 \tl}.
\end{align*}
For the terms where $l = 0$, we get \begin{align*}
\sum_{k = 1}^{\frac{\sqrt{\nsize}}{2} - 1} \frac{1}{1 - \cos^2 \tk} \in \Theta(\nsize),
\end{align*}
and similarly for $k = 0$. The remaining terms satisfy 
\begin{align*}
\sum_{k = 1}^{\frac{\sqrt{\nsize}}{2} - 1} \sum_{l = 1}^{\frac{\sqrt{\nsize}}{2} - 1} \frac{1}{1 - \cos^2 \tk \cos^2 \tl} \in \Theta(\nsize \log \nsize).
\end{align*}

Now, consider \myeqref{eq:double_sum} in the case where $\alpha \in \Theta(\frac{1}{\sqrt{\nsize}})$. We have 
\begin{align*}
& 2 \sum_{k = 0}^{\frac{\sqrt{\nsize}}{2} - 1} \sum_{\substack{l = 0\\ \text{not both 0}}}^{\frac{\sqrt{\nsize}}{2} - 1} \frac{\cot(\frac{\alpha}{2}) \Big( \cot^2(\frac{\theta_\kl}{2}) + 1 \Big)}{\cot^2(\frac{\alpha}{2}) - \cot^2(\frac{\theta_\kl}{2})} \\
&\geq 2 \sum_{k = 0}^{\frac{\sqrt{\nsize}}{2} - 1} \sum_{\substack{l = 0\\ \text{not both 0}}}^{\frac{\sqrt{\nsize}}{2} - 1} \frac{\cot(\frac{\alpha}{2}) \Big( \cot^2(\frac{\theta_\kl}{2}) + 1 \Big)}{\cot^2(\frac{\alpha}{2})} \\
& = \frac{2}{\cot(\frac{\alpha}{2})} \sum_{k = 0}^{\frac{\sqrt{\nsize}}{2} - 1} \sum_{\substack{l = 0\\ \text{not both 0}}}^{\frac{\sqrt{\nsize}}{2} - 1} \frac{1}{1 - \cos^2 \tk \cos^2 \tl}.
\end{align*}
Therefore, \begin{align*}
\sum_{kl \neq 00} \dim(\opW_\kl) \cot \! \bigg(\frac{\theta_\kl - \alpha}{2} \bigg) \in \Omega(\sqrt{\nsize} \log \nsize).
\end{align*}

In the case where $\alpha \in o(\frac{1}{\sqrt{\nsize}})$, the denominator in \myeqref{eq:double_sum} is dominated by the term $\cot^2(\frac{\alpha}{2})$. Therefore, the expression has the same asymptotic order as \begin{align*}
\frac{1}{\cot(\frac{\alpha}{2})} \sum_{k = 0}^{\frac{\sqrt{\nsize}}{2} - 1} \sum_{\substack{l = 0\\ \text{not both 0}}}^{\frac{\sqrt{\nsize}}{2} - 1} \frac{1}{1 - \cos^2 \tk \cos^2 \tl} \in \Theta(\alpha \nsize \log \nsize),
\end{align*}
proving the second clause.
\end{proof}

\thmsp
\begin{fact} \label{lem:sum4}
Suppose $0 < \alpha < \theta_\kl$ for all $k,l$, and that $\alpha \in o(\frac{1}{\sqrt{\nsize}})$. Then
\begin{align}
\sum_{\kl \neq 00} \dim(\opW_\kl) \Bigg[ \cot \! \bigg(\frac{\theta_\kl + \alpha}{2} \bigg) - \cot \! \bigg(\frac{\theta_\kl - \alpha}{2} \bigg) \Bigg]^2 \in \Theta(\alpha^2 \nsize^2).
\end{align}
\end{fact}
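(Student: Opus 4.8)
The plan is to follow the template of the proof of Fact~\ref{lem:sum0}: convert the sum over the invariant subspaces $\opW_\kl$ into an ordinary double sum over $0 \le k, l \le \sqrt{\nsize}/2 - 1$, reduce the summand to a rational function of $p_\kl$ via trigonometric identities, and then estimate the resulting sum directly.

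First I would simplify the summand. Using $\cot A - \cot B = \frac{\sin(B - A)}{\sin A\, \sin B}$ with $A = \frac{\theta_\kl + \alpha}{2}$, $B = \frac{\theta_\kl - \alpha}{2}$, followed by the product-to-sum identity and $\cos \theta_\kl = 1 - 2 p_\kl^2$, one gets
\[ \left[ \cot\!\Big( \tfrac{\theta_\kl + \alpha}{2} \Big) - \cot\!\Big( \tfrac{\theta_\kl - \alpha}{2} \Big) \right]^2 = \frac{\sin^2 \alpha}{\big( \sin\tfrac{\theta_\kl + \alpha}{2}\, \sin\tfrac{\theta_\kl - \alpha}{2} \big)^2} = \frac{\sin^2 \alpha}{\big( p_\kl^2 - \sin^2(\alpha/2) \big)^2}, \]
which is positive by the hypothesis $\alpha < \theta_\kl$ and is manifestly even in $\theta_\kl$. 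Since the subspaces $\opW_\kl$ with $kl \neq 00$ partition the orthogonal complement of the $(+1)$-eigenspace, and each pair $(k,l) \ne (0,0)$ contributes two eigenvectors of $\opW$ with eigenphases $\pm\theta_\kl$, exactly as in Fact~\ref{lem:sum0} this yields
\[ \sum_{\kl \neq 00} \dim(\opW_\kl) \left[ \cot\!\Big( \tfrac{\theta_\kl + \alpha}{2} \Big) - \cot\!\Big( \tfrac{\theta_\kl - \alpha}{2} \Big) \right]^2 = 2 \sin^2 \alpha \sum_{(k,l) \ne (0,0)} \frac{1}{\big( p_\kl^2 - \sin^2(\alpha/2) \big)^2}, \]
where on the right $(k,l)$ ranges over $0 \le k, l \le \sqrt{\nsize}/2 - 1$. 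Since $\sin^2 \alpha = \Theta(\alpha^2)$, it then suffices to prove the remaining double sum is $\Theta(\nsize^2)$.

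The key structural input is that $p_\kl^2 = 1 - \cos^2 \tk \cos^2 \tl$ attains its minimum over $(k,l) \ne (0,0)$ at the neighbours of the origin: checking the cases $k = 0$, $l = 0$, and $k, l \ge 1$ separately (using $\cos^2(\pi - x) = \cos^2 x$ on the relevant index range) gives $\cos^2 \tk \cos^2 \tl \le \cos^2(2\pi/\sqrt{\nsize})$, hence $p_\kl^2 \ge \sin^2(2\pi/\sqrt{\nsize})$. Because $\alpha \in o(1/\sqrt{\nsize})$ we have $\sin^2(\alpha/2) \in o(1/\nsize) = o(\sin^2(2\pi/\sqrt{\nsize}))$, so for $\nsize$ large $\tfrac12 p_\kl^2 \le p_\kl^2 - \sin^2(\alpha/2) \le p_\kl^2$ for every $(k,l) \ne (0,0)$, and the double sum lies within a constant factor of $\sum_{(k,l) \ne (0,0)} p_\kl^{-4}$. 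For the lower bound, keep only the term $(k,l) = (1,0)$: $p_{10}^2 = \sin^2(2\pi/\sqrt{\nsize}) \le 4\pi^2/\nsize$, so $p_{10}^{-4} \ge \nsize^2/(16\pi^4)$, giving $\Omega(\nsize^2)$. For the upper bound, use $1 - \cos^2 \tk \cos^2 \tl \ge \tfrac12(\sin^2 \tk + \sin^2 \tl)$ (which is just $2uv \le u+v$ for $u,v \in [0,1]$) followed by Jordan's inequality $\sin \tk \ge \tfrac{4}{\sqrt{\nsize}} \min(k, \tfrac{\sqrt{\nsize}}{2} - k)$, which yields $p_\kl^{-4} \le \tfrac{\nsize^2}{64}\, (\bar k^2 + \bar l^2)^{-2}$ with $\bar k = \min(k, \tfrac{\sqrt{\nsize}}{2} - k)$; since $(k,l) \mapsto (\bar k, \bar l)$ is at most $4$-to-$1$ and $\sum_{(i,j) \ne (0,0),\, i,j \ge 0} (i^2 + j^2)^{-2}$ converges, this gives $\sum_{(k,l) \ne (0,0)} p_\kl^{-4} = O(\nsize^2)$. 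Combining the two bounds, the double sum is $\Theta(\nsize^2)$ and the whole expression is $\Theta(\alpha^2 \nsize^2)$.

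The main obstacle is the upper bound: the factor $1/p_\kl^4$ looks dangerously singular, and one must verify that it only reaches size $\Theta(\nsize^2)$, and does so only in $O(1)$-sized neighbourhoods of the four corners of the $(k,l)$-grid where $p_\kl^2 = \Theta(1/\nsize)$, so that its contribution is governed by the convergent two-dimensional lattice sum $\sum 1/r^4$ rather than diverging. The reduction $p_\kl^2 \ge \tfrac12(\sin^2 \tk + \sin^2 \tl)$ together with the convergence of that lattice sum is the crux; the trigonometry and the lower bound are routine. This is also exactly the step that needs the hypothesis $\alpha \in o(1/\sqrt{\nsize})$ rather than merely $\alpha < \theta_{10}$: it guarantees $\sin^2(\alpha/2)$ is negligible next to the smallest $p_\kl^2$, so no individual term of the sum can blow up.
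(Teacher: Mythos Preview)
Your proposal is correct and follows essentially the same route as the paper: convert to a double sum over $(k,l)$, reduce the summand to $\sin^2\alpha/(p_\kl^2 - \sin^2(\alpha/2))^2$, use $\alpha \in o(1/\sqrt{\nsize})$ to replace the denominator by $p_\kl^4 = (1-\cos^2\tk\cos^2\tl)^2$, and show the remaining sum is $\Theta(\nsize^2)$. The paper simply asserts the last step (splitting off the axis terms and citing a Taylor expansion), whereas you supply an explicit argument via the inequality $p_\kl^2 \ge \tfrac12(\sin^2\tk+\sin^2\tl)$ and the convergent lattice sum $\sum (i^2+j^2)^{-2}$; this makes your upper bound self-contained rather than left to the reader, but it is the same estimate.
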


\begin{proof}
\begin{align*}
&\sum_{\kl \neq 00} \dim(\opW_\kl) \Bigg[ \cot \! \bigg(\frac{\theta_\kl + \alpha}{2} \bigg) - \cot \! \bigg(\frac{\theta_\kl - \alpha}{2} \bigg) \Bigg]^2 \\
&= \sum_{k = 0}^{\frac{\sqrt{\nsize}}{2} - 1} \sum_{\substack{l = 0\\ \text{not both 0}}}^{\frac{\sqrt{\nsize}}{2} - 1} \Bigg[ \cot \! \bigg(\frac{\theta_\kl + \alpha}{2} \bigg) - \cot \! \bigg(\frac{\theta_\kl - \alpha}{2} \bigg) \Bigg]^2
\end{align*}
By a similar derivation as in Fact~(\ref{lem:sum0}), this sum has the same order as \begin{align*}
\frac{1}{\cot^2(\frac{\alpha}{2})} \sum_{k = 0}^{\frac{\sqrt{\nsize}}{2} - 1} \sum_{\substack{l = 0\\ \text{not both 0}}}^{\frac{\sqrt{\nsize}}{2} - 1} \! \bigg( \frac{1}{1 - \cos^2 \tk \cos^2 \tl}\bigg)^2.
\end{align*}
Using the Taylor expansion of cosine, for $l = 0$ we get \begin{align*}
\sum_{k = 0}^{\frac{\sqrt{\nsize}}{2} - 1} \! \bigg( \frac{1}{1 - \cos^2 \tk} \bigg)^2 \in \Theta(\nsize^2),
\end{align*}
and similarly for $k = 0$.
Finally, 
\begin{align*}
\sum_{k = 1}^{\frac{\sqrt{\nsize}}{2} - 1} \sum_{l = 1}^{\frac{\sqrt{\nsize}}{2} - 1} \! \bigg( \frac{1}{1 - \cos^2 \tk \cos^2 \tl} \bigg)^2 \in \Theta(\nsize^2).
\end{align*}
Therefore, \begin{align*}
\sum_{\kl \neq 00} \dim(\opW_\kl) \Bigg[ \cot \! \bigg(\frac{\theta_\kl + \alpha}{2} \bigg) - \cot \! \bigg(\frac{\theta_\kl - \alpha}{2} \bigg) \Bigg]^2 \in \Theta(\alpha^2 \nsize^2).
\end{align*}
\end{proof}

\end{document}